\documentclass[letterpaper, 11pt]{article}
\usepackage{graphicx}
\usepackage{amsmath}
\usepackage{amsfonts}
\usepackage{hyperref}
\usepackage{amsthm}
 \usepackage{amscd}
\usepackage{mathrsfs}
\usepackage{caption}
\usepackage{subcaption}
\usepackage{float}
\usepackage[top=1in,left=1in,right=1in,bottom=1in]{geometry}

\newtheorem{theorem}{Theorem}[section]

\newtheorem{example}{Example}[section]
\counterwithin{figure}{section}
\hypersetup{
	colorlinks = true,
	linkcolor  = red,
	citecolor  = green,
	filecolor  = cyan,
	urlcolor   = magenta,}

\numberwithin{equation}{section}

\title{Inverse scattering method for an integrable system of derivative nonlinear Schr\"odinger equations}
\author{Mehmet Unlu\\
Department of Mathematics\\
Recep Tayyip Erdogan University\\
53100 Rize, Turkey
}

\date{}

\begin{document}

\maketitle

\begin{abstract}
We present a solution method for the integrable system of nonlinear partial differential equations, known as the DNSL II
system (derivative nonlinear Schr\"odinger II system) or the Chen--Lee--Liu system. This is done by presenting a solution
technique for the inverse scattering problem for the corresponding linear system of ordinary differential equations with
energy-dependent potentials. The relevant inverse scattering problem is solved by establishing a system of linear integral 
equations, which we refer to as the Marchenko system of linear integral equations. In solving the inverse scattering problem 
we use the input data set consisting of a transmission coefficient, a reflection coefficient, and the bound-state information 
presented in the form of a pair of matrix triplets. Using our data set as input to the Marchenko system, we recover the 
potentials from the solution to the Marchenko system. By using the time-evolved input data set, we recover the time-evolved 
potentials, where those potentials form a solution to the integrable DNLS II system. 
\end{abstract}

{\bf {AMS Subject Classification (2020):}} 35Q55, 37K10, 37K15, 37K30, 34A55, 34L25, 34L40, 47A40

{\bf Keywords:} inverse scattering, first-order linear system, Marchenko method, derivative nonlinear Schr\"odinger equations,
Chen--Lee--Liu system

\newpage

\section{Introduction}
\label{section1}

In this paper we present a solution technique for the nonlinear system of partial differential equations
\begin{equation}
\label{1.1}
\begin{cases}
i q_t+q_{xx}-i q q_x r=0,
\\
\noalign{\medskip}
i r_t- r_{xx}-i q r r_x
=0,
\end{cases}
\end{equation}
where $x$ and $t$ are the independent variables taking values on the real axis $\mathbb R,$ the subscripts denote the 
respective partial derivatives, the dependent variables $q$ and $r$ are complex-valued functions of $x$ and $t.$
We refer to $x$ as the spacial coordinate and $t$ as the time coordinate. The nonlinear system \eqref{1.1} is known
\cite{AC1991,APT2003,AS1981,AEU2023a,AEU2023b,CLL1979,K1984,OS1998a,OS1998b,T2010,TW1999a,TW1999b}
as the DNLS II system (derivative nonlinear Schr\"odinger II system) or the Chen--Lee--Liu system. It has
\cite{AC1991,KN1978} important physical applications in propagation of electromagnetic waves in nonlinear media, 
propagation of hydromagnetic waves traveling in a magnetic field, and transmission of ultra short nonlinear pulses in optical fibers.  

The nonlinear system \eqref{1.1} is related to another nonlinear system given by
\begin{equation}
\label{1.2}
\begin{cases}
i \tilde q_t+\tilde q_{xx}-i(\tilde q^2 \tilde r)_x=0,
\\
\noalign{\medskip}
i\tilde r_t-\tilde r_{xx}-i(\tilde q \tilde r^2)_x=0,
\end{cases}
\end{equation}
which is known \cite{AC1991,AS1981,CS1989,F1963,F1967,GGKM1967,KN1978,L1968,L1987,M1986,N1980,N1983,NMPZ1984}
as the DNSL I system or the Kaup--Newell system. We use a tilde to denote the quantities related to \eqref{1.2}. We observe that the linear parts
in \eqref{1.1} and \eqref{1.2} coincide but their nonlinear parts differ from each other. Both nonlinear systems
\eqref{1.1} and \eqref{1.2} are known to be integrable in the sense of the inverse scattering transform 
\cite{A2009,GGKM1967,L1987}. The integrability of \eqref{1.1} is assured \cite{AC1991,AKNS1974,AS1981,NMPZ1984}
by the existence of the corresponding AKNS pair $(\mathcal X,\mathcal T),$ 
where $\mathcal X$ and $\mathcal T$ are the $2\times 2$ matrix-valued functions containing $q$ and $r,$ their partial $x$-derivatives, and the 
spectral parameter denoted by $\zeta.$ Similarly, the integrability of \eqref{1.2} is assured by the existence of the 
corresponding AKNS pair $(\tilde{\mathcal X},\tilde{\mathcal T}),$ where $\tilde{\mathcal X}$ and $\tilde{\mathcal X}$ are
the $2\times 2$ matrix-valued functions containing $\tilde q$ and $\tilde r,$ their partial $x$-derivatives, and the spectral parameter $\zeta.$ 

The matrix $\mathcal X$ appears in the first-order linear system of the differential equations given by
\begin{equation}\label{1.3}
\displaystyle\frac{d}{dx}\begin{bmatrix}
\alpha\\
\noalign{\medskip}
\beta
\end{bmatrix}
=\mathcal X \begin{bmatrix}
\alpha\\
\noalign{\medskip}
\beta
\end{bmatrix}, \qquad x\in\mathbb R,
\end{equation}
where the quantities $\alpha$ and $\beta$ are the two components of the wavefunction
depending on the independent variable $x.$ The independent variable $t$ in \eqref{1.1} appears in \eqref{1.3} as a 
parameter. The AKNS pair matrices $\mathcal X$ and $\mathcal T$ associated with \eqref{1.1} are given by
\begin{equation}
\label{1.4}
\mathcal X=
\begin{bmatrix}
-i\zeta^2& \zeta q
\\
\noalign{\medskip}
\zeta r&i\zeta^2+\displaystyle\frac{i}{2} q r
\end{bmatrix},
\end{equation}
\begin{equation}
\label{1.5}
\mathcal T=\begin{bmatrix}
-2i\zeta^4-i\zeta^2 q r&2\zeta^3 q+\zeta\left(i q_x+\displaystyle\frac{1}{2} q^2 r\right)
\\
\noalign{\medskip}
2\zeta^3r+\zeta\left(-i r_x+\displaystyle\frac{1}{2} q r^2\right)&2i\zeta^4+i\zeta^2 q r +\displaystyle\frac{1}{2}\,(q r_x-q_x r)+\displaystyle\frac{i }{4}q^2 r^2
\end{bmatrix}.
\end{equation}
Using \eqref{1.4} in \eqref{1.3}, we write the corresponding first-order linear system as
\begin{equation}\label{1.6}
\displaystyle\frac{d}{dx}\begin{bmatrix}
\alpha\\
\noalign{\medskip}
\beta
\end{bmatrix}=
\begin{bmatrix}
-i\zeta^2 & \zeta q\\
\noalign{\medskip}
\zeta r & i\zeta^2+\displaystyle\frac{i}{2} \,qr
\end{bmatrix}
\begin{bmatrix}
\alpha\\
\noalign{\medskip}
\beta
\end{bmatrix},\qquad x\in\mathbb R.
\end{equation}
The matrices $\mathcal X$ and $\mathcal T$ appearing in \eqref{1.4} and \eqref{1.5} form the AKNS pair for the nonlinear system \eqref{1.1} 
in the sense that the $2\times 2$ matrix-valued system of equations given by
\begin{equation}\label{1.7}
\mathcal X_t-\mathcal T_x+\mathcal X \mathcal T-\mathcal T \mathcal X=0,
\end{equation}
yields \eqref{1.1}. In other words, by using the matrices $\mathcal X$ and $\mathcal T,$ we evaluate the $2\times 2$ matrix appearing on the 
left-hand side of \eqref{1.7}, where each entry is a polynomial in $\zeta.$ Those polynomials identically vanish for all $x,$ $t,$
and $\zeta$ provided that $q$ and $r$ satisfy the nonlinear system \eqref{1.1}. 

In a similar manner, the $2\times 2$ AKNS pair matrices $\tilde{\mathcal X}$ and $\tilde{\mathcal T}$ associated with the nonlinear system
\eqref{1.2} are given by
\begin{equation*}
\tilde{\mathcal X}=\begin{bmatrix}
-i\zeta^2& \zeta \tilde q\\
\noalign{\medskip}
\zeta \tilde r& i\zeta^2
\end{bmatrix},
\end{equation*}
\begin{equation*}
\tilde{\mathcal T}=\begin{bmatrix}
-2i\zeta^4-i\zeta^2\tilde q\tilde r  
& 2\zeta^3\tilde q+\zeta(i\tilde q_x+\tilde q^2 \tilde r)\\
\noalign{\medskip}
2\zeta^3\tilde r+\zeta(-i\tilde r_x+\tilde q\tilde r^2)& 2i\zeta^4+i\zeta^2\tilde q\tilde r
\end{bmatrix}.
\end{equation*}
The matrix $\tilde{\mathcal X}$ appears in the first-order linear system of differential equations given by
\begin{equation}\label{1.10}
\displaystyle\frac{d}{dx}\begin{bmatrix}
\tilde\alpha\\
\noalign{\medskip}
\tilde\beta
\end{bmatrix}
=\tilde{\mathcal X} \begin{bmatrix}
\tilde\alpha\\
\noalign{\medskip}
\tilde\beta
\end{bmatrix},
\end{equation}
where $x$ is the independent variable, $t$ appears as a parameter, and the dependent variables
$\tilde\alpha$ and $\tilde\beta$ are functions $x$ and they also contain the parameter $t.$ We write the linear system \eqref{1.10} explicitly as
\begin{equation}\label{1.11}
\displaystyle\frac{d}{dx}\begin{bmatrix}
\tilde\alpha\\
\noalign{\medskip}
\tilde\beta
\end{bmatrix}=
\begin{bmatrix}
-i\zeta^2& \zeta \tilde q
\\
\noalign{\medskip}
\zeta \tilde r&i\zeta^2
\end{bmatrix}
\begin{bmatrix}
\tilde\alpha\\
\noalign{\medskip}
\tilde\beta
\end{bmatrix},\qquad x\in\mathbb R.
\end{equation}
Because of the appearance of the spectral parameter $\zeta$ in $\mathcal X$ and $\tilde{\mathcal X},$ the linear systems \eqref{1.6} and
\eqref{1.11} also depend on the spectral parameter $\zeta.$ 
We refer to $q$ and $r$ appearing in \eqref{1.6} as the potentials. Since $q$ and $r$ appear in \eqref{1.6} in the form of
$\zeta q$ and $\zeta r,$ we refer to the potentials in \eqref{1.6} as energy-dependent potentials. This is because the spectral
parameter $\zeta$ appearing in \eqref{1.6} has the physical interpretation as energy.

In our paper we relate the solutions to the linear systems \eqref{1.6} and \eqref{1.11} through the transformation
\begin{equation}
\label{1.12}
\begin{bmatrix}
\alpha\\
\noalign{\medskip}
\beta
\end{bmatrix}=c\begin{bmatrix}
1& 0\\
\noalign{\medskip}
0&E(x)
\end{bmatrix}\begin{bmatrix}
\tilde\alpha\\
\noalign{\medskip}
\tilde\beta
\end{bmatrix},
\end{equation}
where $c$ is an arbitrary constant and $E(x)$ is the complex-valued scalar quantity given as
\begin{equation}\label{1.13}
E(x):=\exp\left(\displaystyle\frac{i}{2}\displaystyle\int_{-\infty}^x dy\, q(y)\,r(y)\right).
\end{equation}
The arbitrary constant $c$ is determined by a spacial asymptotic condition when we use a particular solution.
Since the quantity $E(x)$ defined in \eqref{1.13} is in general complex valued, it
does not necessarily have the unit amplitude.  From \eqref{1.13} it follows that
\begin{equation}
\label{1.14}
\displaystyle\lim_{x\to-\infty}E(x)=1, \quad \displaystyle\lim_{x\to+\infty}E(x)=e^{i\mu/2},
\end{equation}
where we have defined the complex constant $\mu$ as
\begin{equation}
\label{1.15}
\mu:=\int_{-\infty}^\infty dy\,q(y)\,r(y).
\end{equation}

Our goal in this paper is to analyze the direct and inverse scattering problems for the linear system \eqref{1.6}
with the goal of developing a solution method for the nonlinear system \eqref{1.1}. In order to present our method in the 
simplest way, we assume that the potentials $q$ and $r$ in \eqref{1.6} belong to the Schwartz 
class in $x\in\mathbb R$ for each fixed $t.$ We recall that the Schwartz class $\mathcal S(\mathbb R)$ consists of functions 
of $x$ where the derivatives of all orders exist and are continuous and those derivatives vanish as $x\to\pm\infty$ faster than
any negative power of $|x|.$ Although our results hold under weaker conditions on the potentials, we present our ideas in the 
simplest form by assuming that $q$ and $r$ belong to $\mathcal S(\mathbb R).$ Since $t$ appears as a
parameter in the analysis of the direct and inverse scattering problems for \eqref{1.6}, we mostly suppress the
$t$-dependence for the related quantities. For example, we write $q(x)$ and $r(x)$ instead of $q(x,t)$ and $r(x,t)$ in
Sections~\ref{section1}-\ref{section4}. On the other hand, we display the argument $t$ in Sections~\ref{section5} and \ref{section6} for further clarity.

When the potentials $q$ and $r$ belong to the Schwartz class $\mathcal S(\mathbb R),$ we have a scattering scenario 
associated with \eqref{1.6}. In other words, the potentials $q$ and $r$ vanish as $x\to\pm\infty,$ and hence any solution to 
\eqref{1.6} for $\zeta\in\mathbb R$ behave as a linear combination of the column vectors $\begin{bmatrix}
e^{-i\zeta^2 x}\\0\end{bmatrix}$ and $\begin{bmatrix}0\\e^{i\zeta^2 x}\end{bmatrix}.$
As indicated in Section~\ref{section2}, this allows us to use some particular solutions to \eqref{1.6} known as the Jost
solutions. From the spacial asymptotics of the Jost solutions, we introduce the scattering coefficients that are functions of 
the spectral parameter $\zeta.$ For $\zeta\in\mathbb R,$ we know that \eqref{1.6} has only two linearly independent column-vector
solutions. The linear system \eqref{1.6} may also have column-vector solutions that are square integrable in $x\in\mathbb R.$ 
Such solutions are known as bound-state solutions to \eqref{1.6}, and they occur only at certain complex values of the 
spectral parameter $\zeta.$ At each bound-state $\zeta$-value, it is possible to have two or more linearly independent 
solutions to \eqref{1.6}. The number of such linearly independent solutions determines the multiplicity of the bound state at 
that $\zeta$-value. We refer to the bound-state $\zeta$-values, their multiplicities, and the normalization constants associated 
with each of the bound-state multiplicity collectively as the bound-state information.

The direct scattering problem \eqref{1.6} consists of the determination of the scattering coefficients and the bound-state 
information when the potentials $q$ and $r$ are known. On the other hand, the inverse scattering for \eqref{1.6} consists of 
the determination of the potentials $q$ and $r$ when the scattering coefficients and the bound-state information are known. 
When the scattering coefficients and the bound-state information contain the parameter $t$ compatible with the AKNS pair matrix 
$\mathcal T$ appearing in \eqref{1.5}, the solution to the inverse problem for \eqref{1.6} yields the potentials $q$ and $r$ that also 
contain the parameter $t.$ In that case, those time-evolved potentials $q$ and $r$ satisfy the nonlinear partial differential 
equations given in \eqref{1.1}.
 
Our paper is organized as follows. In Section~\ref{section2} we present the four Jost solutions to \eqref{1.6} and briefly
describe their relevant properties. Using the relationship \eqref{1.12}, we relate the Jost solutions to 
\eqref{1.6} to the Jost solutions to the linear system given in \eqref{1.11}. As indicated in Theorem~\ref{theorem2.1}, we 
determine that the potential pair $(q,r)$ appearing in \eqref{1.6} is related to the potential pair $(\tilde q,\tilde r)$ as in 
\eqref{2.8}. This helps us to determine the properties of the Jost solutions and the scattering coefficients for \eqref{1.6} in 
terms of the known properties \cite{AEU2023a,AEU2023b} of the corresponding quantities associated with \eqref{1.11}.
In Section~\ref{section3} we present the relevant properties of the bound-state information for \eqref{1.6} by exploiting
the connection between \eqref{1.6} and \eqref{1.11} established in Section~\ref{section2}. In Section~\ref{section4} we 
introduce a Riemann--Hilbert problem related to the inverse scattering problem for \eqref{1.6}. This is done by relating a 
pair of Jost solutions to \eqref{1.6} to another pair of Jost solutions, where the functional relationship involves the scattering 
coefficients for \eqref{1.6}. With the help of the appropriate properties of the Jost solutions and through the use of a Fourier 
transformation, we derive a Marchenko system \cite{ABDV2010,ADV2007,ADV2010,AE2019,AE2022,AV2006,B2017} of linear
integral equations associated with \eqref{1.6}.
We then show how the potentials $q$ and $r$ in \eqref{1.6} are constructed 
from the solution to the Marchenko system with the input data set consisting of the scattering coefficients and the 
bound-state information for \eqref{1.6}. In fact, we show that it is sufficient to use the right scattering coefficients in our input 
data set as the left scattering coefficients are uniquely determined by the right scattering coefficients.
In Section~\ref{section5} we obtain solutions to the nonlinear system \eqref{1.1} with the help of the solution to the 
Marchenko system with the time-evolved input data set. Finally, in Section~\ref{section6} we illustrate the solution method of 
Section~\ref{section5} with two explicit examples.

\section{The direct scattering problem}
\label{section2}

In this section we study the direct scattering problem for the linear system \eqref{1.6} associated with the Chen--Lee--Liu 
system \eqref{1.1}. We recall that for each fixed $t\in\mathbb R,$ the potentials $q$ and $r$ in
\eqref{1.6} belong to the Schwartz class $\mathcal S(\mathbb R)$ in $x\in\mathbb R.$ As already indicated, we suppress the
dependence on the parameter $t$ in the quantity related to \eqref{1.6} and \eqref{1.11}. 

We first introduce the four Jost solutions, which are the particular column-vector solutions to \eqref{1.6},
denoted by $\psi(\zeta,x),$ $\bar\psi(\zeta,x),$ $\phi(\zeta,x),$ $\bar\phi(\zeta,x),$ respectively. We use the subscripts $1$ 
and $2$ for the first and second components of the Jost solutions, respectively, i.e. we let
\begin{equation} \label{2.1}
\psi(\zeta,x)=\begin{bmatrix}
\psi_1(\zeta,x)\\
\noalign{\medskip}\psi_2(\zeta,x)
\end{bmatrix},\quad \bar\psi(\zeta,x)=\begin{bmatrix}
\bar\psi_1(\zeta,x)\\ \noalign{\medskip}\bar\psi_2(\zeta,x)
\end{bmatrix},
\end{equation}
 \begin{equation} \label{2.2}
\phi(\zeta,x)=\begin{bmatrix}
\phi_1(\zeta,x)\\
\noalign{\medskip}\phi_2(\zeta,x)
\end{bmatrix},\quad \bar\phi(\zeta,x)=\begin{bmatrix}
\bar\phi_1(\zeta,x)\\ \noalign{\medskip}\bar\phi_2(\zeta,x)
\end{bmatrix}.
\end{equation}
The Jost solutions to \eqref{1.6} are those solutions that satisfy the respective spacial asymptotics
\begin{equation}\label{2.3}
\begin{bmatrix}
\psi_1(\zeta,x)\\
\noalign{\medskip}\psi_2(\zeta,x)
\end{bmatrix}=\begin{bmatrix}
o(1)\\
\noalign{\medskip}
 e^{i\zeta^2x}\left[1+o(1)\right]
\end{bmatrix} ,\qquad  x\to+\infty,
\end{equation}
\begin{equation}\label{2.4}
\begin{bmatrix}
\bar\psi_1(\zeta,x)\\ \noalign{\medskip}\bar\psi_2(\zeta,x)
\end{bmatrix}=\begin{bmatrix}
e^{-i\zeta^2x}\left[1+o(1)\right]\\
\noalign{\medskip}
o(1)
\end{bmatrix} ,\qquad  x\to+\infty,
\end{equation}
\begin{equation}\label{2.5}
\begin{bmatrix}
\phi_1(\zeta,x)\\
\noalign{\medskip}\phi_2(\zeta,x)
\end{bmatrix}=\begin{bmatrix}
e^{-i\zeta^2x}\left[1+o(1)\right]\\
\noalign{\medskip}
o(1)
\end{bmatrix} ,\qquad   x\to-\infty,
\end{equation}
\begin{equation}\label{2.6}
\begin{bmatrix}
\bar\phi_1(\zeta,x)\\ \noalign{\medskip}\bar\phi_2(\zeta,x)
\end{bmatrix}=\begin{bmatrix}
o(1)\\
\noalign{\medskip}
e^{i\zeta^2x}\left[1+o(1)\right]
\end{bmatrix} ,\qquad  x\to-\infty.
\end{equation}
We remark that the overbar in our notation does not denote complex conjugation.

We introduce the auxiliary spectral parameter $\lambda$ in terms of the spectral parameter $\zeta$ as
\begin{equation}\label{2.7}
\lambda=\zeta^2, \quad \zeta=\sqrt{\lambda},
\end{equation} 
where the square root denotes the principal part of the complex-valued square root function. 
We use $\mathbb C$ to denote the complex plane, $\mathbb C^+$ for the upper-half complex plane, $\mathbb C^-$ for the lower-half
complex plane, and we let $\overline{\mathbb C^+}:=\mathbb C^+\cup\mathbb R$ and
$\overline{\mathbb C^-}:=\mathbb C^-\cup\mathbb R.$

In the next theorem, we describe the relevant properties of the Jost solutions to \eqref{1.6} concerning their existence and their domains of 
analyticity and continuity in $\zeta.$ Since we analyze those properties for each fixed $t\in\mathbb R,$ we do not separately mention that 
the properties hold for each fixed $t\in\mathbb R.$

\begin{theorem}
\label{theorem2.1}
Assume that the potentials $q$ and $r$ in \eqref{1.6} belong to the Schwartz class $\mathcal S(\mathbb R)$ in $x\in\mathbb R.$ Let the spectral parameter $\zeta$ is related to the parameter $\lambda$ as in \eqref{2.7}. Then, we have the following:

\begin{enumerate}

\item[\text{\rm(a)}] The Jost solutions $\psi(\zeta,x),$ and $\phi(\zeta,x)$
to \eqref{1.6} exist, are analytic in the first and third quadrants in the complex $\zeta$-plane,
and are continuous in the closures of those quadrants for each fixed $x\in\mathbb R.$ Similarly, the Jost solutions
$\bar\psi(\zeta,x)$ and $\bar\phi(\zeta,x)$ to \eqref{1.6} exist, are analytic in the 
second and fourth quadrants in the complex $\zeta$-plane, and are continuous in the closures of those quadrants for each fixed $x\in\mathbb R.$ 

\item[\text{\rm(b)}]  The four Jost solution components $\psi_1(\zeta,x),$ $\bar\psi_2(\zeta,x),$ $\bar\phi_1(\zeta,x),$ and
$\phi_2(\zeta,x)$ are odd in $\zeta$ whereas the other four Jost solution components 
$\bar\psi_1(\zeta,x),$ $\psi_2(\zeta,x),$ $\phi_1(\zeta,x),$ and  $\bar\phi_2(\zeta,x)$ are even in $\zeta.$ 
Furthermore, for each fixed $x\in\mathbb R,$ the four scalar functions $\psi_1(\zeta,x)/\zeta,$ $\psi_2(\zeta,x),$ 
$\phi_1(\zeta,x),$ $\phi_2(\zeta,x)/\zeta$ are even in 
$\zeta,$ and hence they are functions of $\lambda.$ As functions of $\lambda,$ those four scalar quantities are analytic in $\lambda \in\mathbb C^+$ and continuous in 
$\lambda\in\overline{\mathbb C^+}.$ Similarly, for each fixed $x\in\mathbb R,$ the four scalar functions 
$\bar\psi_1(\zeta,x),$ $\bar\psi_2(\zeta,x)/\zeta,$ $\bar\phi_1(\zeta,x)/\zeta,$ 
$\bar\phi_2(\zeta,x)$ are even in $\zeta,$ and hence they are functions of $\lambda.$ As functions of $\lambda,$ those four scalar quantities are analytic in 
$\lambda \in\mathbb C^-$ and continuous in $\lambda\in\overline{\mathbb C^-}.$

\end{enumerate}
\end{theorem}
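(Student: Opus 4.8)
The plan is to deduce every assertion from the corresponding, already established, properties of the Jost solutions to the Kaup--Newell system \eqref{1.11}, transported through the gauge relation \eqref{1.12}. First I would substitute $\alpha=c\,\tilde\alpha$ and $\beta=c\,E\,\tilde\beta$ into \eqref{1.6} and use $E'=\frac{i}{2}\,q\,r\,E$, which is immediate from \eqref{1.13}. Matching the resulting first-order system for $(\tilde\alpha,\tilde\beta)$ against \eqref{1.11} forces the identification
\[
\tilde q=q\,E,\qquad \tilde r=\frac{r}{E},
\]
and in particular $\tilde q\,\tilde r=q\,r$, which is exactly the compatibility required by the definition \eqref{1.13} of $E$. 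Since $E$ satisfies $E'=\frac{i}{2}\,q\,r\,E$ with $q\,r\in\mathcal S(\mathbb R)$ while $E$ and $E^{-1}$ are bounded by \eqref{1.14}, an induction on the order of differentiation shows that $E^{\pm1}$ are smooth, bounded, and have all derivatives bounded; hence $(q,r)\in\mathcal S(\mathbb R)$ implies $(\tilde q,\tilde r)\in\mathcal S(\mathbb R)$, and the map $(\alpha,\beta)\mapsto(\tilde\alpha,\tilde\beta)$ is invertible because $E(x)\neq0$.

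Next I would invoke the known theory of \eqref{1.11} with Schwartz potentials from \cite{AEU2023a,AEU2023b}: the four Jost solutions $\tilde\psi,\tilde{\bar\psi},\tilde\phi,\tilde{\bar\phi}$, normalized as in the analogues of \eqref{2.3}--\eqref{2.6}, exist and possess precisely the analyticity, continuity and parity features claimed in parts (a) and (b), with the same assignment of quadrants in $\zeta$ and half-planes in $\lambda=\zeta^2$. I then define candidate Jost solutions for \eqref{1.6} by $\psi:=c_\psi\,\mathrm{diag}(1,E)\,\tilde\psi$, $\bar\psi:=c_{\bar\psi}\,\mathrm{diag}(1,E)\,\tilde{\bar\psi}$, and analogously for $\phi$ and $\bar\phi$; by the first paragraph each of these solves \eqref{1.6}. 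The constants are fixed by matching the prescribed spatial asymptotics \eqref{2.3}--\eqref{2.6} with the help of \eqref{1.14}: the normalizations of $\bar\psi$ and $\phi$ sit in the first component, which $E$ leaves untouched, so $c_{\bar\psi}=c_\phi=1$; that of $\bar\phi$ sits in the second component as $x\to-\infty$, where $E\to1$, so $c_{\bar\phi}=1$; and that of $\psi$ sits in the second component as $x\to+\infty$, where $E\to e^{i\mu/2}$, so $c_\psi=e^{-i\mu/2}$. Because $\mathrm{diag}(1,E(x))$ and all four constants are independent of $\zeta$, the solutions so constructed inherit from $\tilde\psi,\tilde\phi$ (resp. $\tilde{\bar\psi},\tilde{\bar\phi}$) the analyticity in the first and third (resp. second and fourth) quadrants and the continuity on their closures, which is part (a).

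For part (b) I would establish the parity in $\zeta$ directly, since it is short: writing $\sigma_3:=\mathrm{diag}(1,-1)$, one reads off from \eqref{1.4} that the coefficient matrix in \eqref{1.6} satisfies $\mathcal X(-\zeta,x)=\sigma_3\,\mathcal X(\zeta,x)\,\sigma_3$, so if $v(\zeta,x)$ solves \eqref{1.6} then $\sigma_3\,v(\zeta,x)$ solves the same system with $\zeta$ replaced by $-\zeta$. Comparing this with the uniquely determined Jost asymptotics \eqref{2.3}--\eqref{2.6}, whose exponentials $e^{\mp i\zeta^2x}$ are even in $\zeta$, gives $\psi(-\zeta,x)=-\sigma_3\,\psi(\zeta,x)$, $\bar\psi(-\zeta,x)=\sigma_3\,\bar\psi(\zeta,x)$, and the analogous identities for $\phi$ and $\bar\phi$; this is precisely the stated list of odd and even components. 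In particular $\psi_1/\zeta,\psi_2,\phi_1,\phi_2/\zeta$ are even in $\zeta$. Finally, the map $\zeta\mapsto\lambda=\zeta^2$ sends each of the open first and third quadrants biholomorphically onto $\mathbb C^+$ and their closures onto $\overline{\mathbb C^+}$; evenness makes the values coming from the two quadrants agree, so each of these four quantities is a well-defined function of $\lambda$, analytic in $\mathbb C^+$ (as the composition with the holomorphic branch $\zeta=\sqrt\lambda$ for $\lambda\neq0$) and continuous on $\overline{\mathbb C^+}$; the four quantities attached to $\bar\psi,\bar\phi$ are treated identically with the second and fourth quadrants and $\mathbb C^-$.

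The routine-but-nontrivial points are two: verifying that multiplication by $E^{\pm1}$ preserves $\mathcal S(\mathbb R)$, which hinges on the uniform bound on all derivatives of $E^{\pm1}$ noted above; and the behaviour at the corner $\zeta=0$, equivalently $\lambda=0$, where the quadrant geometry degenerates and the continuity of $\psi_1(\zeta,x)/\zeta$ and $\phi_2(\zeta,x)/\zeta$ up to $\lambda=0$ is the one place where one really uses the corresponding statement for \eqref{1.11} rather than a soft argument. All remaining work is bookkeeping with the transformation \eqref{1.12}.
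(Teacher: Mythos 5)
Your proposal is correct, but it follows a genuinely different route from the paper. The paper's proof of Theorem~\ref{theorem2.1} consists of a single sentence: repeat, for the system \eqref{1.6}, the direct analysis of Theorem~2.2 of \cite{AEU2023a} (i.e.\ set up the Volterra-type integral equations for the suitably normalized Jost functions of \eqref{1.6} itself and read off existence, analyticity, continuity, and parity). You instead \emph{transport} the already-known properties of the Jost solutions of the Kaup--Newell system \eqref{1.11} through the gauge relation \eqref{1.12}: you first prove what the paper records later as Theorem~\ref{theorem2.2} (the potential relation $\tilde q=qE$, $\tilde r=r/E$ and the preservation of $\mathcal S(\mathbb R)$, which hinges on $E^{\pm1}$ being bounded with bounded derivatives), then construct the Jost solutions of \eqref{1.6} as $\mathrm{diag}(1,E)$ times those of \eqref{1.11} with the four normalizing constants fixed by \eqref{1.14} and \eqref{2.3}--\eqref{2.6} --- exactly the constants the paper obtains in Theorem~\ref{theorem2.3} --- and observe that, since $\mathrm{diag}(1,E(x))$ and the constants are $\zeta$-independent, all $\zeta$-regularity transfers. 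This reordering is not circular (Theorems~\ref{theorem2.2} and \ref{theorem2.3} do not rely on Theorem~\ref{theorem2.1} in your arrangement), and it simultaneously yields existence. Your parity argument via $\mathcal X(-\zeta,x)=\sigma_3\mathcal X(\zeta,x)\sigma_3$ and uniqueness of the Jost asymptotics is a clean, self-contained replacement for the citation, and the sign bookkeeping ($\psi(-\zeta,x)=-\sigma_3\psi(\zeta,x)$, $\bar\psi(-\zeta,x)=\sigma_3\bar\psi(\zeta,x)$, etc.) reproduces exactly the odd/even lists in part (b). What the paper's route buys is a proof for \eqref{1.6} that does not presuppose the Schwartz-class preservation under the gauge map; what your route buys is economy (everything reduces to the cited \eqref{1.11} theory plus the elementary facts about $E$) and consistency with how the paper itself handles all subsequent direct-scattering statements. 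You correctly flag the only delicate point, the corner $\lambda=0$, where continuity of $\psi_1/\zeta$ and $\phi_2/\zeta$ must indeed be inherited from the corresponding statement for \eqref{1.11} rather than from the quadrant-to-half-plane geometry alone.
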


\begin{proof}
The proof is obtained by proceeding as in Theorem 2.2 of \cite{AEU2023a}, where the corresponding results are obtained for 
the linear system \eqref{1.11}.
\end{proof}

In the next theorem, we present the relationship between the potential pair $(q,r)$ in \eqref{1.6} and the potential pair
$(\tilde q,\tilde r)$ in \eqref{1.11}, when the solutions to \eqref{1.6} and to \eqref{1.11} are related to each other as
in \eqref{1.12}.

\begin{theorem}
\label{theorem2.2}

Assume that the potentials $q$ and $r$ appearing in \eqref{1.6} belong to the Schwartz class $\mathcal S(\mathbb R)$
in $x\in\mathbb R.$ Further assume that solutions to \eqref{1.11} are related to the solutions to \eqref{1.6} as in \eqref{1.12}.
Then, the potential pair $(q,r)$ is related to the potential pair $(\tilde q, \tilde r)$ as
\begin{equation}\label{2.8}
q(x)=E(x)^{-1}\tilde q(x),\quad  r(x)=E(x)\,\tilde r(x),
\end{equation}
where $E(x)$ is the quantity defined in \eqref{1.13}. Consequently, the potential pair $(\tilde q,\tilde r)$ also belongs to
the Schwartz class $\mathcal S(\mathbb R).$ 
\end{theorem}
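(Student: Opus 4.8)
The plan is to substitute the transformation \eqref{1.12} into the linear system \eqref{1.6} and compare the result with the linear system \eqref{1.11}, reading off the relation \eqref{2.8} from the matched coefficients. Concretely, write $\begin{bmatrix}\alpha\\\beta\end{bmatrix}=c\,D(x)\begin{bmatrix}\tilde\alpha\\\tilde\beta\end{bmatrix}$ with $D(x)=\operatorname{diag}(1,E(x))$, where from \eqref{1.13} we have $E'(x)=\tfrac{i}{2}q(x)r(x)E(x)$. Differentiating and using \eqref{1.6} on the left-hand side gives $c\,D'\begin{bmatrix}\tilde\alpha\\\tilde\beta\end{bmatrix}+c\,D\tfrac{d}{dx}\begin{bmatrix}\tilde\alpha\\\tilde\beta\end{bmatrix}=\mathcal X\,c\,D\begin{bmatrix}\tilde\alpha\\\tilde\beta\end{bmatrix}$, hence $\tfrac{d}{dx}\begin{bmatrix}\tilde\alpha\\\tilde\beta\end{bmatrix}=\big(D^{-1}\mathcal X D-D^{-1}D'\big)\begin{bmatrix}\tilde\alpha\\\tilde\beta\end{bmatrix}$. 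Since this must equal $\tilde{\mathcal X}$ for every solution, I get the matrix identity $D^{-1}\mathcal X D-D^{-1}D'=\tilde{\mathcal X}$.

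The next step is simply to compute the left-hand side entrywise. With $\mathcal X$ as in \eqref{1.4} and $D=\operatorname{diag}(1,E)$, conjugation by $D$ leaves the diagonal entries of $\mathcal X$ unchanged, multiplies the $(1,2)$-entry by $E^{-1}$ and the $(2,1)$-entry by $E$; subtracting $D^{-1}D'=\operatorname{diag}(0,E^{-1}E')=\operatorname{diag}(0,\tfrac{i}{2}qr)$ removes exactly the $\tfrac{i}{2}qr$ term from the $(2,2)$-entry. Thus $D^{-1}\mathcal X D-D^{-1}D'=\begin{bmatrix}-i\zeta^2 & \zeta E^{-1}q\\ \zeta E r & i\zeta^2\end{bmatrix}$, which is precisely $\tilde{\mathcal X}$ provided $\tilde q=Eq$ and $\tilde r=E^{-1}r$, i.e.\ $q=E^{-1}\tilde q$ and $r=E\tilde r$, which is \eqref{2.8}.

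Finally, for the Schwartz-class assertion: $E(x)$ is smooth since $q,r$ are, and from \eqref{1.14} it is bounded (in fact it tends to the finite limits $1$ and $e^{i\mu/2}$), and $E(x)$ is bounded away from zero because $E(x)^{-1}=\exp\!\big(-\tfrac{i}{2}\int_{-\infty}^x qr\big)$ is likewise bounded; moreover every derivative of $E^{\pm1}$ is, by the product rule, a polynomial in $q,r$ and their derivatives times $E^{\pm1}$, hence bounded and rapidly decaying wherever $q,r$ are. Therefore $\tilde q=Eq$ and $\tilde r=E^{-1}r$ lie in $\mathcal S(\mathbb R)$. The only genuinely delicate point is the justification that the matrix identity $D^{-1}\mathcal X D-D^{-1}D'=\tilde{\mathcal X}$ forced by \eqref{1.12} on solutions is in fact the correct reading of \eqref{1.12}; this follows because \eqref{1.6} has a two-dimensional solution space and $D$ is invertible, so \eqref{1.12} maps the solution space of \eqref{1.6} bijectively onto that of \eqref{1.11}, and two first-order linear systems with the same (two-dimensional) solution space on $\mathbb R$ have the same coefficient matrix. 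Everything else is a direct computation.
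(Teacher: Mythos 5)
Your proof is correct in substance and follows essentially the same route as the paper: differentiate the gauge transformation \eqref{1.12}, substitute the two linear systems \eqref{1.6} and \eqref{1.11}, and match the coefficient matrices, which is exactly the computation leading to \eqref{2.12} in the paper. Your packaging of this as the identity $\tilde{\mathcal X}=D^{-1}\mathcal X D-D^{-1}D'$ is tidier, and your remark that the two-dimensional solution space together with the invertibility of $D$ forces the matrix identity makes explicit a point the paper passes over with ``since $c[\tilde\alpha;\tilde\beta]$ is an arbitrary solution.'' The Schwartz-class argument is also fine.

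One computational slip should be corrected: your description of the conjugation is transposed. With $D=\operatorname{diag}(1,E)$ one has $(D^{-1}\mathcal X D)_{12}=\mathcal X_{12}\,E=\zeta q E$ and $(D^{-1}\mathcal X D)_{21}=E^{-1}\mathcal X_{21}=\zeta E^{-1}r$, i.e.\ the $(1,2)$-entry is multiplied by $E$ and the $(2,1)$-entry by $E^{-1}$, the opposite of what you wrote, so your displayed matrix $\left[\begin{smallmatrix}-i\zeta^2 & \zeta E^{-1}q\\ \zeta E r & i\zeta^2\end{smallmatrix}\right]$ is wrong and, as it stands, inconsistent with the identification $\tilde q=Eq,$ $\tilde r=E^{-1}r$ that you then draw from it. With the corrected entries the matching against $\tilde{\mathcal X}$ does give $\tilde q=Eq$ and $\tilde r=E^{-1}r$, which is precisely \eqref{2.8}, so the conclusion stands once the two off-diagonal entries are fixed.
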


\begin{proof}Taking the $x$-derivative on both sides in \eqref{1.12}, we obtain
\begin{equation}\label{2.9}
\begin{bmatrix}\alpha\\
\noalign{\medskip}
\beta\end{bmatrix}'
=c\begin{bmatrix}
0&0\\
\noalign{\medskip}
0&E'(x)\end{bmatrix}
\begin{bmatrix}\tilde\alpha\\ 
\noalign{\medskip}
\tilde\beta\end{bmatrix}
+c\begin{bmatrix}
1&0\\
\noalign{\medskip}
0&E(x)
\end{bmatrix}
\begin{bmatrix}\tilde\alpha\\ 
\noalign{\medskip}
\tilde\beta\end{bmatrix}',
\end{equation}
where we recall that $c$ is a constant independent of $x$ but may depend on the spectral parameter $\zeta.$ 
Using \eqref{1.6} on the left-hand side of \eqref{2.9} and using \eqref{1.11} on the right-hand side, we obtain
\begin{equation}
\label{2.10}
\begin{bmatrix}
-i\zeta^2 & \zeta \,q(x)\\
\noalign{\medskip}
\zeta\, r(x) & i\zeta^2+\displaystyle\frac{i}{2}\,q(x) r(x)
\end{bmatrix}
\begin{bmatrix}
\alpha\\
\noalign{\medskip}
\beta
\end{bmatrix}=c\begin{bmatrix}
0&0\\
\noalign{\medskip}
0&E'(x)\end{bmatrix}
\begin{bmatrix}\tilde\alpha\\ 
\noalign{\medskip}
\tilde\beta\end{bmatrix}
+c\begin{bmatrix}
1&0\\
\noalign{\medskip}
0&E(x)
\end{bmatrix}
\begin{bmatrix}
-i\zeta^2& \zeta\, \tilde q(x)
\\
\noalign{\medskip}
\zeta\, \tilde r(x)&i\zeta^2
\end{bmatrix}
\begin{bmatrix}
\tilde\alpha\\
\noalign{\medskip}
\tilde\beta
\end{bmatrix}.
\end{equation}
Next, using \eqref{1.12} on the left-hand side of \eqref{2.10}, we get
\begin{equation}\label{2.11}
\begin{split}
c\begin{bmatrix}
-i\zeta^2 & \zeta \,q(x)\\
\noalign{\medskip}
\zeta\, r(x) & i\zeta^2+\displaystyle\frac{i}{2}\,q(x) r(x)
\end{bmatrix}
\begin{bmatrix}
1& 0\\
\noalign{\medskip}
0&E(x)
\end{bmatrix}\begin{bmatrix}
\tilde\alpha\\
\noalign{\medskip}
\tilde\beta
\end{bmatrix}=&\,c\begin{bmatrix}
0&0\\
\noalign{\medskip}
0&E'(x)\end{bmatrix}
\begin{bmatrix}\tilde\alpha\\ 
\noalign{\medskip}
\tilde\beta\end{bmatrix}\\
&+c
\begin{bmatrix}
-i\zeta^2& \zeta\, \tilde q(x)
\\
\noalign{\medskip}
\zeta \tilde r(x)E(x)&i\zeta^2E(x)
\end{bmatrix}
\begin{bmatrix}
\tilde\alpha\\
\noalign{\medskip}
\tilde\beta
\end{bmatrix}.
\end{split}
\end{equation}
Since $c\begin{bmatrix}
\tilde\alpha\\
\noalign{\medskip}
\tilde\beta
\end{bmatrix}$ is an arbitrary solution to \eqref{1.11}, from \eqref{2.11} we get 
\begin{equation}\label{2.12}
\begin{bmatrix}
-i\zeta^2 & \zeta q(x)E(x)\\
\noalign{\medskip}
\zeta\, r(x) & i\zeta^2 E(x)+\displaystyle\frac{i}{2}\, q(x) r(x)E(x)
\end{bmatrix}
=
\begin{bmatrix}
-i\zeta^2& \zeta\, \tilde q(x)
\\
\noalign{\medskip}
\zeta\tilde r(x)E(x)&E'(x)+i\zeta^2 E(x)
\end{bmatrix}.
\end{equation}
By taking the $x$-derivative of both sides of \eqref{1.13} we have
\begin{equation}\label{2.13}
E'(x)=\displaystyle\frac{i}{2}\,q(x) r(x)E(x).
\end{equation}
Comparing the individual entries of the matrix equality in \eqref{2.12}, with the help of \eqref{2.13}, we observe that \eqref{2.8}
holds. With the help of \eqref{1.13} we confirm that the potentials $\tilde q$ and $\tilde r$ belong to the Schwartz class
$\mathcal S(\mathbb R)$ when the potentials $q$ and $r$ belong to $\mathcal S(\mathbb R).$ 
\end{proof}

As indicated in Theorem~\ref{theorem2.2}, the potential pair $(\tilde q,\tilde r)$ appearing in \eqref{1.11} belongs to the Schwartz
class $\mathcal S(\mathbb R)$ when the potential pair $(q,r)$ appearing in \eqref{1.6} belongs to $\mathcal S(\mathbb R)$.
We introduce the Jost solutions to \eqref{1.11} in the same manner the Jost solutions to \eqref{1.6} are introduced. For the Jost 
solutions associated with \eqref{1.11}, we use the notation $\tilde \psi(\zeta,x),$ $\tilde{\bar\psi}(\zeta,x),$ $\tilde\phi(\zeta,x),$ and
$\tilde{\bar\phi}(\zeta,x)$ by requiring that those solutions satisfy the respective spacial asymptotics
\begin{equation}
\label{2.14}
\begin{bmatrix}
\tilde\psi_1(\zeta,x)\\
\noalign{\medskip}\tilde\psi_2(\zeta,x)
\end{bmatrix}=\begin{bmatrix}
o(1)\\
\noalign{\medskip}
 e^{i\zeta^2x}\left[1+o(1)\right]
\end{bmatrix} ,\qquad  x\to+\infty,
\end{equation}
\begin{equation}
\label{2.15}
\begin{bmatrix}
\tilde{\bar\psi}_1(\zeta,x)\\ \noalign{\medskip}\tilde{\bar\psi}_2(\zeta,x)
\end{bmatrix}=\begin{bmatrix}
e^{-i\zeta^2x}\left[1+o(1)\right]\\
\noalign{\medskip}
o(1)
\end{bmatrix} ,\qquad  x\to+\infty,
\end{equation}
\begin{equation}
\label{2.16}
\begin{bmatrix}
\tilde\phi_1(\zeta,x)\\
\noalign{\medskip}\tilde\phi_2(\zeta,x)
\end{bmatrix}=\begin{bmatrix}
e^{-i\zeta^2x}\left[1+o(1)\right]\\
\noalign{\medskip}
o(1)
\end{bmatrix} ,\qquad   x\to-\infty,
\end{equation}
\begin{equation}
\label{2.17}
\begin{bmatrix}
\tilde{\bar\phi}_1(\zeta,x)\\ \noalign{\medskip}\tilde{\bar\phi}_2(\zeta,x)
\end{bmatrix}=\begin{bmatrix}
o(1)\\
\noalign{\medskip}
e^{i\zeta^2x}\left[1+o(1)\right]
\end{bmatrix} ,\qquad  x\to-\infty,
\end{equation}
which are the analogs of \eqref{2.3}--\eqref{2.6}, respectively.
We refer the reader to \cite{AEU2023a, AEU2023b} where it is shown that the Jost solutions to \eqref{1.11} have the same
properties listed in Theorem~\ref{theorem2.1} and satisfied by the Jost solutions to \eqref{1.6}.

In the following theorem, we show the connection between the Jost solutions to \eqref{1.6} and
the Jost solutions to \eqref{1.11} 

\begin{theorem}
\label{theorem2.3}
Let the potential pair $(q,r)$ in \eqref{1.6} and the potential pair $(\tilde q,\tilde r)$ in \eqref{1.11} are related to each other as in
\eqref{2.8}, where $E(x)$ and $\mu$ are the quantities appearing in \eqref{1.13} and \eqref{1.15}, respectively.
Assume that the potential pair $(q,r)$ belongs to the Schwartz class $\mathcal S(\mathbb R).$ Then, we have the following:

\begin{enumerate}

\item[\text{\rm(a)}]
The Jost solution $\psi(\zeta,x)$ 
to \eqref{1.6} is related to the Jost solution $\tilde \psi(\zeta,x)$ to \eqref{1.11} as
\begin{equation}\label{2.18}
\begin{bmatrix}
\psi_1(\zeta,x)\\
\noalign{\medskip}\psi_2(\zeta,x)
\end{bmatrix}=e^{-i \mu/2}\begin{bmatrix}
1& 0\\
\noalign{\medskip}
0&E(x)
\end{bmatrix}\begin{bmatrix}
\tilde\psi_1(\zeta,x)\\
\noalign{\medskip}\tilde\psi_2(\zeta,x)
\end{bmatrix},
\end{equation}
where $\tilde\psi_1(\zeta,x)$ and $\tilde\psi_2(\zeta,x)$ denote the first and second components of the Jost solution
$\tilde\psi(\zeta,x)$ in a manner similar to the first equality of \eqref{2.1}.

\item[\text{\rm(b)}]
The Jost solution $\bar\psi(\zeta,x)$ 
to \eqref{1.6} is related to the Jost solution $\tilde{\bar\psi}(\zeta,x)$ to \eqref{1.11} as
\begin{equation}\label{2.19}
\begin{bmatrix}
\bar\psi_1(\zeta,x)\\ \noalign{\medskip}\bar\psi_2(\zeta,x)
\end{bmatrix}=\begin{bmatrix}
1& 0\\
\noalign{\medskip}
0&E(x)
\end{bmatrix}\begin{bmatrix}
\tilde{\bar\psi}_1(\zeta,x)\\ \noalign{\medskip}\tilde{\bar\psi}_2(\zeta,x)
\end{bmatrix},
\end{equation}
where $\tilde{\bar\psi}_1(\zeta,x)$ and $\tilde{\bar\psi}_2(\zeta,x)$ denote the first and second components of the Jost solution
$\tilde{\bar\psi}(\zeta,x)$ in a manner similar to the second equality of \eqref{2.1}.

\item[\text{\rm(c)}]
The Jost solution $\phi(\zeta,x)$ 
to \eqref{1.6} is related to the Jost solution $\tilde \phi(\zeta,x)$ to \eqref{1.11} as
\begin{equation}\label{2.20}
\begin{bmatrix}
\phi_1(\zeta,x)\\
\noalign{\medskip}\phi_2(\zeta,x)
\end{bmatrix}=\begin{bmatrix}
1& 0\\
\noalign{\medskip}
0&E(x)
\end{bmatrix}\begin{bmatrix}
\tilde\phi_1(\zeta,x)\\
\noalign{\medskip}\tilde\phi_2(\zeta,x)
\end{bmatrix},
\end{equation}
where $\tilde\phi_1(\zeta,x)$ and $\tilde\phi_2(\zeta,x)$ denote the first and second components of the Jost solution
$\tilde\phi(\zeta,x)$ in a manner similar to the first equality of \eqref{2.2}.

\item[\text{\rm(d)}]
The Jost solution $\bar\phi(\zeta,x)$ 
to \eqref{1.6} is related to the Jost solution $\tilde{\bar\phi}(\zeta,x)$ to \eqref{1.11} as
\begin{equation}\label{2.21}
\begin{bmatrix}
\bar\phi_1(\zeta,x)\\ \noalign{\medskip}\bar\phi_2(\zeta,x)
\end{bmatrix}=\begin{bmatrix}
1& 0\\
\noalign{\medskip}
0&E(x)
\end{bmatrix}\begin{bmatrix}
\tilde{\bar\phi}_1(\zeta,x)\\ \noalign{\medskip}\tilde{\bar\phi}_2(\zeta,x)
\end{bmatrix},
\end{equation}
where $\tilde{\bar\phi}_1(\zeta,x)$ and $\tilde{\bar\phi}_2(\zeta,x)$ denote the first and second components of the Jost solution
$\tilde{\bar\phi}(\zeta,x)$ in a manner similar to the second equality of \eqref{2.2}.
\end{enumerate}
\end{theorem}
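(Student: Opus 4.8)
The plan is to combine three ingredients: the transformation rule between solutions of \eqref{1.11} and \eqref{1.6}, the uniqueness of each Jost solution as the unique solution of \eqref{1.6} with the prescribed spatial asymptotics, and the one-sided limits of $E(x)$ recorded in \eqref{1.14}. Note first that existence is not at issue: the Jost solutions to \eqref{1.6} exist by Theorem~\ref{theorem2.1}, the Jost solutions to \eqref{1.11} exist by \cite{AEU2023a,AEU2023b}, and $E(x)$ is well defined because $qr\in L^1(\mathbb R)$ for Schwartz-class potentials; only the identifications \eqref{2.18}--\eqref{2.21} need to be established.

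First I would observe that the matrix computation carried out in the proof of Theorem~\ref{theorem2.2} can be read in the reverse direction: once \eqref{2.8} and \eqref{2.13} are in force, the entrywise equality \eqref{2.12} holds, and multiplying \eqref{2.12} on the right by an arbitrary solution $\begin{bmatrix}\tilde\alpha & \tilde\beta\end{bmatrix}^{\rm t}$ of \eqref{1.11} and using \eqref{1.11} shows that the column vector $\begin{bmatrix}\alpha & \beta\end{bmatrix}^{\rm t}$ defined from it through $\begin{bmatrix}\alpha \\ \beta\end{bmatrix}=\begin{bmatrix}1 & 0 \\ 0 & E(x)\end{bmatrix}\begin{bmatrix}\tilde\alpha \\ \tilde\beta\end{bmatrix}$ solves \eqref{1.6}. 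Hence, since \eqref{1.6} is linear, the right-hand sides of \eqref{2.18}--\eqref{2.21} are genuine column-vector solutions to \eqref{1.6}, the scalar prefactors $e^{-i\mu/2}$ or $1$ being harmless.

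Next, for each of the four statements I would check that the candidate solution has exactly the asymptotics characterizing the corresponding Jost solution, and then invoke uniqueness — a standard consequence of the Volterra integral equation satisfied by each Jost solution, as for \eqref{1.11} in \cite{AEU2023a}. For part~(a), using the $x\to+\infty$ asymptotics \eqref{2.14} of $\tilde\psi$ together with $E(x)=e^{i\mu/2}[1+o(1)]$ as $x\to+\infty$ from \eqref{1.14}, the right-hand side of \eqref{2.18} behaves like $\begin{bmatrix}o(1) \\ e^{i\zeta^2 x}[1+o(1)]\end{bmatrix}$ as $x\to+\infty$, since the factor $e^{-i\mu/2}$ is precisely what cancels the limiting value $e^{i\mu/2}$ of $E(x)$ in the second component; this matches \eqref{2.3}, so the right-hand side of \eqref{2.18} equals $\psi(\zeta,x)$. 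For part~(b) the normalization of $\bar\psi$ sits in the \emph{first} component, which the transformation matrix leaves untouched, so no $\mu$-factor is needed; using \eqref{2.15} and the boundedness of $E$ (so that $E(x)\,o(1)=o(1)$) one recovers \eqref{2.4}, giving \eqref{2.19} with constant $1$. For parts~(c) and~(d) the relevant asymptotics \eqref{2.16}--\eqref{2.17} are imposed as $x\to-\infty$, where $\lim_{x\to-\infty}E(x)=1$ by \eqref{1.14}, so the transformation reproduces \eqref{2.5} and \eqref{2.6} with constant $1$, yielding \eqref{2.20} and \eqref{2.21}.

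I do not expect a serious obstacle; the only points requiring care are bookkeeping ones — identifying which component of each Jost solution carries the $e^{\pm i\zeta^2 x}$ normalization, reading off the correct one-sided limit of $E(x)$ from \eqref{1.14} at the matching end, and noting that since $E(x)$ and $E(x)^{-1}$ are bounded on $\mathbb R$ the $o(1)$ remainder terms are preserved under the transformation. These together pin down the scalar constant to be $e^{-i\mu/2}$ in \eqref{2.18} and $1$ in \eqref{2.19}--\eqref{2.21}.
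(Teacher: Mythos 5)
Your proposal is correct and follows essentially the same route as the paper: apply the transformation \eqref{1.12} to each pair of Jost solutions and fix the constant $c$ by matching the spatial asymptotics \eqref{2.3}--\eqref{2.6} and \eqref{2.14}--\eqref{2.17} at the appropriate end using the limits of $E(x)$ in \eqref{1.14}. You merely make explicit two points the paper leaves implicit — that the transformed vector genuinely solves \eqref{1.6} (by reversing the computation of Theorem~\ref{theorem2.2}) and that the Jost solutions are uniquely determined by their asymptotics — and your constants ($e^{-i\mu/2}$ for \eqref{2.18}, $1$ for the rest) agree with the paper's.
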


\begin{proof} We apply the relationship to \eqref{1.12} to the corresponding Jost solutions to \eqref{1.6} and \eqref{1.11}, respectively,
and in each case we determine the specific value of the constant $c$ appearing in \eqref{1.12} for each pair of the Jost solutions.
For example, to establish \eqref{2.18} we proceed as follows. We write \eqref{1.12} by using the corresponding Jost solutions 
$\psi(\zeta,x)$ and $\tilde\psi(\zeta,x)$ there. This yields
\begin{equation}\label{2.22}
\begin{bmatrix}
\psi_1(\zeta,x)\\
\noalign{\medskip}\psi_2(\zeta,x)
\end{bmatrix}
=c\begin{bmatrix}
1&0\\
\noalign{\medskip}
0&E(x)
\end{bmatrix}
\begin{bmatrix}
\tilde\psi_1(\zeta,x)\\
\noalign{\medskip}
\tilde\psi_2(\zeta,x)
\end{bmatrix}.
\end{equation}
By letting $x\to +\infty$ in \eqref{2.22}, with the help of the asymptotics in \eqref{2.3} and \eqref{2.14} and the second asymptotics in
\eqref{1.14}, we obtain $c$ as $e^{-i\mu/2}.$ The relationships presented in (b), (c), and (d) are obtained in a similar manner with the 
help of the spacial asymptotics in \eqref{1.14}, \eqref{2.4}--\eqref{2.6}, and \eqref{2.15}--\eqref{2.17}. 
\end{proof}

In the next theorem, we present the large $\zeta$-asymptotics of the Jost solutions to \eqref{1.6}.
In the theorem, those asymptotics are expressed in terms of $\lambda,$ which is related to $\zeta$ as in \eqref{2.7}. 

\begin{theorem}
\label{theorem2.4}
Assume that the potentials $q$ and $r$ in \eqref{1.6} belong to the Schwartz 
class $\mathcal S(\mathbb R).$ Let the parameter
$\lambda$ be related to the spectral parameter $\zeta$ as in \eqref{2.7}.
Then, for each fixed $x\in\mathbb R,$ as $\lambda\to\infty$ in $\overline{\mathbb C^+}$
the Jost solutions $\psi(\zeta,x)$ and $\phi(\zeta,x)$ to \eqref{1.6}
appearing in \eqref{2.3} and \eqref{2.5}, respectively, satisfy
\begin{equation}\label{2.23}
\displaystyle\frac{\psi_1(\zeta,x)}{\zeta}=
e^{i\lambda x}\left[\displaystyle\frac{q(x)}{2i\lambda} +O\left(\displaystyle\frac{1}{\lambda^2}\right)\right],
\end{equation}
\begin{equation}
\label{2.24}
\psi_2(\zeta,x)=e^{i\lambda x}\left[1+\displaystyle\frac{q(x)\,r(x)}{4\lambda}
+\displaystyle\frac{1}{4\lambda}\int_x^\infty dy\,q(y)\,r'(y)+O\left(\frac{1}{\lambda^2}\right)\right],
\end{equation}
\begin{equation}
\label{2.25}
\phi_1(\zeta,x)=
\displaystyle e^{-i\lambda x} E(x)\left[1+\displaystyle\frac{1}{4\lambda}\int_{-\infty}^x dy\,q(y)\,r'(y)
+O\left(\displaystyle\frac{1}{\lambda^{2}}\right)\right],
\end{equation}
\begin{equation}
\label{2.26}
\displaystyle\frac{\phi_2(\zeta,x)}{\zeta}=e^{-i\lambda x}E(x)
\left[\displaystyle\frac{i\,r(x)}{2\lambda}+O\left(\displaystyle\frac{1}{\lambda^2}\right)\right],
\end{equation}
where we recall that $E(x)$ and $\mu$ are the quantities in \eqref{1.13} and \eqref{1.15}, respectively. Similarly, for each fixed $x\in\mathbb R,$ 
as $\lambda\to\infty$ in $\overline{\mathbb C^-}$ the Jost solutions $\bar\psi(\zeta,x)$ and $\bar\phi(\zeta,x)$
to \eqref{1.6}
appearing in \eqref{2.4} and \eqref{2.6}, respectively, satisfy
\begin{equation}
\label{2.27}
\bar\psi_1(\zeta,x)=
\displaystyle e^{-i\mu/2-i\lambda x} E(x)
\left[1-\displaystyle\frac{1}{4\lambda}\int_x^\infty dy\,q(y)\,r'(y)
+O\left(\displaystyle\frac{1}{\lambda^{2}}\right)\right],
\end{equation}
\begin{equation}\label{2.28}
\displaystyle\frac{\bar\psi_2(\zeta,x)}{\zeta}=
e^{-i\mu/2-i\lambda x}E(x)\left[\displaystyle\frac{i\,r(x)}{2\lambda}+O\left(\displaystyle\frac{1}{\lambda^2}\right)\right],
\end{equation}
\begin{equation}
\label{2.29}
\displaystyle\frac{\bar\phi_1(\zeta,x)}{\zeta}=
e^{i\lambda x}\left[\displaystyle\frac{q(x)}{2i\lambda}+O\left(\displaystyle\frac{1}{\lambda^2}\right)\right],
\end{equation}
\begin{equation}
\label{2.30}
\bar\phi_2(\zeta,x)=
e^{i\lambda x}\left[1+\displaystyle\frac{q(x)\,r(x)}{4\lambda}
-\displaystyle\frac{1}{4\lambda}\int_{-\infty}^{x}dy\,q(y)\,r'(y)
+O\left(\frac{1}{\lambda^2}\right)\right].
\end{equation}
	
\end{theorem}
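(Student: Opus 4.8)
The plan is to derive the expansions \eqref{2.23}--\eqref{2.30} from the already‑known large‑$\lambda$ asymptotics of the Jost solutions to the Kaup--Newell system \eqref{1.11}, recorded in \cite{AEU2023a,AEU2023b}, by means of the exact gauge identities \eqref{2.18}--\eqref{2.21} established in Theorem~\ref{theorem2.3}. First I would quote from \cite{AEU2023a,AEU2023b} the corresponding expansions, to first order in $1/\lambda$ with a uniform $O(1/\lambda^2)$ remainder, for $\tilde\psi_1/\zeta$, $\tilde\psi_2$, $\tilde\phi_1$, $\tilde\phi_2/\zeta$ as $\lambda\to\infty$ in $\overline{\mathbb C^+}$ and for $\tilde{\bar\psi}_1$, $\tilde{\bar\psi}_2/\zeta$, $\tilde{\bar\phi}_1/\zeta$, $\tilde{\bar\phi}_2$ as $\lambda\to\infty$ in $\overline{\mathbb C^-}$; in each case the expansion is $e^{\pm i\lambda x}$ times the sum of a leading Kaup--Newell gauge factor of the form $\exp\!\big(\pm\tfrac{i}{2}\int\tilde q\,\tilde r\big)$ over the appropriate half‑line (in particular $\tilde\phi_1$ carries $\exp(\tfrac{i}{2}\int_{-\infty}^x\tilde q\tilde r)$ and $\tilde{\bar\psi}_1$ carries $e^{-i\mu/2}\exp(\tfrac{i}{2}\int_{-\infty}^x\tilde q\tilde r)$) and a first‑order correction built from $\tilde q$, $\tilde r$, $\tilde q'$, $\tilde r'$ and the product $\tilde q\tilde r$. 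Substituting these into \eqref{2.18}--\eqref{2.21} is then immediate: each component of $\psi,\bar\psi,\phi,\bar\phi$ equals the corresponding component of $\tilde\psi,\tilde{\bar\psi},\tilde\phi,\tilde{\bar\phi}$ multiplied by $E(x)$ when it is a second component and by $e^{-i\mu/2}$ when it belongs to $\psi$, and these multipliers are bounded and $x$‑smooth, so the $O(1/\lambda^2)$ remainder is preserved.

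The substantive part is the algebraic reduction, using the dictionary of Theorem~\ref{theorem2.2}: $\tilde q(x)=E(x)q(x)$ and $\tilde r(x)=E(x)^{-1}r(x)$ from \eqref{2.8}, the relation $E'(x)=\tfrac{i}{2}q(x)r(x)E(x)$ established there, and the limits \eqref{1.14}--\eqref{1.15}. Two elementary consequences carry the load. First, $\tilde q\,\tilde r=q\,r$, so every Kaup--Newell gauge factor $\exp(\pm\tfrac{i}{2}\int\tilde q\tilde r)$ equals $\exp(\pm\tfrac{i}{2}\int qr)$; combining such a factor with the $E(x)$ and $e^{-i\mu/2}$ multipliers produced by \eqref{2.18}--\eqref{2.21}, and using $\int_{-\infty}^x qr$ and $\mu=\int_{-\infty}^\infty qr$ together with \eqref{1.13}--\eqref{1.14}, one checks that the products collapse exactly to the prefactors $1$, $E(x)$, $e^{-i\mu/2}E(x)$, etc.\ appearing in \eqref{2.23}--\eqref{2.30}. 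Second, because $E'=\tfrac{i}{2}qrE$ one has $\tilde q'=E(q'+\tfrac{i}{2}q^2r)$ and $\tilde r'=E^{-1}(r'-\tfrac{i}{2}qr^2)$, so the primed Kaup--Newell quantities inside the first‑order corrections differ from the primed CLL quantities by terms proportional to $q^2r$, $qr^2$, hence by $q^2r^2$ once paired with $\tilde r$ or $\tilde q$; after the prefactor bookkeeping of the previous step these extra $q^2r^2$‑terms recombine with the leading‑order $\tfrac{q r}{4\lambda}$ contribution, and what survives is precisely the half‑line integral $\int_x^\infty q r'$ (resp.\ $\int_{-\infty}^x q r'$) displayed in \eqref{2.24}, \eqref{2.25}, \eqref{2.27}, \eqref{2.30}.

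I expect the only genuine difficulty to be this last bookkeeping: keeping careful track of the $O(1/\lambda)$ term generated when an $E(x)$ or $e^{-i\mu/2}$ multiplier meets the leading term of a Kaup--Newell component, and confirming that it fits together with the converted $\int\tilde q'\tilde r$, $\int\tilde q\tilde r'$ terms to give exactly the stated coefficients. A fully self‑contained alternative, which I would use if the precise normalization of the expansions in \cite{AEU2023a,AEU2023b} proved awkward, is to derive \eqref{2.23}--\eqref{2.30} directly from the Volterra integral equations for the Jost solutions of \eqref{1.6}. For instance, writing $\psi_1(\zeta,x)=\zeta e^{i\lambda x}n_1(\zeta,x)$ and $\psi_2(\zeta,x)=e^{i\lambda x}n_2(\zeta,x)$, the linear system \eqref{1.6} together with \eqref{2.3} gives
\[
n_1(\zeta,x)=-\zeta\int_x^\infty e^{2i\lambda(y-x)}q(y)\,n_2(\zeta,y)\,dy,\qquad
n_2(\zeta,x)=e^{-i\mu/2}E(x)-\zeta E(x)\int_x^\infty E(y)^{-1}r(y)\,n_1(\zeta,y)\,dy ,
\]
and substituting the first into the second, integrating by parts twice in the oscillatory kernel $e^{2i\lambda(y-x)}$, and solving the resulting Volterra equation for the $1/\lambda$‑coefficient of $n_2$ yields \eqref{2.23}--\eqref{2.24}; the remaining expansions follow in the same way. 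Along this route the subtle point moves elsewhere: the two $O(1)$ terms in the equation for $n_2$ must be shown to cancel down to $1+O(1/\lambda)$, which follows from the identity $E(y)^{-1}q(y)r(y)=2i\,\frac{d}{dy}\!\left(E(y)^{-1}\right)$ together with \eqref{1.14}, and then the correction equation integrates by elementary means because $n_2'(\zeta,x)=O(1/\lambda)$.
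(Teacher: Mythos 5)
Your proposal follows the same route as the paper: the paper's proof simply quotes the known large-$\lambda$ asymptotics of the Jost solutions to \eqref{1.11} from Theorem~2.4 of \cite{AEU2023a}, substitutes them into the gauge relations \eqref{2.18}--\eqref{2.21}, and converts $\tilde q,\tilde r$ to $q,r$ via \eqref{2.8} --- exactly the reduction you describe, and your bookkeeping with $\tilde q\tilde r=qr$, $E'=\tfrac{i}{2}qrE$, and the induced shifts in $\tilde q',\tilde r'$ is the correct way to carry it out. The Volterra alternative you sketch is a viable fallback but is not needed and is not what the paper does.
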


\begin{proof} The large $\zeta$-asymptotics of the Jost solutions to \eqref{1.11} are already known \cite{AEU2023a,AEU2023b} and they 
are listed in Theorem 2.4 in \cite{AEU2023a}. We use
those asymptotics on the right-hand sides of \eqref{2.18}--\eqref{2.21}. We also express $\tilde q$ and $\tilde r$ appearing in those
asymptotics in terms of $q$ and $r$ with the help of \eqref{2.8}. Furthermore, we use \eqref{2.7} to relate the parameters $\lambda$
and $\zeta$ to each other. We then obtain the large $\zeta$-asymptotics expressed in \eqref{2.23}--\eqref{2.30}.
\end{proof}

Next, we introduce the scattering coefficients associated with the linear system \eqref{1.6} by using the spacial asymptotics of
the Jost solutions to \eqref{1.6} as
\begin{equation}\label{2.31}
\begin{bmatrix}
\psi_1(\zeta,x)\\
\noalign{\medskip}\psi_2(\zeta,x)
\end{bmatrix}=\begin{bmatrix}
\displaystyle\frac{L(\zeta)}{T_{\text{\rm{l}}}(\zeta)}\,e^{-i\zeta^2 x}\left[1+o(1)\right]\\
\noalign{\medskip}
\displaystyle\frac{1}{T_{\text{\rm{l}}}(\zeta)}\,e^{i\zeta^2 x}\left[1+o(1)\right]
\end{bmatrix}, \qquad   x\to-\infty,
\end{equation}
\begin{equation}\label{2.32}
\begin{bmatrix}
\bar\psi_1(\zeta,x)\\ \noalign{\medskip}\bar\psi_2(\zeta,x)
\end{bmatrix}=\begin{bmatrix}
\displaystyle\frac{1}{\bar{T_{\text{\rm{l}}}}(\zeta)}\,e^{-i\zeta^2 x}\left[1+o(1)\right]\\
\noalign{\medskip}
\displaystyle\frac{\bar L(\zeta)}{\bar{T_{\text{\rm{l}}}}(\zeta)}\,e^{i\zeta^2 x}\left[1+o(1)\right]
\end{bmatrix}, \qquad  x\to-\infty,
\end{equation}
\begin{equation}\label{2.33}
\begin{bmatrix}
\phi_1(\zeta,x)\\
\noalign{\medskip}\phi_2(\zeta,x)
\end{bmatrix}=\begin{bmatrix}
\displaystyle\frac{1}{T_{\text{\rm{r}}}(\zeta)}\,e^{-i\zeta^2x}\left[1+o(1)\right]\\
\noalign{\medskip}
\displaystyle\frac{R(\zeta)}{T_{\text{\rm{r}}}(\zeta)}\,e^{i\zeta^2 x}\left[1+o(1)\right]
\end{bmatrix}, \qquad   x\to+\infty,
\end{equation}
\begin{equation}\label{2.34}
\begin{bmatrix}
\bar\phi_1(\zeta,x)\\ \noalign{\medskip}\bar\phi_2(\zeta,x)
\end{bmatrix}=\begin{bmatrix}
\displaystyle\frac{\bar R(\zeta)}   {\bar{T_{\text{\rm{r}}}}(\zeta)}\,e^{-i\zeta^2 x}\left[1+o(1)\right]\\
\noalign{\medskip}
\displaystyle\frac{1}{\bar{T_{\text{\rm{r}}}}(\zeta)}\,e^{i\zeta^2 x}\left[1+o(1)\right]
\end{bmatrix}, \qquad   x\to+\infty.
\end{equation}
We refer to $T_{\text{\rm{l}}}(\zeta)$ and $\bar{T_{\text{\rm{l}}}}(\zeta)$ as the left transmission coefficients,
$L(\zeta)$ and $\bar L(\zeta)$ as the left reflection coefficients,
$T_{\text{\rm{r}}}(\zeta)$ and $\bar{T_{\text{\rm{r}}}}(\zeta)$ as the right transmission coefficients, and
$R(\zeta)$ and $\bar R(\zeta)$ as the right reflection coefficients.

Similarly, we introduce the scattering coefficients associated with the linear system \eqref{1.11} by using the spacial asymptotics of the Jost solutions to \eqref{1.11} as
\begin{equation}\label{2.35}
\begin{bmatrix}
\tilde\psi_1(\zeta,x)\\
\noalign{\medskip}\tilde\psi_2(\zeta,x)
\end{bmatrix}=\begin{bmatrix}
\displaystyle\frac{\tilde L(\zeta)}{\tilde T(\zeta)}\,e^{-i\zeta^2 x}\left[1+o(1)\right]\\
\noalign{\medskip}
\displaystyle\frac{1}{\tilde T(\zeta)}\,e^{i\zeta^2 x}\left[1+o(1)\right]
\end{bmatrix}, \qquad   x\to-\infty,
\end{equation}
\begin{equation}\label{2.36}
\begin{bmatrix}
\tilde{\bar\psi}_1(\zeta,x)\\ \noalign{\medskip}\tilde{\bar\psi}_2(\zeta,x)
\end{bmatrix}=\begin{bmatrix}
\displaystyle\frac{1}{\tilde{\bar T}(\zeta)}\,e^{-i\zeta^2 x}\left[1+o(1)\right]\\
\noalign{\medskip}
\displaystyle\frac{\tilde{\bar L}(\zeta)}{\tilde{\bar T}(\zeta)}\,e^{i\zeta^2 x}\left[1+o(1)\right]
\end{bmatrix}, \qquad  x\to-\infty,
\end{equation}
\begin{equation}\label{2.37}
\begin{bmatrix}
\tilde\phi_1(\zeta,x)\\
\noalign{\medskip}\tilde\phi_2(\zeta,x)
\end{bmatrix}=\begin{bmatrix}
\displaystyle\frac{1}{\tilde T(\zeta)}\,e^{-i\zeta^2x}\left[1+o(1)\right]\\
\noalign{\medskip}
\displaystyle\frac{\tilde R(\zeta)}{\tilde T(\zeta)}\,e^{i\zeta^2 x}\left[1+o(1)\right]
\end{bmatrix}, \qquad   x\to+\infty,
\end{equation}
\begin{equation}
\label{2.38}
\begin{bmatrix}
\tilde{\bar\phi}_1(\zeta,x)\\ \noalign{\medskip}\tilde{\bar\phi}_2(\zeta,x)
\end{bmatrix}=\begin{bmatrix}
\displaystyle\frac{\tilde{\bar R}(\zeta)}   {\tilde{\bar T}(\zeta)}\,e^{-i\zeta^2 x}\left[1+o(1)\right]\\
\noalign{\medskip}
\displaystyle\frac{1}{\tilde{\bar T}(\zeta)}\,e^{i\zeta^2 x}\left[1+o(1)\right]
\end{bmatrix}, \qquad   x\to+\infty.
\end{equation}
We refer to $\tilde T(\zeta)$ and $\tilde{\bar T}(\zeta)$ as the transmission coefficients,
$L(\zeta)$ and $\bar L(\zeta)$ as the left reflection coefficients, and
$R(\zeta)$ and $\bar R(\zeta)$ as the right reflection coefficients.
We remark that we do not need to make a distinction between the left and right transmission coefficients for \eqref{1.11}. This is
due to the fact that the trace of the matrix $\tilde{\mathcal X}$ appearing in \eqref{1.11} is zero. Consequently, the left and right transmission 
coefficients for \eqref{1.11} coincide, and we use $T(\zeta)$ and $\bar T(\zeta)$ to denote those common values, respectively. On the
other hand, the trace of the matrix $\mathcal X$ appearing in \eqref{1.6} is not zero. This results in the fact that the left and right transmission
coefficients are unequal. Hence, to describe the transmission coefficients for \eqref{1.6}, we need to use the four quantities 
$T_{\text{\rm{l}}}(\zeta),$ $T_{\text{\rm{r}}}(\zeta),$ $\bar{T_{\text{\rm{l}}}}(\zeta),$ and $\bar{T_{\text{\rm{r}}}}(\zeta).$

In the next theorem, we show the connection among the scattering coefficients for \eqref{1.6} and
the scattering coefficients for \eqref{1.11}.

\begin{theorem}
\label{theorem2.5}
Let the potential pair $(q,r)$ in \eqref{1.6} and the potential pair $(\tilde q,\tilde r)$ in \eqref{1.11} are related to each other as in
\eqref{2.8}, where $\mu$ is the quantity appearing in \eqref{1.15}.
Assume that the potential pair $(q,r)$ belongs to the Schwartz class $\mathcal S(\mathbb R).$ Then, we have the following:
\begin{enumerate}

\item[\text{\rm(a)}] The eight scattering coefficients $T_{\text{\rm{l}}}(\zeta),$ $T_{\text{\rm{r}}}(\zeta),$
$\bar{T_{\text{\rm{l}}}}(\zeta),$ $\bar{T_{\text{\rm{r}}}}(\zeta),$ $R(\zeta),$ $L(\zeta),$ $\bar R(\zeta),$ $\bar L(\zeta)$ for 
\eqref{1.6} are related to the six scattering coefficients $\tilde T(\zeta),$ $\tilde{\bar T}(\zeta),$ $\tilde R(\zeta),$ $\tilde L(\zeta),$ 
$\tilde{\bar R}(\zeta),$ $\tilde{\bar L}(\zeta)$ for \eqref{1.11} as
\begin{equation}\label{2.39}
 T_{\text{\rm l}}(\zeta)=e^{i \mu/2} \,\tilde T(\zeta),\quad
\bar T_{\text{\rm l}}(\zeta)=
\tilde{\bar T}(\zeta),
\end{equation}
\begin{equation}\label{2.40}
T_{\text{\rm r}}(\zeta)= \,\tilde T(\zeta),\quad
\bar T_{\text{\rm r}}(\zeta)=
e^{-i\mu/2} \,\tilde{\bar T}(\zeta),
\end{equation}
\begin{equation}\label{2.41}
R(\zeta)=e^{i\mu/2} \,\tilde R(\zeta),\quad
\bar R(\zeta)=
e^{-i  \mu/2} \,\tilde{\bar R}(\zeta),
\end{equation}
\begin{equation}\label{2.42}
L(\zeta)=\tilde L(\zeta),\quad
\bar L(\zeta)=
\tilde{\bar L}(\zeta).
\end{equation}

\item[\text{\rm(b)}] The transmission coefficients $T_{\text{\rm{l}}}(\zeta)$ and $T_{\text{\rm{r}}}(\zeta)$ are even in $\zeta$, and
hence they are functions of $\lambda.$ As functions of $\lambda,$ the quantities $T_{\text{\rm{l}}}(\zeta)$ and
$T_{\text{\rm{r}}}(\zeta)$ are meromorphic in $\lambda\in\mathbb C^+$ and continuous in $\lambda\in\overline{\mathbb C^+}$ 
except at the poles causing the meromorphic property in $\mathbb C^+.$ Similarly, the transmission coefficients
$\bar{T_{\text{\rm{l}}}}(\zeta)$ and $\bar{T_{\text{\rm{r}}}}(\zeta)$ are even in $\zeta$, and
hence they are functions of $\lambda.$ As functions of $\lambda,$ the quantities $\bar{T_{\text{\rm{l}}}}(\zeta)$ and
$\bar{T_{\text{\rm{r}}}}(\zeta)$ are meromorphic in $\lambda\in\mathbb C^-$ and continuous in $\lambda\in\overline{\mathbb C^-}$ 
except at the poles causing the meromorphic property in $\mathbb C^-.$ The four quantities
$R(\zeta)/\zeta,$ $\bar R(\zeta)/\zeta,$ $L(\zeta)/\zeta,$ $\bar L(\zeta)/\zeta$ are even in $\zeta,$ and hence they are all 
functions of $\lambda.$ As functions of $\lambda,$ those four quantities are continuous in $\lambda\in\mathbb R.$ 
\end{enumerate}
\end{theorem}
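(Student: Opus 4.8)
The plan is to deduce Theorem~\ref{theorem2.5} entirely from the already-established connection between the two linear systems, rather than redoing any direct-scattering analysis from scratch. The key input is Theorem~\ref{theorem2.3}, which relates the four Jost solutions of \eqref{1.6} to those of \eqref{1.11} through multiplication by the diagonal matrix $\mathrm{diag}(1,E(x))$ (with an extra factor $e^{-i\mu/2}$ in the $\psi$ case), together with the limiting behavior of $E(x)$ recorded in \eqref{1.14}. The strategy for part~(a) is: take each of the relations \eqref{2.18}--\eqref{2.21}, substitute the $x\to\pm\infty$ asymptotics that \emph{define} the scattering coefficients, namely \eqref{2.31}--\eqref{2.34} for the left-hand sides and \eqref{2.35}--\eqref{2.38} for the tilded quantities on the right-hand sides, use $E(x)\to 1$ as $x\to-\infty$ and $E(x)\to e^{i\mu/2}$ as $x\to+\infty$ from \eqref{1.14}, and then match the coefficients of $e^{\pm i\zeta^2 x}$ componentwise.

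Concretely, I would organize part~(a) as four short paragraphs, one per Jost solution. For $\phi$: relation \eqref{2.20} is just $\mathrm{diag}(1,E(x))$ applied to $\tilde\phi$, with no scalar prefactor; letting $x\to+\infty$ and comparing \eqref{2.33} with $E(x)\to e^{i\mu/2}$ times the asymptotics \eqref{2.37}, the first components give $1/T_{\mathrm r}(\zeta)=1/\tilde T(\zeta)$, i.e. $T_{\mathrm r}(\zeta)=\tilde T(\zeta)$, and the second components give $R(\zeta)/T_{\mathrm r}(\zeta)=e^{i\mu/2}\tilde R(\zeta)/\tilde T(\zeta)$, hence $R(\zeta)=e^{i\mu/2}\tilde R(\zeta)$; this yields the first halves of \eqref{2.40} and \eqref{2.41}. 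For $\bar\phi$: relation \eqref{2.21}, $x\to+\infty$, comparing \eqref{2.34} with $e^{i\mu/2}$ times \eqref{2.38}; the second components give $1/\bar T_{\mathrm r}(\zeta)=e^{i\mu/2}/\tilde{\bar T}(\zeta)$, so $\bar T_{\mathrm r}(\zeta)=e^{-i\mu/2}\tilde{\bar T}(\zeta)$, and the first components then give $\bar R(\zeta)=e^{-i\mu/2}\tilde{\bar R}(\zeta)$; this is the second half of \eqref{2.40} and of \eqref{2.41}. For $\psi$: relation \eqref{2.18} carries the prefactor $e^{-i\mu/2}$; letting $x\to-\infty$ and using $E(x)\to 1$, comparing \eqref{2.31} with $e^{-i\mu/2}$ times \eqref{2.35}, the second components give $1/T_{\mathrm l}(\zeta)=e^{-i\mu/2}/\tilde T(\zeta)$, so $T_{\mathrm l}(\zeta)=e^{i\mu/2}\tilde T(\zeta)$, and the first components give $L(\zeta)/T_{\mathrm l}(\zeta)=e^{-i\mu/2}\tilde L(\zeta)/\tilde T(\zeta)$, hence $L(\zeta)=\tilde L(\zeta)$; this is the first half of \eqref{2.39} and of \eqref{2.42}. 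For $\bar\psi$: relation \eqref{2.19} has no prefactor; letting $x\to-\infty$ and comparing \eqref{2.32} with \eqref{2.36} using $E(x)\to 1$, the first components give $\bar T_{\mathrm l}(\zeta)=\tilde{\bar T}(\zeta)$ and the second give $\bar L(\zeta)=\tilde{\bar L}(\zeta)$, completing \eqref{2.39} and \eqref{2.42}.

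For part~(b) the plan is to transport the analyticity and parity statements through the same relations \eqref{2.39}--\eqref{2.42}, invoking the corresponding known facts for \eqref{1.11}. The tilded scattering coefficients satisfy these properties by the cited references \cite{AEU2023a,AEU2023b} (and can also be read off from Theorem~\ref{theorem2.1} together with the Jost-solution parities in Theorem~\ref{theorem2.1}(b)): $\tilde T(\zeta)$ is even in $\zeta$ hence a function of $\lambda$, meromorphic in $\mathbb C^+$ and continuous up to $\overline{\mathbb C^+}$ away from its poles, and $\tilde{\bar T}(\zeta)$ has the analogous property in $\mathbb C^-$; and $\tilde R(\zeta)/\zeta,\ \tilde{\bar R}(\zeta)/\zeta,\ \tilde L(\zeta)/\zeta,\ \tilde{\bar L}(\zeta)/\zeta$ are even in $\zeta$, hence functions of $\lambda$, and continuous on $\mathbb R$. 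Since $\mu$ is a constant, multiplication by $e^{\pm i\mu/2}$ preserves every one of these properties; applying \eqref{2.39}--\eqref{2.42} then immediately gives: $T_{\mathrm l}(\zeta)$ and $T_{\mathrm r}(\zeta)$ are even in $\zeta$, functions of $\lambda$, meromorphic in $\mathbb C^+$ and continuous on $\overline{\mathbb C^+}$ except at poles; $\bar T_{\mathrm l}(\zeta)$ and $\bar T_{\mathrm r}(\zeta)$ likewise in $\mathbb C^-$; and $R(\zeta)/\zeta,\ \bar R(\zeta)/\zeta,\ L(\zeta)/\zeta,\ \bar L(\zeta)/\zeta$ are even in $\zeta$, functions of $\lambda$, continuous on $\mathbb R$.

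I do not anticipate a genuine obstacle here — the theorem is essentially a bookkeeping corollary of Theorem~\ref{theorem2.3} plus \eqref{1.14}. The only point requiring care is the direction of the conclusion when matching a coefficient that appears in a denominator: from a relation like $1/T_{\mathrm r}=1/\tilde T$ one must be careful to record $T_{\mathrm r}=\tilde T$ (and not its reciprocal), and to keep track of which exponential factor ($e^{i\mu/2}$ from the $x\to+\infty$ limit of $E$, or $1$ from the $x\to-\infty$ limit, or the extra $e^{-i\mu/2}$ prefactor carried only by $\psi$) lands on which component. A second minor point worth a sentence is the asymmetry between the two systems: because $\mathrm{tr}\,\mathcal X\neq 0$ in \eqref{1.6} while $\mathrm{tr}\,\tilde{\mathcal X}=0$ in \eqref{1.11}, the two transmission coefficients $T_{\mathrm l}$ and $T_{\mathrm r}$ are genuinely different (they differ by the factor $e^{i\mu/2}$, as \eqref{2.39}--\eqref{2.40} show), which is consistent with the discussion following \eqref{2.38}.
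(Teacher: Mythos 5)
Your proposal is correct and follows essentially the same route as the paper: part (a) by substituting the defining spacial asymptotics \eqref{2.31}--\eqref{2.38} into the Jost-solution relations \eqref{2.18}--\eqref{2.21} and using the limits of $E(x)$ from \eqref{1.14}, and part (b) by transporting the known parity, analyticity, and continuity properties of the tilded scattering coefficients from \cite{AEU2023a} through \eqref{2.39}--\eqref{2.42}. Your version simply spells out the componentwise matching that the paper leaves implicit, and the bookkeeping checks out.
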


\begin{proof}
For the proof of (a), we use the relationships \eqref{2.18}--\eqref{2.21} connecting to Jost solutions to \eqref{1.6} and the Jost 
solutions to \eqref{1.11}. With the help of the asymptotics in \eqref{1.14} and \eqref{2.31}--\eqref{2.38}, by comparing the leading 
terms in the spacial asymptotics of \eqref{2.18}--\eqref{2.21}, we obtain \eqref{2.39}--\eqref{2.42}. Hence, the proof of (a) is complete. 
By using the relevant properties of the scattering coefficients $T(\zeta),$ $\bar T(\zeta),$ $L(\zeta),$ $\bar L(\zeta),$ 
$R(\zeta),$ and $\bar R(\zeta)$ given in Theorem 2.2 of \cite{AEU2023a}, we establish the aforementioned properties of the scattering coefficients for
\eqref{1.6}. 
\end{proof}

In the next theorem, we present the small $\zeta$-asymptotics of the scattering coefficients for \eqref{1.6}.

\begin{theorem}
\label{theorem2.6}
Assume that the potentials $q$ and $r$ in \eqref{1.6} belong to the Schwartz class $\mathcal S(\mathbb R)$ in $x\in\mathbb R.$ Let the parameter $\lambda$ be related 
to the spectral parameter $\zeta$ as in \eqref{2.7}. Then, the small $\zeta$-asymptotics of the scattering coefficients 
$T_{\text{\rm{l}}}(\zeta),$ $T_{\text{\rm{r}}}(\zeta),$ $\bar{T_{\text{\rm{l}}}}(\zeta),$ $\bar{T_{\text{\rm{r}}}}(\zeta),$ $R(\zeta),$ $\bar R(\zeta),$ $L(\zeta),$ and $\bar L(\zeta)$ appearing in \eqref{2.31}--\eqref{2.34} are expressed in $\lambda$ as
\begin{equation}
\label{2.43}
T_{\text{\rm{l}}}(\zeta)=e^{i\mu/2}\left[1+O(\lambda)\right],\qquad \lambda\to 0
\text{\rm{ in }} \overline{\mathbb C^+},
\end{equation}
\begin{equation}
\label{2.44}
T_{\text{\rm{r}}}(\zeta)=1+O(\lambda),\qquad \lambda\to 0
\text{\rm{ in }} \overline{\mathbb C^+},
\end{equation}
\begin{equation}
\label{2.45}
\bar{T_{\text{\rm{l}}}}(\zeta)=1+O(\lambda),\qquad \lambda\to 0
\text{\rm{ in }} \overline{\mathbb C^+},
\end{equation}
\begin{equation}
\label{2.46}
\bar{T_{\text{\rm{r}}}}(\zeta)=e^{-i\mu/2}\left[1+O(\lambda)\right],\qquad \lambda\to 0
\text{\rm{ in }} \overline{\mathbb C^+},
\end{equation}
\begin{equation}
\label{2.47}
\displaystyle\frac{R(\zeta)}{\zeta}=e^{i\mu/2}\left[\displaystyle\frac{1}{E(x)}\displaystyle\int_{-\infty}^\infty dy\,r(y)+O(\lambda)\right], \qquad \lambda\to 0
\text{\rm{ in }} \mathbb R,
\end{equation}
\begin{equation}
\label{2.48}
\displaystyle\frac{\bar R(\zeta)}{\zeta}=e^{-i\mu/2}\left[E(x)\displaystyle\int_{-\infty}^\infty dy\,q(y)+O(\lambda)\right], \qquad \lambda\to 0
\text{\rm{ in }} \mathbb R,
\end{equation}
\begin{equation}
\label{2.49}
\displaystyle\frac{L(\zeta)}{\zeta}=-\left[E(x)\displaystyle\int_{-\infty}^\infty dy\,q(y)+O(\lambda)\right],\qquad \lambda\to 0
\text{\rm{ in }} \mathbb R,
\end{equation}
\begin{equation}
\label{2.50}
\displaystyle\frac{\bar L(\zeta)}{\zeta}=-\left[\displaystyle\frac{1}{E(x)}\displaystyle\int_{-\infty}^\infty dy\,r(y)+O(\lambda)\right], \qquad \lambda\to 0
\text{\rm{ in }} \mathbb R.
\end{equation}
\end{theorem}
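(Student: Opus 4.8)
The plan is to transport Theorem~\ref{theorem2.6} from the Kaup--Newell system \eqref{1.11}, whose small-$\zeta$ scattering asymptotics are available in \cite{AEU2023a,AEU2023b}, through the exact algebraic identities \eqref{2.39}--\eqref{2.42} of Theorem~\ref{theorem2.5} together with the substitution \eqref{2.8} of Theorem~\ref{theorem2.2}. Since \eqref{2.39}--\eqref{2.42} hold for all $\zeta,$ no new spectral analysis is required beyond this transfer once the asymptotics for \eqref{1.11} are recorded.

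First I would write down the relevant small-$\lambda$ asymptotics for \eqref{1.11}: as $\lambda\to 0,$ one has $\tilde T(\zeta)=1+O(\lambda)$ and $\tilde{\bar T}(\zeta)=1+O(\lambda)$ (reflecting that $\lambda=0$ is a regular point of the transmission coefficients for \eqref{1.11}), together with $\tilde R(\zeta)/\zeta=\int_{-\infty}^\infty dy\,\tilde r(y)+O(\lambda),$ $\tilde{\bar R}(\zeta)/\zeta=\int_{-\infty}^\infty dy\,\tilde q(y)+O(\lambda),$ $\tilde L(\zeta)/\zeta=-\int_{-\infty}^\infty dy\,\tilde q(y)+O(\lambda),$ and $\tilde{\bar L}(\zeta)/\zeta=-\int_{-\infty}^\infty dy\,\tilde r(y)+O(\lambda).$ These follow from the integral equations for the Jost solutions $\tilde\phi,\tilde{\bar\phi},\tilde\psi,\tilde{\bar\psi}$ of \eqref{1.11}: because $\tilde{\mathcal X}$ vanishes identically at $\zeta=0,$ each Jost solution is constant at order $\zeta^0,$ its order-$\zeta^1$ correction is exactly $\int\tilde r$ or $\int\tilde q,$ and matching with the defining relations \eqref{2.35}--\eqref{2.38} produces the displayed limits; the correction is $O(\lambda)$ rather than $O(\zeta)$ because $\tilde T(\zeta),\tilde{\bar T}(\zeta)$ and the ratios $\tilde R(\zeta)/\zeta,$ etc., are even in $\zeta$ and hence functions of $\lambda$ (cf. Theorem~\ref{theorem2.5}(b)).

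Given this, \eqref{2.43}--\eqref{2.46} follow immediately by multiplying $\tilde T(\zeta)$ and $\tilde{\bar T}(\zeta)$ by the constants $e^{\pm i\mu/2}$ (or $1$) prescribed in \eqref{2.39}--\eqref{2.40}. For \eqref{2.47}--\eqref{2.50} I would attach the phase factors $e^{\pm i\mu/2}$ (or $1$) from \eqref{2.41}--\eqref{2.42} and then substitute \eqref{2.8}, in the form $\tilde q(y)=E(y)\,q(y)$ and $\tilde r(y)=E(y)^{-1}r(y),$ under the integral sign, thereby re-expressing $\int_{-\infty}^\infty dy\,\tilde q(y)$ and $\int_{-\infty}^\infty dy\,\tilde r(y)$ through $q,$ $r,$ and $E$; the error $O(\zeta^2)$ becomes $O(\lambda)$ by \eqref{2.7}. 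All the integrals converge since $q,r,$ and hence $\tilde q,\tilde r$ by Theorem~\ref{theorem2.2}, belong to $\mathcal S(\mathbb R)$ and $E$ is bounded.

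The only genuinely nontrivial ingredient is the input borrowed from \cite{AEU2023a}: establishing that $\lambda=0$ is a regular point for the transmission coefficients of \eqref{1.11}, so that $\tilde T(\zeta),\tilde{\bar T}(\zeta)\to 1,$ and pinning down the precise leading coefficients of the reflection coefficients there. Granting those facts, the proof of Theorem~\ref{theorem2.6} reduces to bookkeeping: tracking the phase factors $e^{\pm i\mu/2}$ from Theorem~\ref{theorem2.5} and carrying the weight $E(y)$ through \eqref{2.8}.
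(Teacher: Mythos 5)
Your proposal follows essentially the same route as the paper: transfer the known small-$\zeta$ asymptotics of the scattering coefficients for \eqref{1.11} (Theorem 2.5(d) of \cite{AEU2023a}) through the identities \eqref{2.39}--\eqref{2.42} and the substitution \eqref{2.8}, and your sketch of where the tilde-asymptotics themselves come from is consistent with that reference. The only point worth flagging is that carrying \eqref{2.8} under the integral sign, as you correctly do, produces $\int_{-\infty}^\infty dy\,E(y)\,q(y)$ and $\int_{-\infty}^\infty dy\,r(y)/E(y)$ rather than the $x$-dependent prefactors $E(x)$ and $1/E(x)$ printed in \eqref{2.47}--\eqref{2.50}, which appear to be typographical since the left-hand sides are independent of $x$.
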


\begin{proof} In order to obtain \eqref{2.43}--\eqref{2.50}, we use the connections among the scattering coefficients for \eqref{1.6}
and for \eqref{1.11} given in \eqref{2.39}--\eqref{2.42}. We utilize the help of \eqref{2.8} and the known small $\zeta$ asymptotics of 
the scattering coefficients for \eqref{1.11}, where those asymptotics are listed in Theorem~2.5 (d) of \cite{AEU2023a}. We then get 
\eqref{2.43}--\eqref{2.50}.
\end{proof}

In the next theorem, we present the large $\zeta$-asymptotics of the scattering coefficients for \eqref{1.6}.
In the theorem, those asymptotics are expressed in terms of $\lambda,$ which is related to $\zeta$ as in \eqref{2.7}. 

\begin{theorem}
\label{theorem2.7}
Assume that the potentials $q$ and $r$ in \eqref{1.6} belong to the Schwartz class $\mathcal S(\mathbb R)$ in $x\in\mathbb R.$ Let the parameter $\lambda$ be related 
to the spectral parameter $\zeta$ as in \eqref{2.7}. Then, the large $\zeta$-asymptotics of the scattering coefficients 
$T_{\text{\rm{l}}}(\zeta),$ $T_{\text{\rm{r}}}(\zeta),$ $\bar{T_{\text{\rm{l}}}}(\zeta),$ $\bar{T_{\text{\rm{r}}}}(\zeta),$ $R(\zeta),$ $\bar R(\zeta),$ $L(\zeta),$ and $\bar L(\zeta)$ appearing in \eqref{2.31}--\eqref{2.34} are expressed in $\lambda$ as
\begin{equation}
\label{2.51}
T_{\text{\rm{l}}}(\zeta)=1+O\left(\displaystyle\frac{1}{\lambda}\right),\qquad \lambda\to \infty
\text{\rm{ in }} \overline{\mathbb C^+},
\end{equation}
\begin{equation}
\label{2.52}
\bar{T_{\text{\rm{l}}}}(\zeta)=e^{i\mu/2}\left[1+O\left(\displaystyle\frac{1}{\lambda}\right)\right],\qquad \lambda\to \infty
\text{\rm{ in }} \overline{\mathbb C^+},
\end{equation}
\begin{equation}
\label{2.53}
T_{\text{\rm{r}}}(\zeta)=e^{-i\mu/2}\left[1+O\left(\displaystyle\frac{1}{\lambda}\right)\right],\qquad \lambda\to \infty
\text{\rm{ in }} \overline{\mathbb C^+},
\end{equation}
\begin{equation}
\label{2.54}
\bar{T_{\text{\rm{r}}}}(\zeta)=1+O\left(\displaystyle\frac{1}{\lambda}\right),\qquad \lambda\to \infty
\text{\rm{ in }} \overline{\mathbb C^+},
\end{equation}
\begin{equation}
\label{2.55}
R(\zeta)=O\left(\displaystyle\frac{1}{\zeta^3}\right), \quad \bar R(\zeta)=O\left(\displaystyle\frac{1}{\zeta^3}\right), \qquad \lambda\to\pm \infty,
\end{equation}
\begin{equation}
\label{2.56}
L(\zeta)=O\left(\displaystyle\frac{1}{\zeta^3}\right),\quad \bar L(\zeta)=O\left(\displaystyle\frac{1}{\zeta^3}\right), \qquad \lambda\to\pm \infty.
\end{equation}

\end{theorem}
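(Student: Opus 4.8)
The plan is to derive \eqref{2.51}--\eqref{2.56} from the connection formulas \eqref{2.39}--\eqref{2.42} established in Theorem~\ref{theorem2.5}(a), combined with the large-$\zeta$ asymptotics of the scattering coefficients for the Kaup--Newell system \eqref{1.11}, which are already recorded in \cite{AEU2023a,AEU2023b}. Concretely, I would first recall from \cite{AEU2023a} that, in terms of $\lambda=\zeta^2,$ the transmission coefficients for \eqref{1.11} satisfy $\tilde T(\zeta)=e^{-i\mu/2}\left[1+O\!\left(1/\lambda\right)\right]$ and $\tilde{\bar T}(\zeta)=e^{i\mu/2}\left[1+O\!\left(1/\lambda\right)\right]$ as $\lambda\to\infty$ in $\overline{\mathbb C^+},$ while the reflection coefficients $\tilde R(\zeta),$ $\tilde{\bar R}(\zeta),$ $\tilde L(\zeta),$ $\tilde{\bar L}(\zeta)$ are all $O\!\left(1/\zeta^{3}\right)$ as $\zeta\to\pm\infty.$

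The second step is a direct substitution. From the first relation in \eqref{2.39}, $T_{\text{\rm l}}(\zeta)=e^{i\mu/2}\,\tilde T(\zeta)=e^{i\mu/2}e^{-i\mu/2}\left[1+O\!\left(1/\lambda\right)\right]=1+O\!\left(1/\lambda\right),$ which is \eqref{2.51}; the same computation applied to the second relation in \eqref{2.39} and to \eqref{2.40} produces \eqref{2.52}--\eqref{2.54}, the exponential prefactors in the connection formulas either cancelling or reinforcing the factor $e^{\mp i\mu/2}$ carried by $\tilde T$ and $\tilde{\bar T}.$ For the reflection coefficients, \eqref{2.41} and \eqref{2.42} express $R,$ $\bar R,$ $L,$ $\bar L$ as fixed nonzero constant multiples of $\tilde R,$ $\tilde{\bar R},$ $\tilde L,$ $\tilde{\bar L}$ respectively, and multiplication by a constant does not change the order estimate; hence \eqref{2.55}--\eqref{2.56} follow at once. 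Here I would also note, as in Theorem~\ref{theorem2.5}(b), that $R(\zeta)$ is odd in $\zeta,$ so that the estimate $O\!\left(1/\zeta^{3}\right)$ is the natural one: it equals $\zeta$ times a function of $\lambda$ that is $O\!\left(1/\lambda^{2}\right).$

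The only bookkeeping point worth spelling out is that the constant $\mu$ appearing in the Kaup--Newell asymptotics is literally the same $\mu$ defined in \eqref{1.15}: from \eqref{2.8} one has $\tilde q(x)\,\tilde r(x)=q(x)\,r(x)$ for every $x,$ hence $\int_{-\infty}^{\infty}\tilde q(y)\,\tilde r(y)\,dy=\int_{-\infty}^{\infty}q(y)\,r(y)\,dy=\mu.$ I expect no real obstacle; the one thing requiring a moment's care is to confirm that the cited large-$\zeta$ asymptotics of $\tilde T$ and $\tilde{\bar T}$ indeed carry the normalization $e^{\mp i\mu/2}$ rather than $1,$ which is consistent with — and can be cross-checked against — the small-$\lambda$ normalizations recorded in Theorem~\ref{theorem2.6}.
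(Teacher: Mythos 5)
Your proposal is correct and follows essentially the same route as the paper: the paper's proof simply cites the known large-$\zeta$ asymptotics of the scattering coefficients for \eqref{1.11} from \cite{AEU2023a} and substitutes them into the connection formulas \eqref{2.39}--\eqref{2.42}. Your additional checks (that $\tilde q\,\tilde r=q\,r$ so the constant $\mu$ is unchanged, and the parity remark explaining the $O(1/\zeta^3)$ form) are correct supporting details that the paper leaves implicit.
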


\begin{proof}
The large $\zeta$-asymptotics of the scattering coefficients for \eqref{1.11} are known and they are listed in
(2.46)--(2.51) of \cite{AEU2023a}. Using those asymptotics in \eqref{2.39}--\eqref{2.42}, we obtain \eqref{2.51}--\eqref{2.56}.
\end{proof}

In the next theorem we show that the left scattering coefficients $T_{\text{\rm{l}}}(\zeta),$ $\bar{T_{\text{\rm{l}}}}(\zeta),$ $L(\zeta),$
$\bar L(\zeta)$ can be expressed in terms of the right scattering coefficients $T_{\text{\rm{r}}}(\zeta),$ $\bar{T_{\text{\rm{r}}}}(\zeta),$
$R(\zeta),$ $\bar R(\zeta).$ Hence, in solving
the inverse scattering problem for \eqref{1.6}, instead of using all the scattering coefficients, it is sufficient to use as input a scattering data set containing the right scattering coefficients but not left scattering coefficients. 

\begin{theorem}
\label{theorem2.8}
Assume that the potentials $q$ and $r$ in \eqref{1.6} belong to the Schwartz class $\mathcal S(\mathbb R)$ in $x\in\mathbb R.$
The left scattering coefficients $T_{\text{\rm{l}}}(\zeta),$ $\bar T_{\text{\rm{l}}}(\zeta),$ $L(\zeta),$ $\bar L(\zeta)$ for \eqref{1.6} 
can be expressed in terms of the right scattering coefficients $T_{\text{\rm{r}}}(\zeta),$ $\bar T_{\text{\rm{r}}}(\zeta),$ $R(\zeta),$
$\bar R(\zeta)$ for \eqref{1.6} as  
\begin{equation}\label{2.57}
T_{\text{\rm{l}}}(\zeta)=e^{i\mu/2}T_{\text{\rm{r}}}(\zeta), \quad \bar{T_{\text{\rm{l}}}}(\zeta)=e^{i\mu/2}\bar{T_{\text{\rm{r}}}}(\zeta),
\end{equation}
\begin{equation}\label{2.58}
L(\zeta)=-e^{i\mu/2}\bar R(\zeta)\,\displaystyle\frac{T_{\text{\rm{r}}}(\zeta)}{\bar{T_{\text{\rm{r}}}}(\zeta)}, \quad
\bar L(\zeta)=-R(\zeta)\,\displaystyle\frac{\bar{T_{\text{\rm{r}}}}(\zeta)}{T_{\text{\rm{r}}}(\zeta)},
\end{equation}
where $\mu$ is the complex constant defined in \eqref{1.15}.
\end{theorem}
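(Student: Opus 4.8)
The plan is to follow the pattern used for every earlier result in Section~\ref{section2}: transfer the identities to the linear system \eqref{1.11}, invoke the already-known relations among its scattering coefficients, and then translate back through the dictionary of Theorem~\ref{theorem2.5}. The two identities in \eqref{2.57} need nothing beyond that dictionary. From \eqref{2.39} and \eqref{2.40} one has $T_{\text{\rm{l}}}(\zeta)=e^{i\mu/2}\,\tilde T(\zeta)$ and $T_{\text{\rm{r}}}(\zeta)=\tilde T(\zeta)$, whence $T_{\text{\rm{l}}}(\zeta)=e^{i\mu/2}T_{\text{\rm{r}}}(\zeta)$; likewise $\bar{T_{\text{\rm{l}}}}(\zeta)=\tilde{\bar T}(\zeta)$ and $\bar{T_{\text{\rm{r}}}}(\zeta)=e^{-i\mu/2}\tilde{\bar T}(\zeta)$ give $\bar{T_{\text{\rm{l}}}}(\zeta)=e^{i\mu/2}\bar{T_{\text{\rm{r}}}}(\zeta)$.

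For \eqref{2.58} the new ingredient is the pair of relations, valid for \eqref{1.11} and recorded in \cite{AEU2023a}, that express $\tilde L(\zeta)$ and $\tilde{\bar L}(\zeta)$ in terms of $\tilde T(\zeta),\tilde{\bar T}(\zeta),\tilde R(\zeta),\tilde{\bar R}(\zeta)$. These arise because, since $\operatorname{tr}\tilde{\mathcal X}=0$, the Wronskian of any two solutions of \eqref{1.11} is independent of $x$, so that evaluating the Wronskians of the various pairs of the Jost solutions $\tilde\psi,\tilde{\bar\psi},\tilde\phi,\tilde{\bar\phi}$ both as $x\to+\infty$ and as $x\to-\infty$, using \eqref{2.14}--\eqref{2.17} and \eqref{2.35}--\eqref{2.38}, forces such algebraic expressions. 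I would then substitute the dictionary \eqref{2.39}--\eqref{2.42}, in particular $L=\tilde L$, $\bar L=\tilde{\bar L}$, $R=e^{i\mu/2}\tilde R$, $\bar R=e^{-i\mu/2}\tilde{\bar R}$, $T_{\text{\rm{r}}}=\tilde T$, $\bar{T_{\text{\rm{r}}}}=e^{-i\mu/2}\tilde{\bar T}$, and collect the resulting exponential factors to obtain the two relations in \eqref{2.58}.

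A fully self-contained alternative is to repeat the Wronskian argument directly for \eqref{1.6}. Here $\operatorname{tr}\mathcal X=\tfrac{i}{2}qr$, so by \eqref{1.13}--\eqref{1.15} the Wronskian $W[u,v]:=u_1v_2-u_2v_1$ of any two solutions of \eqref{1.6} satisfies $W[u,v](x)=E(x)\,W[u,v](-\infty)$, hence $W[u,v](+\infty)=e^{i\mu/2}\,W[u,v](-\infty)$. Applying this to the pairs $\{\phi,\psi\}$ and $\{\bar\phi,\bar\psi\}$ reproduces \eqref{2.57}; applying it to $\{\phi,\bar\psi\}$ and $\{\bar\phi,\psi\}$ yields, after \eqref{2.57} is used to clear the transmission coefficients, the two identities of \eqref{2.58}; and the remaining pairs $\{\phi,\bar\phi\}$ and $\{\psi,\bar\psi\}$ give the supplementary identities relating $T_{\text{\rm{r}}}(\zeta)\,\bar{T_{\text{\rm{r}}}}(\zeta)$ and $T_{\text{\rm{l}}}(\zeta)\,\bar{T_{\text{\rm{l}}}}(\zeta)$ to $R(\zeta)\bar R(\zeta)$ and $L(\zeta)\bar L(\zeta)$, which are not needed here but serve as a consistency check. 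All the asymptotics required are read off \eqref{2.3}--\eqref{2.6} and \eqref{2.31}--\eqref{2.34}.

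The step I expect to demand the most care is the bookkeeping of the factors $e^{\pm i\mu/2}$. They enter asymmetrically: by \eqref{2.18}--\eqref{2.21}, only $\psi(\zeta,x)$ acquires an extra factor $e^{-i\mu/2}$ relative to $\tilde\psi(\zeta,x)$, because $\psi$ is normalized through its \emph{second} component as $x\to+\infty$, where $E(x)\to e^{i\mu/2}\neq 1$, whereas $\bar\psi$, $\phi$, and $\bar\phi$ are normalized either as $x\to-\infty$, where $E=1$, or through their first component, which is not scaled by $E$. Consequently one must keep $T_{\text{\rm{l}}}$ and $T_{\text{\rm{r}}}$ (and $\bar{T_{\text{\rm{l}}}}$ and $\bar{T_{\text{\rm{r}}}}$) strictly distinct while eliminating denominators; everything else in the derivation is routine algebra.
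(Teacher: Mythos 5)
Your first route is essentially the paper's own proof: \eqref{2.57} is read off by eliminating $\tilde T(\zeta)$ and $\tilde{\bar T}(\zeta)$ between \eqref{2.39} and \eqref{2.40}, and \eqref{2.58} is obtained by importing the left--right relations for the system \eqref{1.11} from \cite{AEU2023a} and substituting the dictionary \eqref{2.39}--\eqref{2.42}. Your second route --- the Wronskian argument run directly on \eqref{1.6}, using $\operatorname{tr}\mathcal X=\tfrac{i}{2}qr$ so that $W(+\infty)=e^{i\mu/2}\,W(-\infty)$ --- is a genuinely self-contained alternative that also re-derives \eqref{2.57} and the needed tilde-system identities rather than citing them; it is the cleaner of the two.

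The difficulty is that you stop at precisely the step you yourself flag as delicate (``collect the resulting exponential factors''), and carrying that step out does not reproduce the first identity of \eqref{2.58} as stated. For the zero-trace system \eqref{1.11}, the constant Wronskian of $\tilde{\bar\phi}$ and $\tilde\psi$ evaluated at $x\to\pm\infty$ via \eqref{2.35}--\eqref{2.38} forces $\tilde L=-\tilde{\bar R}\,\tilde T/\tilde{\bar T}$ with no extra constant; substituting $L=\tilde L,$ $\bar R=e^{-i\mu/2}\tilde{\bar R},$ $T_{\text{\rm{r}}}=\tilde T,$ $\bar{T_{\text{\rm{r}}}}=e^{-i\mu/2}\tilde{\bar T}$ then yields
\[
L(\zeta)=-\,\bar R(\zeta)\,\frac{T_{\text{\rm{r}}}(\zeta)}{\bar{T_{\text{\rm{r}}}}(\zeta)},
\]
without the factor $e^{i\mu/2}$ appearing in \eqref{2.58}. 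Your direct Wronskian route gives the same: $W[\bar\phi,\psi](+\infty)=\bar R/\bar{T_{\text{\rm{r}}}}$ and $W[\bar\phi,\psi](-\infty)=-L/T_{\text{\rm{l}}}$, so $L=-e^{-i\mu/2}\bar R\,T_{\text{\rm{l}}}/\bar{T_{\text{\rm{r}}}}=-\bar R\,T_{\text{\rm{r}}}/\bar{T_{\text{\rm{r}}}}$ after \eqref{2.57}. The small-$\zeta$ asymptotics of Theorem~\ref{theorem2.6} point the same way: \eqref{2.48}, \eqref{2.49}, \eqref{2.44}, and \eqref{2.46} give $L(\zeta)\to-e^{i\mu/2}\bar R(\zeta)$ and $T_{\text{\rm{r}}}(\zeta)/\bar{T_{\text{\rm{r}}}}(\zeta)\to e^{i\mu/2}$ as $\lambda\to0$, consistent with the factor-free formula but not with \eqref{2.58} unless $e^{i\mu/2}=1$. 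So either the first equality of \eqref{2.58} (and the quoted form \eqref{2.59}) carries a spurious $e^{i\mu/2}$, or the cited relation in \cite{AEU2023a} rests on a normalization different from \eqref{2.35}--\eqref{2.38}; in either case your proof as written asserts the conclusion without performing the one computation on which the statement actually turns, and performing it produces a formula that conflicts with the statement. The second equality of \eqref{2.58} and both equalities of \eqref{2.57} do come out correctly under either of your routes.
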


\begin{proof}
We get the first equality in \eqref{2.57} directly from the first equalities of \eqref{2.39} and \eqref{2.40}. Similarly, the second equality in 
\eqref{2.57} is obtained by using the second equalities in \eqref{2.39} and \eqref{2.40}. The analogs of the first and the second 
equalities in \eqref{2.58} for the linear system \eqref{1.11} are known from (2.45) of \cite{AEU2023a} and we have
\begin{equation}\label{2.59}
\tilde L(\zeta)=-e^{i\mu/2}\tilde{\bar R}(\zeta)\,\displaystyle\frac{\tilde T_{\text{\rm{r}}}(\zeta)}{\tilde{\bar{T_{\text{\rm{r}}}}}(\zeta)}, \quad
\tilde{\bar L}(\zeta)=-\tilde R(\zeta)\,\displaystyle\frac{\tilde{\bar{T_{\text{\rm{r}}}}}(\zeta)}{\tilde{T_{\text{\rm{r}}}}(\zeta)}.
\end{equation}
By using the relationships between the reflection coefficients for \eqref{1.6} and the reflection coefficients for \eqref{1.11} in 
\eqref{2.59}, namely, using \eqref{2.59} in \eqref{2.42}, we get the first and the second equalities in \eqref{2.58}, respectively.
\end{proof}

\section{The bound states}
\label{section3}

We recall that the potential pair $(q,r)$ appearing in \eqref{1.6} is assumed to belong to the Schwartz class $\mathcal S(\mathbb R)$
in $x\in\mathbb R.$ The bound states for \eqref{1.6} correspond to square integrable column-vector solutions to \eqref{1.6}. When the
spectral parameter $\zeta$ is real, by using any two linearly independent column-vector solutions to \eqref{1.6}, it is impossible to form
a square integrable solution to \eqref{1.6}. Hence, there are no bound states for \eqref{1.6} when $\zeta\in\mathbb R.$ A bound state
for \eqref{1.6} can only occur at a nonreal complex value of $\zeta.$

As indicated in Theorem~\ref{theorem2.2}, the potential pair $(\tilde q,\tilde r)$ in \eqref{1.11} belongs to the Schwartz class
$\mathcal S(\mathbb R)$ when the potential pair $(q,r)$ in \eqref{1.6} belongs $\mathcal S(\mathbb R).$ The bound states for
\eqref{1.6} are closely related to the meromorphic properties of the transmission coefficients in the complex $\zeta$-plane. From
\eqref{2.40} and \eqref{2.41} we know that the transmission coefficients for \eqref{1.6} and the transmission coefficients for \eqref{1.11} 
have similar meromorphic properties in the complex $\zeta$-plane. Thus, by using the facts outlined in Section~3 of
\cite{AEU2023a} for the bound states for \eqref{1.11}, we obtain the facts related to the bound states for \eqref{1.6}. We refer the 
reader to Section~3 of \cite{AEU2023a} for the details about the bound states for \eqref{1.11}, and in the following we provide a summary related to the bound states for 
\eqref{1.6} when the potential pair $(q,r)$ belongs to the Schwartz class.

\begin{enumerate}

\item[\text{\rm(a)}] As already indicated, the bound states for \eqref{1.6} cannot occur when $\zeta\in\mathbb R.$
A bound state can only occur at a nonreal
complex $\zeta$-value at which the transmission coefficient $T_{\text{\rm{r}}}(\zeta)$ has a pole in the first or third quadrant in the
complex $\zeta$-plane or the transmission coefficient $\bar T_{\text{\rm{r}}}(\zeta)$ has a pole in the second or fourth quadrant. 
As a consequence of \eqref{2.57}, the poles and their multiplicities for $T_{\text{\rm{l}}}(\zeta)$ and
$T_{\text{\rm{r}}}(\zeta)$ coincide and the poles and their multiplicities for $\bar T_{\text{\rm{l}}}(\zeta)$ and
$\bar T_{\text{\rm{r}}}(\zeta)$ coincide. From Theorem~\ref{theorem2.5}(b) we know that the transmission coefficients
$T_{\text{\rm{r}}}(\zeta)$ and $\bar T_{\text{\rm{r}}}(\zeta)$ are even in $\zeta,$ and hence the $\zeta$-values corresponding to the 
bound states for \eqref{1.6} are located symmetrically with respect to the origin of the complex $\zeta$-plane. Thus, it is convenient to describe the
bound-state poles of $T_{\text{\rm{r}}}(\zeta)$ and $\bar T_{\text{\rm{r}}}(\zeta)$ in terms of the parameter $\lambda$ related to $\zeta$ as in \eqref{2.7}. 

\item[\text{\rm(b)}] The number of poles of
$T_{\text{\rm{r}}}(\zeta)$ in the upper-half complex $\lambda$-plane is finite, and we use $\lambda_j$ for $1\le j\le N$ to denote those
distinct poles of $T_{\text{\rm{r}}}(\zeta).$ Similarly, the number of poles of $\bar T_{\text{\rm{r}}}(\zeta)$ in the lower-half complex
$\lambda$-plane is finite, and we use $\bar \lambda_j$ for $1\le j\le \bar N$ to denote those distinct $\bar \lambda_j$
values. It is possible that $T_{\text{\rm{r}}}(\zeta)$ has no poles in the upper-half complex $\lambda$-plane, in which case we have
$N=0.$ Similarly, it is possible that $\bar T_{\text{\rm{r}}}(\zeta)$ has no poles in the lower-half complex $\lambda$-plane, in which 
case we have $\bar N=0.$ The multiplicity of the pole of $T_{\text{\rm{r}}}(\zeta)$ at $\lambda=\lambda_j$ is finite, and we use $m_j$ to denote that multiplicity. 
Similarly, the multiplicity of the pole of $\bar T_{\text{\rm{r}}}(\zeta)$ at $\lambda=\bar\lambda_j$ is finite, and we use $\bar m_j$ to denote that multiplicity.

\item[\text{\rm(c)}] The bound-state information for \eqref{1.6} contains the sets $\left\{\lambda_j,m_j\right\}_{j=1}^N$ and
$\left\{\bar\lambda_j,\bar m_j\right\}_{j=1}^{\bar N}.$ For each bound state and multiplicity, we specify a bound-state normalization 
constant. To denote the bound-state normalization constants, we use the double-indexed quantities $c_{jk}$ for $1\le j\le N$ and
$0\le k\le m_j-1$ and the double-indexed quantities $\bar c_{jk}$ for $1\le j\le \bar N$ and
$0\le k\le \bar m_j-1.$ The construction of the complex-valued normalization constants $c_{jk}$ and $\bar c_{jk}$ for \eqref{1.6} is similar to the construction given in
\cite{AE2022,AEU2023a} of the bound-state normalization constants for \eqref{1.11}. Thus, the bound-state information for \eqref{1.6} consists of the two sets given by
\begin{equation}
\label{3.1}
\left\{\lambda_j,m_j,\{c_{jk}\}_{k=0}^{m_j-1}\right\}_{j=1}^N,\quad
\left\{\bar\lambda_j,\bar m_j,\{\bar c_{jk}\}_{k=0}^{\bar m_j-1}\right\}_{j=1}^{\bar N}.
\end{equation}
We refer the reader to Examples~6.1 and 6.2 in \cite{AEU2023a}, where it is illustrated how the bound-state normalization 
constants for \eqref{1.11} constructed by using the transmission coefficients and the bound-state dependency constants for \eqref{1.11}.

\item[\text{\rm(d)}] The bound-state information presented in \eqref{3.1} can be organized by using a pair of matrix triplets. We use the 
matrix triplet $(A,B,C)$ and the matrix triplet $(\bar A,\bar B,\bar C)$ to represent the information contained in the first and the second 
sets, respectively, appearing in \eqref{3.1}. The specification of the bound-state information in the form of a pair of matrix triplets is especially convenient in the solution of 
the inverse scattering problem for \eqref{1.6} with the help of the solution to a Marchenko system of integral equations. The advantage 
of using matrix triplets to represent the bound-state information is that it allows us to deal with any number of bound states with any 
multiplicities as if we deal only with a single bound state having the multiplicity equal to 1.
\end{enumerate}

The translation of the bound-state information from \eqref{3.1} to the matrix triplets $(A,B,C)$ and
$(\bar A,\bar B,\bar C)$ is as follows. For each bound state at $\lambda=\lambda_j$ for $1\le j\le N$ with the multiplicity $m_j,$ we form the matrix 
subtriplet $(A_j,B_j,C_j)$ by letting
\begin{equation}\label{3.2}
A_j:=\begin{bmatrix}
\lambda_j&1&0&\cdots&0&0\\
0&\lambda_j&1&\cdots&0&0\\
0&0&\lambda_j&\cdots&0&0\\
\vdots&\vdots&\vdots&\ddots&\vdots&\vdots\\
0&0&0&\cdots&\lambda_j&1\\
0&0&0&\dots&0&\lambda_j
\end{bmatrix},
\end{equation}
\begin{equation}\label{3.3}
B_j:=\begin{bmatrix}
0\\ \vdots \\
0\\
1
\end{bmatrix},\quad C_j:=\begin{bmatrix}
c_{j(m_j-1)}&c_{j(m_j-2)}&\cdots&c_{j1}&c_{j0}
\end{bmatrix}.
\end{equation}
Here, $A_j$ is the $m_j\times m_j$ square matrix in the Jordan canonical form with $\lambda_j$ in the diagonal entries,
$B_j$ is the column vector with $m_j$ components that are all zero with the exception of the last entry being $1,$
and $C_j$ is the row vector with $m_j$ components containing the bound-state normalization constants
$c_{jk}$ for $0\le k\le m_j-1$ in the order indicated in \eqref{3.3}.
In a similar manner, for each bound state at $\lambda=\bar\lambda_j$ for $1\le j \le\bar N,$ we form the matrix 
subtriplet $(\bar A_j,\bar B_j,\bar C_j)$ as
\begin{equation}\label{3.4}
\bar A_j:=\begin{bmatrix}
\bar\lambda_j&1&0&\cdots&0&0\\
0&\bar\lambda_j&1&\cdots&0&0\\
0&0&\bar\lambda_j&\cdots&0&0\\
\vdots&\vdots&\vdots&\ddots&\vdots&\vdots\\
0&0&0&\cdots&\bar\lambda_j&1\\
0&0&0&\dots&0&\bar\lambda_j
\end{bmatrix},
\end{equation}
\begin{equation}\label{3.5}
\bar B_j:=\begin{bmatrix}
0\\ \vdots \\
0\\
1
\end{bmatrix},\quad \bar C_j:=\begin{bmatrix}
\bar c_{j(\bar m_j-1)}&\bar c_{j(\bar m_j-2)}&\cdots&\bar c_{j1}&\bar c_{j0}
\end{bmatrix}.
\end{equation}
We note that $\bar A_j$ is the $\bar m_j\times \bar m_j$ square matrix in the Jordan canonical form with $\bar \lambda_j$ in the diagonal 
entries, $\bar B_j$ is the column vector with the first $\bar m_j-1$ entries being zero and with the last entry being $1,$
and $\bar C_j$ is the row vector with $\bar m_j$ components containing the bound-state normalization constants
$\bar c_{jk}$ for $0\le k\le\bar m_j-1$ in the order indicated in \eqref{3.5}.

By using the subtriplets $(A_j,B_j,C_j)$ and $(\bar A_j,\bar B_j,\bar C_j)$ given in \eqref{3.2}--\eqref{3.5}, we transform the bound-state information 
from \eqref{3.1} to the matrix triplets $(A,B,C)$ and $(\bar A,\bar B,\bar C)$ by letting
\begin{equation}\label{3.6}
A:=\begin{bmatrix}
A_1&0&\cdots&0&0\\
0&A_2&\cdots&0&0\\
\vdots&\vdots&\ddots&\vdots&\vdots\\
0&0&\cdots&A_{N-1}&0\\
0&0&\cdots&0&A_N
\end{bmatrix},
\quad
\bar A:=\begin{bmatrix}
\bar A_1&0&\cdots&0&0\\
0&\bar A_2&\cdots&0&0\\
\vdots&\vdots&\ddots&\vdots&\vdots\\
0&0&\cdots&\bar A_{\bar N-1}&0\\
0&0&\cdots&0&\bar A_{\bar N}
\end{bmatrix},
\end{equation}
\begin{equation}\label{3.7}
B:=\begin{bmatrix}
B_1\\
B_2\\
\vdots\\
B_N
\end{bmatrix},\quad \bar B:=\begin{bmatrix}
\bar B_1\\
\bar B_2\\
\vdots\\
\bar B_{\bar N}
\end{bmatrix},
\end{equation}
\begin{equation}\label{3.8}
C:=\begin{bmatrix}
C_1&C_2&\cdots&C_N
\end{bmatrix},
\quad 
\bar C:=\begin{bmatrix}
\bar C_1&\bar C_2&\cdots&\bar C_{\bar N}
\end{bmatrix}.
\end{equation}
We remark that the matrices $A,$ $B,$ $C,$ $\bar A,$ $\bar B,$ $\bar C$ each are block matrices, and the zeros in \eqref{3.6} denote the zero matrices of appropriate matrix sizes.
The matrix size of $A$ is $\mathcal N\times \mathcal N$ and the matrix size of $\bar A$ is $\bar{\mathcal N}\times \bar{\mathcal N},$ where we have defined
\begin{equation}\label{3.9}
\mathcal N:=\displaystyle\sum_{j=1}^{N} m_j, \quad \bar{\mathcal N}:=\displaystyle\sum_{j=1}^{\bar N} \bar m_j.
\end{equation}
The matrices $B$ and $\bar B$ are column vectors with $\mathcal N$ and $\bar{\mathcal N}$ components, respectively. Similarly, the matrices $C$ and $\bar C$
are row vectors with $\mathcal N$ and $\bar{\mathcal N}$ components, respectively.

\section{The Marchenko method}
\label{section4}

In this section we present the Marchenko method \cite{AM1963,M1986} for \eqref{1.6} by deriving the corresponding
Marchenko system of
linear integral equations. In the Marchenko method, the potentials $q$ and $r$ are recovered 
from the input scattering data set consisting of the scattering coefficients and the bound-state
information. The input is used to construct the kernel in the Marchenko system of linear integral equations
as well as the nonhomogeneous term in the Marchenko system. The potentials and all other relevant
quantities associated with \eqref{1.6} are then recovered from the solution to the Marchenko system.

In the next theorem, we present the derivation of the Marchenko system of integral equations for \eqref{1.6} in the absence of bound states.

\begin{theorem}
\label{theorem4.1}
Assume that the potentials $q$ and $r$ in \eqref{1.6} belong to the Schwartz class $\mathcal S(\mathbb R)$ in
$x\in\mathbb R.$ In the absence of bound states, the Marchenko system of linear integral equations for
\eqref{1.6} is given by
\begin{equation}\label{4.1}
\begin{split}
\begin{bmatrix}
0&0\\ 
\noalign{\medskip}
0&0
\end{bmatrix}=&\begin{bmatrix}
\bar K_1(x,y)&K_1(x,y)\\ \noalign{\medskip}\bar K_2(x,y)&K_2(x,y)
\end{bmatrix}+ \begin{bmatrix}
0&\hat{\bar R}(x+y)\\ 
\noalign{\medskip}
\hat R(x+y)&0
\end{bmatrix}\\
\noalign{\medskip}
&+\displaystyle\int_x^\infty dz \begin{bmatrix}
-i K_1(x,z)\,\hat R'(z+y)&\bar K_1(x,z)\,\hat{\bar R}(z+y)\\ 
\noalign{\medskip}
K_2(x,z)\,\hat R(z+y)&i\bar K_2(x,z)\,\hat{\bar R}'(z+y)
\end{bmatrix},\qquad x<y,
\end{split}
\end{equation}
where $\hat R(y)$ and $\hat{\bar R}(y)$ are related to the reflection coefficients $R(\zeta)$ and $\bar R(\zeta)$ for \eqref{1.6} 
via the Fourier transforms given by
\begin{equation}\label{4.2}
\hat R(y):=\displaystyle\frac{1}{2\pi}\displaystyle\int_{-\infty}^\infty  
d\lambda\,\displaystyle\frac{R(\zeta)}{\zeta}\,e^{i\lambda y},\quad \hat{\bar R}(y):=\displaystyle\frac{1}{2\pi}
\displaystyle\int_{-\infty}^\infty  d\lambda\,\displaystyle\frac{\bar R(\zeta)}{\zeta}\,e^{-i\lambda y},
\end{equation}
with $\hat R'(y)$ and $\hat{\bar R}'(y)$ denoting the derivatives of $\hat R(y)$ and $\hat{\bar R}(y),$ respectively, and
$\lambda$ being the parameter related to $\zeta$ as in \eqref{2.7}. The quantities $K_1(x,y),$ $K_2(x,y),$ $\bar K_1(x,y),$ and $\bar K_2(x,y)$ are related to
the components of the Jost solutions $\psi(\zeta,x)$ and $\bar \psi(\zeta,x)$ appearing in \eqref{2.1} as
\begin{equation}\label{4.3}
K_1(x,y):= 
\displaystyle\frac{1}{2\pi }\int_{-\infty}^\infty d\lambda \left[\displaystyle\frac{e^{i\mu/2}\,\,\psi_1(\zeta,x)}{\zeta\,E(x)}\right] e^{-i\lambda y},
\end{equation}
\begin{equation}\label{4.4}
K_2(x,y):= 
\displaystyle\frac{1}{2\pi }\int_{-\infty}^\infty d\lambda \left[\psi_2(\zeta,x)-e^{i\lambda x}\right] e^{-i\lambda y},
\end{equation}
\begin{equation}\label{4.5}
\bar K_1(x,y):= 
\displaystyle\frac{1}{2\pi }\int_{-\infty}^\infty 
d\lambda \left[\displaystyle\frac{e^{i\mu/2}}{E(x)}\bar\psi_1(\zeta,x)-e^{-i\lambda x}\right] e^{i\lambda y},
\end{equation}
\begin{equation}\label{4.6}
\bar K_2(x,y):= 
\displaystyle\frac{1}{2\pi }\int_{-\infty}^\infty d\lambda
\left[\displaystyle\frac{\bar\psi_2(\zeta,x)}{\zeta}\right] e^{i\lambda y},
\end{equation}
with $E(x)$ and $\mu$ being the quantities defined in \eqref{1.13} and \eqref{1.15}, 
respectively.
\end{theorem}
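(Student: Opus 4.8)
The plan is to derive the Marchenko system for \eqref{1.6} by transporting the known Marchenko system for the Kaup--Newell system \eqref{1.11} through the gauge relation \eqref{2.8} and the Jost-solution correspondences \eqref{2.18}--\eqref{2.21}, rather than re-deriving everything from a Riemann--Hilbert problem. First I would recall (from \cite{AEU2023a,AEU2023b}) that for \eqref{1.11} one has a Riemann--Hilbert factorization relating the pair $\tilde\phi/\tilde T$, $\tilde{\bar\phi}/\tilde{\bar T}$ to $\tilde\psi$, $\tilde{\bar\psi}$ with the reflection coefficients $\tilde R$, $\tilde{\bar R}$ as jump data, and that Fourier transforming in $\lambda=\zeta^2$ yields a Marchenko system whose unknowns are the Fourier transforms of appropriately normalized components of $\tilde\psi$, $\tilde{\bar\psi}$. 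The key point is that the normalizations in \eqref{4.3}--\eqref{4.6} are designed precisely so that, via \eqref{2.18}, \eqref{2.19}, the kernels $K_1, K_2, \bar K_1, \bar K_2$ are exactly the Fourier transforms of the corresponding normalized $\tilde\psi$, $\tilde{\bar\psi}$ components: e.g. $e^{i\mu/2}\psi_1(\zeta,x)/(\zeta E(x)) = \tilde\psi_1(\zeta,x)/\zeta$ by \eqref{2.18}, and $\psi_2(\zeta,x) = e^{-i\mu/2}E(x)\tilde\psi_2(\zeta,x)$, so $\psi_2(\zeta,x)-e^{i\lambda x}$ differs from $\tilde\psi_2(\zeta,x)-e^{i\lambda x}$ only by lower-order terms after accounting for the asymptotics in \eqref{1.14} and \eqref{2.24}. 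Similarly \eqref{2.41} gives $\hat R$, $\hat{\bar R}$ in terms of $\tilde R$, $\tilde{\bar R}$ up to the constant factors $e^{\pm i\mu/2}$.

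The steps, in order, are: (i) write down the Riemann--Hilbert relations for \eqref{1.6} directly, namely express $\phi/T_{\text{r}}$ and $\bar\phi/\bar T_{\text{r}}$ as linear combinations of $\psi$, $\bar\psi$ with coefficients built from $R(\zeta)$, $\bar R(\zeta)$ — these follow by combining \eqref{2.33}--\eqref{2.34} with \eqref{2.31}--\eqref{2.32} and the analyticity/continuity statements of Theorem~\ref{theorem2.1} and Theorem~\ref{theorem2.5}(b); (ii) use Theorem~\ref{theorem2.4} to subtract off the leading plane-wave behavior so that the quantities being Fourier transformed in $\lambda$ decay like $O(1/\lambda)$ or better and the transforms in \eqref{4.2}--\eqref{4.6} are well defined; (iii) for $x<y$, apply the Fourier transform $\frac{1}{2\pi}\int d\lambda\, e^{\mp i\lambda y}$ to the appropriate components of the Riemann--Hilbert relations, using that the "+"-type quantities have transforms supported in $z>x$ (analyticity in $\mathbb C^+$) so that the contribution of $\phi$-side terms vanishes for $x<y$, leaving the homogeneous left-hand zero matrix; (iv) convert products in $\lambda$-space into convolutions in $y$-space, which produces the $\int_x^\infty dz$ terms, with the factors of $\zeta$ (equivalently $\sqrt\lambda$) in the reflection coefficients $R(\zeta)/\zeta$, $\bar R(\zeta)/\zeta$ and in the Jost components accounting for the single derivatives $\hat R'$, $\hat{\bar R}'$ and the factors of $-i$, $i$ in the off-diagonal-of-the-integrand entries of \eqref{4.1}; (v) collect the four scalar equations into the $2\times 2$ matrix form \eqref{4.1} and verify the nonhomogeneous term is $\begin{bmatrix}0&\hat{\bar R}(x+y)\\ \hat R(x+y)&0\end{bmatrix}$.

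The main obstacle I expect is bookkeeping the half-integer powers of $\lambda$ correctly: the scattering and Jost data for \eqref{1.6} naturally live on the $\zeta$-plane with a four-quadrant analyticity picture, and one must consistently pass to the $\lambda=\zeta^2$ variable using the parity statements in Theorem~\ref{theorem2.1}(b) and Theorem~\ref{theorem2.5}(b) — deciding which objects are even in $\zeta$ (hence genuine functions of $\lambda$, Fourier-transformable against $e^{\pm i\lambda y}$) and which carry an extra factor of $\zeta$ that must be absorbed into a $y$-derivative. Getting the signs and the $\pm i$ factors in the kernel of \eqref{4.1} right, and confirming that the convolution terms run over $z\in(x,\infty)$ rather than $(-\infty,x)$, is where care is needed; the decay estimates from Theorems~\ref{theorem2.4}, \ref{theorem2.6}, and \ref{theorem2.7} guarantee all transforms and convolutions converge, and the Schwartz-class hypothesis makes every interchange of integration order legitimate, so those are routine once the variable change is handled. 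A secondary point is justifying that in the bound-state-free case the transmission coefficients $T_{\text{r}}$, $\bar T_{\text{r}}$ have no poles, so the contour integrals defining the Fourier transforms pick up no residue contributions — this is exactly the hypothesis of the theorem and mirrors the corresponding reduction for \eqref{1.11} in \cite{AEU2023a}.
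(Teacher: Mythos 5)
Your steps (i)--(v) reproduce the paper's own proof essentially verbatim: form the Riemann--Hilbert relations $T_{\text{\rm{r}}}\,\phi=\bar\psi+R\,\psi$ and $\bar T_{\text{\rm{r}}}\,\bar\phi=\bar R\,\bar\psi+\psi$, normalize and subtract the plane waves so that everything is an $O(1/\lambda)$ function of $\lambda=\zeta^2$, Fourier transform columnwise in $\lambda$, use analyticity plus Jordan's lemma (no poles of $T_{\text{\rm{r}}},\bar T_{\text{\rm{r}}}$ in the absence of bound states) to annihilate the $\phi$-side, and convert the residual factors of $\lambda$ in $R(\zeta)\psi_1$ and $\bar R(\zeta)\bar\psi_2$ into the derivatives $\hat R'$, $\hat{\bar R}'$ with the $\mp i$ prefactors. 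Only your opening framing contains a small slip: by \eqref{2.18} one gets $e^{i\mu/2}\psi_1(\zeta,x)/(\zeta E(x))=\tilde\psi_1(\zeta,x)/(\zeta E(x))$ rather than $\tilde\psi_1(\zeta,x)/\zeta$, so the kernels \eqref{4.3}--\eqref{4.6} are not literally the Kaup--Newell kernels, but since your actual derivation is carried out directly for \eqref{1.6} this does not affect the argument.
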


\begin{proof} For notational simplicity in the derivation of \eqref{4.1}, we suppress the arguments and write
$\psi$ for $\psi(\zeta,x),$ $\bar\psi$ for $\bar\psi(\zeta,x),$ $\phi$ for $\phi(\zeta,x),$ 
$\bar\phi$ for $\bar\phi(\zeta,x),$ $T_{\text{\rm{r}}}$ for $T_{\text{\rm{r}}}(\zeta),$ $\bar T_{\text{\rm{r}}}$ for $\bar T_{\text{\rm{r}}}(\zeta),$ $R$ for $R(\zeta),$
 $\bar R$ for $\bar R(\zeta),$ and $E$ for $E(x).$
From the asymptotics in \eqref{2.3} and \eqref{2.4} we see that
the column-vector Jost solutions $\psi$ and $\bar\psi$ to \eqref{1.6} are linearly independent, and hence those four
column-vector solutions form a fundamental set of column-vector solutions to \eqref{1.6}.
Thus, each of the other
two column-vector Jost solutions $\phi$ and $\bar\phi$ can be expressed as linear combinations of
$\psi$ and $\bar\psi.$ With the help of \eqref{2.3}, \eqref{2.4}, \eqref{2.33}, and \eqref{2.34}, for $\zeta\in\mathbb R$ we obtain
\begin{equation*}
\begin{cases}
\phi=\displaystyle\frac{1}{T_{\text{\rm{r}}}}\,\bar{ \psi}+\displaystyle\frac{R}{T_{\text{\rm{r}}}}\,\psi,
\\ \noalign{\medskip}
\bar\phi=\displaystyle\frac{\bar R}{\bar T_{\text{\rm{r}}}}\,\bar{\psi}+\displaystyle\frac{1}{\bar T_{\text{\rm{r}}}}\,\psi,
\end{cases}
\end{equation*}
or equivalently
\begin{equation}\label{4.8}
\begin{cases}
T_{\text{\rm{r}}}\,\phi=\bar{ \psi}+R\,\psi,
\\ \noalign{\medskip}
\bar T_{\text{\rm{r}}}\,\bar\phi=\bar R\,\bar{ \psi}+\psi,
\end{cases}
\end{equation}
which forms a basis for our Riemann--Hilbert problem. The solution to the 
Riemann--Hilbert problem consists of the construction of the Jost solutions from 
the knowledge of the scattering coefficients $T_{\text{\rm{r}}},$ $\bar T_{\text{\rm{r}}},$ $R,$ $\bar R$ appearing as coefficients in \eqref{4.8}.
We derive our Marchenko system of integral equations starting from \eqref{4.8} by proceeding as follows.
We first combine the two column-vector equations in \eqref{4.8} 
and obtain the $2\times 2$ matrix-valued system given by
\begin{equation}\label{4.9}
\begin{bmatrix}
T_{\text{\rm{r}}}\,\phi&\bar T_{\text{\rm{r}}}\,\bar\phi
\end{bmatrix}=\begin{bmatrix}
\bar\psi&\psi
\end{bmatrix}+\begin{bmatrix}
R\,\psi&\bar R\,\bar\psi
\end{bmatrix}.
\end{equation}
Using \eqref{2.1} and \eqref{2.2}, we write \eqref{4.9} as
\begin{equation}\label{4.10}
\begin{bmatrix}
T_{\text{\rm{r}}}\,\phi_1&\bar{T_{\text{\rm{r}}}}\,\bar\phi_1\\
\noalign{\medskip}
T_{\text{\rm{r}}}\,\phi_2&\bar{T_{\text{\rm{r}}}}\,\bar\phi_2
\end{bmatrix}=\begin{bmatrix}
\bar\psi_1&\psi_1
\\ \noalign{\medskip}
\bar\psi_2&\psi_2
\end{bmatrix}+\begin{bmatrix}
R\,\psi_1&\bar R\,\bar\psi_1
\\ \noalign{\medskip}
R\,\psi_2&\bar R\,\bar\psi_2
\end{bmatrix}.
\end{equation}
Premultiplying both sides of \eqref{4.10} by the $2\times 2$ matrix $\begin{bmatrix}
e^{i\mu/2}E^{-1}&0\\
0&1
\end{bmatrix},$ we obtain
\begin{equation}\label{4.11}
\begin{split}
\begin{bmatrix}
e^{i\mu/2}E^{-1}T_{\text{\rm{r}}}\,\phi_1&e^{i\mu/2}E^{-1}\bar{T_{\text{\rm{r}}}}\,\bar\phi_1\\
\noalign{\medskip}
T_{\text{\rm{r}}}\,\phi_2&\bar{T_{\text{\rm{r}}}}\,\bar\phi_2
\end{bmatrix}=&\begin{bmatrix}
e^{i\mu/2}E^{-1}\bar\psi_1&e^{i\mu/2}E^{-1}\psi_1
\\ \noalign{\medskip}
\bar\psi_2&\psi_2
\end{bmatrix}\\
&+\begin{bmatrix}
e^{i\mu/2}E^{-1}R\,\psi_1&e^{i\mu/2}E^{-1}\bar R\,\bar\psi_1
\\ \noalign{\medskip}
R\,\psi_2&\bar R\,\bar\psi_2
\end{bmatrix}.
\end{split}
\end{equation}
Subtracting the $2\times 2$ matrix $\begin{bmatrix}
e^{-i\lambda x}&0\\
0&e^{i\lambda x}
\end{bmatrix}$ from each side of \eqref{4.11} and then dividing the off-diagonal entries by $\zeta$ in the resulting matrix equality,
we obtain
\begin{equation}\label{4.12}
\begin{split}
\begin{bmatrix}
e^{i\mu/2}E^{-1}T_{\text{\rm{r}}}\,\phi_1-e^{-i\lambda x}&\displaystyle\frac{e^{i\mu/2}}{\zeta}E^{-1}\bar{T_{\text{\rm{r}}}}\,\bar\phi_1\\
\noalign{\medskip}
\displaystyle\frac{1}{\zeta}T_{\text{\rm{r}}}\,\phi_2&\bar{T_{\text{\rm{r}}}}\,\bar\phi_2-e^{i\lambda x}
\end{bmatrix}=&\begin{bmatrix}
e^{i\mu/2}E^{-1}\bar\psi_1-e^{-i\lambda x}&\displaystyle\frac{e^{i\mu/2}}{\zeta}E^{-1}\psi_1
\\ \noalign{\medskip}
\displaystyle\frac{1}{\zeta}\bar\psi_2&\psi_2-e^{i\lambda x}
\end{bmatrix}\\
&+\begin{bmatrix}
e^{i\mu/2}E^{-1}R\,\psi_1&\displaystyle\frac{e^{i\mu/2}}{\zeta}E^{-1}\bar R\,\bar\psi_1
\\ \noalign{\medskip}
\displaystyle\frac{1}{\zeta}R\,\psi_2&\bar R\,\bar\psi_2
\end{bmatrix}.
\end{split}
\end{equation}
Next, we take the Fourier transform of \eqref{4.12} 
with $\int_{-\infty}^\infty d\lambda\,e^{i\lambda y}/2\pi$  in the first columns and 
with $\int_{-\infty}^\infty d\lambda\,e^{-i\lambda y}/2\pi$ in the second columns. This yields
the $2\times 2$ matrix-valued equation
\begin{equation}
\label{4.13}
\text{\rm{LHS}}=\mathcal K(x,y)+\text{\rm{RHS}},
\end{equation}
where we have defined
\begin{equation}
\label{4.14}
\mathcal K(x,y):=\begin{bmatrix}
\bar K_1(x,y)&K_1(x,y)
\\
\noalign{\medskip}
\bar K_2(x,y)&K_2(x,y)
\end{bmatrix},
\end{equation}
with the entries $K_1(x,y),$ $K_2(x,y),$ $\bar K_1(x,y),$ and $\bar K_2(x,y)$ are as in
\eqref{4.3}--\eqref{4.6}, respectively. In \eqref{4.13} we have
\begin{equation}
\label{4.15}
\text{\rm{LHS}}:=\begin{bmatrix}
\text{\rm{LHS}}_{11}&\text{\rm{LHS}}_{12}
\\
\noalign{\medskip}
\text{\rm{LHS}}_{21}&\text{\rm{LHS}}_{22}
\end{bmatrix},
\end{equation}
\begin{equation}
\label{4.16}
\text{\rm{RHS}}:=\begin{bmatrix}
\text{\rm{RHS}}_{11}&\text{\rm{RHS}}_{12}
\\
\noalign{\medskip}
\text{\rm{RHS}}_{21}&\text{\rm{RHS}}_{22}
\end{bmatrix},
\end{equation}
with the matrix entries in \eqref{4.15} and \eqref{4.16} defined as
\begin{equation}
\label{4.17}
\text{\rm{LHS}}_{11}:=\displaystyle\int_{-\infty}^\infty \frac{d\lambda}{2\pi}\left[\displaystyle\frac{e^{i\mu/2}}{E(x)}\,T_{\text{\rm{r}}}(\zeta)\,\phi_1(\zeta,x)-e^{-i\lambda x}\right]e^{i\lambda y},
\end{equation}
\begin{equation}
\label{4.18}
\text{\rm{LHS}}_{12}:=
\displaystyle\int_{-\infty}^\infty \frac{d\lambda}{2\pi}\left[\displaystyle\frac{e^{i\mu/2}}{\zeta E(x)}\,\bar{T_{\text{\rm{r}}}}(\zeta)\,\bar\phi_1(\zeta,x)\right]e^{-i\lambda y},
\end{equation}
\begin{equation}
\label{4.19}
\text{\rm{LHS}}_{21}:=\displaystyle\int_{-\infty}^\infty \displaystyle\frac{d\lambda}{2\pi}\left[\displaystyle\frac{1}{\zeta}\,T_{\text{\rm{r}}}(\zeta)\,\phi_2(\zeta,x)\right]e^{i\lambda y},
\end{equation}
\begin{equation}
\label{4.20}
\text{\rm{LHS}}_{22}:=
\displaystyle\int_{-\infty}^\infty \frac{d\lambda}{2\pi}\left[\bar{T_{\text{\rm{r}}}}(\zeta)\,\bar{ \phi}_2(\zeta,x)-e^{i\lambda x}\right]e^{-i\lambda y},
\end{equation}
\begin{equation}
\label{4.21}
\text{\rm{RHS}}_{11}:=
\displaystyle\int_{-\infty}^\infty \frac{d\lambda}{2\pi}\left[\displaystyle\frac{e^{i\mu/2}}{E(x)}\,R(\zeta)\,\psi_1(\zeta,x)\right]
e^{i\lambda y},
\end{equation}
\begin{equation}
\label{4.22}
\text{\rm{RHS}}_{12}:=
\displaystyle\int_{-\infty}^\infty \displaystyle\frac{d\lambda}{2\pi}\left[\displaystyle\frac{e^{i\mu/2}}{\zeta E(x)}\,\bar R(\zeta)\,\bar\psi_1(\zeta,x)\right]e^{-i\lambda y},
\end{equation}
\begin{equation}
\label{4.23}
\text{\rm{RHS}}_{21}:=
\displaystyle\int_{-\infty}^\infty \displaystyle\frac{d\lambda}{2\pi}\left[\displaystyle\frac{1}{\zeta}\,R(\zeta)\,\psi_2(\zeta,x)\right]e^{i\lambda y},
\end{equation}
\begin{equation}
\label{4.24}
\text{\rm{RHS}}_{22}:=\displaystyle\int_{-\infty}^\infty \frac{d\lambda}{2\pi}\left[\bar R(\zeta)\,\bar\psi_2(\zeta,x)\right]e^{-i\lambda y}.
\end{equation}
In the absence of bound states, when $x<y$ the integrands in \eqref{4.3} and \eqref{4.4} are
analytic in $\lambda\in\mathbb C^+,$ are continuous in $\lambda\in\overline{\mathbb C^+},$ and behave as $e^{i\lambda(y-x)}O(1/\lambda)$ as $\lambda\to\infty$ in
$\overline{\mathbb C^+}.$ Similarly,
in the absence of bound states, when $x<y$ the integrands in \eqref{4.5} and \eqref{4.6} are
analytic in $\lambda\in\mathbb C^-,$ are continuous in $\lambda\in\overline{\mathbb C^-},$ and behave as $e^{-i\lambda(y-x)}O(1/\lambda)$ as $\lambda\to\infty$ in
$\overline{\mathbb C^-}.$
Thus, from Jordan's lemma it follows that the matrix $\mathcal K(x,y)$ defined in \eqref{4.12} is equal to zero when $x>y.$
On the other hand, in the absence of bound states, when $x<y$ with the help of
Theorems~\ref{theorem2.1}, \ref{theorem2.4}, and \ref{theorem2.6} we observe that the integrands in \eqref{4.17} and \eqref{4.19} are analytic
in $\lambda\in\mathbb C^+,$ are continuous in $\lambda\in\overline{\mathbb C^+},$
and behave uniformly as $O(1/\lambda)$ as $\lambda\to\infty$ in $\overline{\mathbb C^+}.$
Similarly, when $x<y,$ with the help of
Theorems~\ref{theorem2.1}, \ref{theorem2.4}, and \ref{theorem2.6}, we observe that the integrands in \eqref{4.18} and \eqref{4.20} are analytic
in $\lambda\in\mathbb C^-,$ continuous in $\lambda\in\overline{\mathbb C^-},$
and behave as $O(1/\lambda)$ as $\lambda\to\infty$ in $\overline{\mathbb C^-}.$
Hence, when $x<y,$ using Jordan's lemma and the residue theorem, we conclude that the matrix
$\text{\rm{LHS}}$ defined in \eqref{4.15} is zero.
In fact, from the continuity of the Jost solutions stated in Theorem~\ref{theorem2.1},
the continuity and asymptotic properties of the scattering coefficients stated in Theorem~\ref{theorem2.7}, 
the large $\zeta$-asymptotics of the
Jost solutions stated in Theorems~\ref{theorem2.4},
we observe that each integrand in \eqref{4.3}--\eqref{4.6} and \eqref{4.17}--\eqref{4.24} is continuous in $\lambda\in\mathbb R$ and 
behaves as $O(1/\lambda)$ as 
$\lambda\to\pm\infty.$ Thus, the $L^2$-Fourier transforms in \eqref{4.3}--\eqref{4.6} and \eqref{4.17}--\eqref{4.24} are all
well defined. Using the inverse Fourier transform, from \eqref{4.3}--\eqref{4.6} we get
\begin{equation}\label{4.25}
\displaystyle\frac{e^{i\mu/2}}{\zeta E(x)}\,\psi_1(\zeta,x)=\displaystyle\int_x^\infty dy\,K_1(x,y)\,e^{i\lambda y},
\end{equation}
\begin{equation}\label{4.26}
\psi_2(\zeta,x)=e^{i\lambda x}+\displaystyle\int_x^\infty dy\,K_2(x,y)\,e^{i\lambda y},
\end{equation}
\begin{equation}\label{4.27}
\displaystyle\frac{e^{i\mu/2}}{E(x)}\,\bar\psi_1(\zeta,x)=e^{-i\lambda x}+\displaystyle\int_x^\infty dy\,\bar K_1(x,y)\,e^{-i\lambda y},
\end{equation}
\begin{equation}\label{4.28}
\displaystyle\frac{1}{\zeta}\,\bar\psi_2(\zeta,x)=\displaystyle\int_x^\infty dy\,\bar K_2(x,y)\,e^{-i\lambda y}.
\end{equation}
Form \eqref{4.2}, by using the inverse Fourier transform we have
\begin{equation}\label{4.29}
\displaystyle\frac{R(\zeta)}{\zeta}=\displaystyle\int_{-\infty}^\infty ds\,\hat R(s)\,e^{-i\lambda s},
\quad 
\frac{\bar R(\zeta)}{\zeta}=\displaystyle\int_{-\infty}^\infty ds\,\hat{\bar R}(s)\,e^{i\lambda s}.
\end{equation}
Taking the $y$-derivatives, from \eqref{4.2} we obtain
\begin{equation}\label{4.30}
\hat R'(y)=\displaystyle\frac{i}{2\pi }\displaystyle\int_{-\infty}^\infty  
d\lambda\,\displaystyle\frac{R(\zeta)}{\zeta}\lambda\,e^{i\lambda y},
\quad \hat{\bar R}'(y)=-\displaystyle\frac{i}{2\pi }\displaystyle\int_{-\infty}^\infty  
d\lambda\,\displaystyle\frac{\bar R(\zeta)}{\zeta}\lambda\,e^{-i\lambda y}.
\end{equation}
Using the inverse Fourier transform, from \eqref{4.30} we get
\begin{equation}\label{4.31}
\displaystyle\frac{R(\zeta)}{\zeta}\,\lambda=-i\displaystyle\int_{-\infty}^\infty ds\,\hat R'(s)\,e^{-i\lambda s},
\quad 
\frac{\bar R(\zeta)}{\zeta}\,\lambda=i\displaystyle\int_{-\infty}^\infty ds\,\hat{\bar R}'(s)\,e^{i\lambda s}.
\end{equation}
Next, we take the Fourier transform of each component of the matrix $\text{\rm{RHS}}$ appearing in \eqref{4.16}. For this, we first write \eqref{4.21} in an equivalent form, i.e. 
\begin{equation}
\label{4.32}
\text{\rm{RHS}}_{11}=
\displaystyle\int_{-\infty}^\infty \frac{d\lambda}{2\pi}\,e^{i\lambda y}\left(\displaystyle\frac{e^{i\mu/2}}{\zeta E(x)}\psi_1(\zeta,x)\right)\left(\frac{R(\zeta)}{\zeta}\,\lambda\right).
\end{equation}
Then, using \eqref{4.25} and the first equality of \eqref{4.31}
on the right-hand side of \eqref{4.32}, we get
\begin{equation}
\label{4.33}
\text{\rm{RHS}}_{11}=-i\displaystyle\int_x^\infty  dz\,K_1(x,z)\,\hat R'(z+y),
\end{equation}
where we have used the fact that $K_1(x,z)$ vanishes when $x>z.$ Similarly, we write \eqref{4.24} in an equivalent form as
\begin{equation}
\label{4.34}
\text{\rm{RHS}}_{22}=\displaystyle\int_{-\infty}^\infty \frac{d\lambda}{2\pi}
\,e^{-i\lambda y}\left(e^{-i\mu/2}\,E(x)\,\displaystyle\frac{\bar{\psi}_2(\zeta,x)}{\zeta}\right)\left(\frac{\bar R(\zeta)}{\zeta}\,\lambda\right).
\end{equation}
Then, using \eqref{4.28} and the second equality of \eqref{4.31}
on the right-hand side of \eqref{4.34}, we obtain
\begin{equation}\label{4.35}
\text{\rm{RHS}}_{22}=i\displaystyle\int_x^\infty  dz\,\bar K_2(x,z)\,\hat{\bar R}'(z+y),
\end{equation}
where we have used the fact that $\bar K_2(x,z)$ vanishes when $x>z.$
Proceeding in a similar manner, by using \eqref{4.26}, \eqref{4.27}, and \eqref{4.29}, we write
\eqref{4.22} and \eqref{4.23}, respectively, as
\begin{equation}\label{4.36}
\text{\rm{RHS}}_{12}
=\hat{\bar R}(x+y)+\displaystyle\int_x^\infty  dz\,\bar K_1(x,z)\,\hat{\bar R}(z+y),
\end{equation}
\begin{equation}\label{4.37}
\text{\rm{RHS}}_{21}=\hat R(x+y)+\displaystyle\int_x^\infty  dz\,K_2(x,z)\,\hat R(z+y).
\end{equation}
Hence, using \eqref{4.33}, \eqref{4.35}, \eqref{4.36}, and \eqref{4.37}
in \eqref{4.13}, we observe that $\text{\rm{RHS}}$ is
equal to the sum of the second and third terms on the right-hand side of \eqref{4.1}.
Thus, the proof is complete.
\end{proof}

When \eqref{1.6} has bound states,
the only modification needed in the proof of Theorem~\ref{theorem4.1} is that
the quantity $\text{\rm{LHS}}$ appearing in \eqref{4.13} is no longer equal to the zero matrix
due to the fact that we must take into
account the bound-state poles of the transmission coefficients $T_{\text{\rm{r}}}(\zeta)$ and $\bar T_{\text{\rm{r}}}(\zeta)$ in evaluating
the integrals in \eqref{4.17}--\eqref{4.20}. Those integrals, after using the poles of $T_{\text{\rm{r}}}(\zeta)$ and $\bar T_{\text{\rm{r}}}(\zeta)$ and
the bound-state dependency constants for \eqref{1.6}, can be explicitly evaluated and the results can be expressed in terms of the matrix triplets
$(A,B,C)$ and $(\bar A,\bar B,\bar C)$ appearing in \eqref{3.6}--\eqref{3.8}. This yields the Marchenko system of integral equations presented in
the next theorem in the presence of bound states for \eqref{1.6}. The proof of the theorem is analogous to the proof of Theorem~4.2 of \cite{AEU2023b}
involving the derivation of the Marchenko system for \eqref{1.11} in the presence of bound states. Hence, we omit the proof.
In order to present the Marchenko system of integral equations for \eqref{1.6} in the presence of bound states, we introduce the $2\times 2$ matrix-valued
quantities $\Omega(y)$ and $\bar\Omega(y)$ as
\begin{equation}\label{4.38}
\Omega(y):=\hat R(y)+C\,e^{iAy}B,\quad \bar\Omega(y):=\hat{\bar R}(y)+\bar C\,e^{-i\bar A y}\bar B,
\end{equation}
where we use $e^{iAy}$ and $e^{-i\bar A y}$ to denote the corresponding matrix exponentials. By taking the $y$-derivative of the two matrix equalities in \eqref{4.38}, we obtain
\begin{equation}
\label{4.39}
\Omega'(y)=\hat R'(y)+i \,C A\, e^{iAy} B,\quad \bar\Omega'(y)=\hat{\bar R}'(y)-i\, \bar C \bar A \,e^{-i\bar A y} \bar B.
\end{equation}

\begin{theorem}
\label{theorem4.2}
Assume that the potentials $q$ and $r$ in \eqref{1.6} belong to the Schwartz class $\mathcal S(\mathbb R)$ in $x\in\mathbb R.$
Let $(A,B,C)$ and $(\bar A,\bar B,\bar C)$ be the pair of matrix triplets representing the bound-state information
for \eqref{1.6}.
In the presence of bound states, the Marchenko system for \eqref{1.6} is given by
\begin{equation}\label{4.40}
\begin{split}
\begin{bmatrix}
0&0\\ \noalign{\medskip}0&0
\end{bmatrix}=&\begin{bmatrix}
\bar K_1(x,y)&K_1(x,y)\\ \noalign{\medskip}\bar K_2(x,y)&K_2(x,y)
\end{bmatrix}+ \begin{bmatrix}
0&\bar\Omega(x+y)\\ \noalign{\medskip}\Omega(x+y)&0
\end{bmatrix}\\
\noalign{\medskip}
&+\displaystyle\int_x^\infty dz\begin{bmatrix}
-iK_1(x,z)\,\Omega'(z+y)&\bar K_1(x,z)\,\bar\Omega(z+y)\\ \noalign{\medskip}
K_2(x,z)\,\Omega(z+y)&i\bar K_2(x,z)\,\bar\Omega'(z+y)
\end{bmatrix},\qquad x<y,
\end{split}
\end{equation}
where $\Omega(y),$ $\bar\Omega(y),$ $\Omega'(y),$ $\bar\Omega'(y)$ are the quantities
appearing in \eqref{4.38} and \eqref{4.39}, and $K_1(x,y),$ $K_2(x,y),$ $\bar K_1(x,y),$ $\bar K_2(x,y)$ are the quantities appearing in
\eqref{4.3}--\eqref{4.6}, respectively.
\end{theorem}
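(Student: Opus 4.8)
The plan is to follow the proof of Theorem~\ref{theorem4.1} verbatim up to and including the matrix identity \eqref{4.13}, and then to isolate the single step that must change, namely the evaluation of the block $\text{\rm{LHS}}$. The entries of $\text{\rm{RHS}}$ in \eqref{4.21}--\eqref{4.24} contain only the reflection coefficients $R(\zeta)$ and $\bar R(\zeta)$, which are free of poles by Theorem~\ref{theorem2.5}(b); consequently those four integrals reduce, exactly as in the proof of Theorem~\ref{theorem4.1}, to the nonhomogeneous and convolution terms built from $\hat R$, $\hat{\bar R}$, $\hat R'$, and $\hat{\bar R}'$. By contrast, the integrands in \eqref{4.17}--\eqref{4.20} carry the transmission coefficients $T_{\text{\rm{r}}}(\zeta)$ and $\bar T_{\text{\rm{r}}}(\zeta)$, which in the presence of bound states are meromorphic with poles precisely at the bound-state values $\lambda_j$ (for \eqref{4.17} and \eqref{4.19}, and otherwise analytic in $\mathbb C^+$) and $\bar\lambda_j$ (for \eqref{4.18} and \eqref{4.20}, and otherwise analytic in $\mathbb C^-$); hence $\text{\rm{LHS}}$ is no longer the zero matrix when $x<y$.

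Next I would evaluate $\text{\rm{LHS}}$ by the residue theorem. For $x<y$ the net exponential factors in \eqref{4.17}--\eqref{4.20} decay as $\lambda\to\infty$ in the appropriate closed half $\lambda$-plane, by the analyticity and asymptotic properties in Theorems~\ref{theorem2.1}, \ref{theorem2.4}, and \ref{theorem2.5}, so Jordan's lemma permits closing the contour and collecting $2\pi i$ times the sum of the residues of the integrand at those poles. Each residue is computed from the bound-state structure of \eqref{1.6}: at $\lambda=\lambda_j$ the analytically continued Jost solution $\phi(\zeta,x)$ and its spectral Laurent coefficients are linear combinations of $\psi(\zeta,x)$ and its spectral derivatives governed by the bound-state dependency constants, and pairing this with the principal part of $T_{\text{\rm{r}}}(\zeta)$ reproduces the normalization constants $c_{jk}$ of Section~\ref{section3}; the barred quantities behave the same way with $\bar\lambda_j$ and $\bar c_{jk}$. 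Substituting the representations \eqref{4.25}--\eqref{4.28} of $\psi_1,\psi_2,\bar\psi_1,\bar\psi_2$ into these residue terms turns them into a sum of pure exponential-polynomial terms in $x+y$ of the type $\sum_{j,k}c_{jk}\,(\text{polynomial})\,e^{i\lambda_j(x+y)}$ (and its barred counterpart) together with convolution terms in which $K_1,K_2,\bar K_1,\bar K_2$ are integrated over $z\in(x,\infty)$ against the same kernels, using that $K_1,K_2$ vanish for $z<x$.

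Then I would repackage the finite double sums over $j$ and the multiplicity index: by the Jordan form \eqref{3.2}--\eqref{3.5} and the block assembly \eqref{3.6}--\eqref{3.8}, every such sum equals $C\,e^{iA(\cdot)}B$, or after differentiation $i\,CA\,e^{iA(\cdot)}B$, and the barred versions equal $\bar C\,e^{-i\bar A(\cdot)}\bar B$ and $-i\,\bar C\bar A\,e^{-i\bar A(\cdot)}\bar B$; these are exactly the bound-state pieces of $\Omega$, $\Omega'$, $\bar\Omega$, $\bar\Omega'$ in \eqref{4.38}--\eqref{4.39}. Thus, for $x<y$, the block $\text{\rm{LHS}}$ equals a collection of such bound-state terms; moving it to the side of $\mathcal K(x,y)$ in \eqref{4.13} and combining with the unchanged $\text{\rm{RHS}}$ promotes $\hat R(x+y)$ to $\Omega(x+y)$, $\hat{\bar R}(x+y)$ to $\bar\Omega(x+y)$, $\hat R'(z+y)$ to $\Omega'(z+y)$, and $\hat{\bar R}'(z+y)$ to $\bar\Omega'(z+y)$, with the corresponding $\Omega$ and $\bar\Omega$ replacements inside the $K_2$ and $\bar K_1$ convolutions; this is exactly \eqref{4.40}.

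The main obstacle is the bookkeeping in the residue step: at a pole $\lambda_j$ of multiplicity $m_j>1$ one must expand $T_{\text{\rm{r}}}(\zeta)\,\phi(\zeta,x)$ to order $m_j$ near $\lambda_j$, track the resulting polynomial-in-$(x+y)$ weights together with all the signs and factors of $i$, and check that the double sum collapses exactly to the claimed matrix exponentials, including the derivative combinations that feed $\Omega'$ and $\bar\Omega'$. This, however, runs completely parallel to the derivation of the Marchenko system for the Kaup--Newell system \eqref{1.11} given in Theorem~4.2 of \cite{AEU2023b}: one transcribes that argument, replacing the scattering and bound-state quantities for \eqref{1.11} by those for \eqref{1.6} by means of Theorem~\ref{theorem2.5} and the potential relation \eqref{2.8}. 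For that reason the detailed verification may be omitted.
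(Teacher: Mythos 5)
Your proposal follows the same route the paper takes: keep the derivation of Theorem~\ref{theorem4.1} intact through \eqref{4.13}, observe that only the block $\text{\rm{LHS}}$ changes because the contour integrals in \eqref{4.17}--\eqref{4.20} now pick up residues at the poles of $T_{\text{\rm{r}}}(\zeta)$ and $\bar T_{\text{\rm{r}}}(\zeta)$, and repackage those residue contributions via the Jordan-form triplets so that $\hat R,\hat{\bar R}$ are promoted to $\Omega,\bar\Omega$, deferring the multiplicity bookkeeping to the parallel derivation in \cite{AEU2023b}. This is exactly the argument the paper sketches (and then omits by citing the same reference), so your proof is correct and essentially identical in approach.
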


Using the four equalities arising from the four entries in \eqref{4.40}, we write the Marchenko system as a coupled system of four 
integral equations holding for $x<y$ as
\begin{equation}\label{4.41}
\begin{cases}
\bar K_1(x,y)-i\displaystyle\int_x^\infty dz\,K_1(x,z)\,\Omega'(z+y)=0,\\
\noalign{\medskip}
K_1(x,y)+\bar\Omega(x+y)+\displaystyle\int_x^\infty dz\,\bar K_1(x,z)\,\bar\Omega(z+y)=0,\\
\noalign{\medskip}
\bar K_2(x,y)+\Omega(x+y)+\displaystyle\int_x^\infty dz\, K_2(x,z)\,\Omega(z+y)=0,\\
\noalign{\medskip}
K_2(x,y)+i\displaystyle\int_x^\infty dz\,\bar K_2(x,z)\,\bar\Omega'(z+y)=0.
\end{cases}
\end{equation}
We can uncouple the Marchenko system \eqref{4.41} by using the first line in the second equality and by using the fourth line in the 
third equality. We get
\begin{equation}\label{4.42}
\begin{cases}
K_1(x,y)+\bar\Omega(x+y)+i\displaystyle\int_x^\infty dz\, K_1(x,z) \int_x^\infty 
ds\,\Omega'(z+s)\,\bar\Omega(s+y)=0,
\\
\noalign{\medskip}
\bar K_2(x,y)+\Omega(x+y)-i\displaystyle\int_x^\infty dz\,\bar K_2(x,z) \int_x^\infty 
ds\,\bar\Omega'(z+s)\,\Omega(s+y)=0,
\end{cases}
\end{equation}
\begin{equation}\label{4.43}
\begin{cases}
\bar K_1(x,y)=i\displaystyle\int_x^\infty dz\,K_1(x,z)\,\Omega'(z+y),
\\ \noalign{\medskip}
K_2(x,y)=-i\displaystyle\int_x^\infty dz\,\bar K_2(x,z)\,\bar\Omega'(z+y),
\end{cases}
\end{equation}
where it is understood that we first solve the two uncoupled integral equations given in \eqref{4.42} and obtain $K_1(x,y)$ and
$\bar K_2(x,y)$ and then use those values in the integrands in \eqref{4.43} and recover $\bar K_1(x,y)$ and $K_2(x,y).$

In the next theorem, we relate the quantities $K_1(x,x),$ $\bar K_1(x,x),$ $K_2(x,x),$ $\bar K_2(x,x)$ obtained from the solution $\mathcal K(x,y)$ to the
Marchenko system \eqref{4.40} to some quantities related to the potential pair $(q,r)$ in \eqref{1.6}. 

\begin{theorem}\label{theorem4.3}
Suppose that the potentials $q$ and $r$  appearing in \eqref{1.6} belong to the Schwartz class $\mathcal S(\mathbb R)$ in $x\in\mathbb R.$
Let $ \mathcal K(x,y)$ be the solution to the Marchenko system
\eqref{4.40}, with the components $K_1(x,y),$ $K_2(x,y),$ $\bar K_1(x,y),$ $\bar K_2(x,y)$ 
as in \eqref{4.14}. In the limit $y\to x^+$ we have
\begin{equation}\label{4.44}
K_1(x,x)=-\displaystyle\frac{e^{i\mu/2}}{2}\displaystyle\frac{q(x)}{E(x)},
\end{equation}
\begin{equation}\label{4.45}
K_2(x,x)=-\displaystyle\frac{iq(x)\,r(x)}{4}-\displaystyle\frac{i}{4}\int_x^\infty dy\,q(y)\,r'(y),
\end{equation}
\begin{equation}\label{4.46}
\bar K_1(x,x)=\displaystyle\frac{1}{2}\int_x^\infty dy\,q(y)\,r'(y),
\end{equation}
\begin{equation}\label{4.47}
\bar K_2(x,x)=-\displaystyle\frac{e^{-i\mu/2}}{2}\,r(x)\,E(x),
\end{equation}
where $E(x)$ and $\mu$ are the quantities defined in \eqref{1.13} and \eqref{1.15}, respectively, and for notational simplicity we use
$K_1(x,x),$ $K_2(x,x),$ $\bar K_1(x,x),$ $\bar K_2(x,x)$ for $K_1(x,x^+),$ $K_2(x,x^+),$ $\bar K_1(x,x^+),$ $\bar K_2(x,x^+),$ respectively.
\end{theorem}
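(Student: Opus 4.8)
The plan is to extract the diagonal limits $K_j(x,x^+)$ and $\bar K_j(x,x^+)$ from the large-$\lambda$ behaviour of the Fourier-integral representations of the Jost solutions, matching them against the explicit large-$\lambda$ asymptotics recorded in Theorem~\ref{theorem2.4}. The starting point is that \eqref{4.3}--\eqref{4.6} are Fourier transforms, so they invert to \eqref{4.25}--\eqref{4.28}; in each of those four identities the left-hand side is a normalized component of a Jost solution to \eqref{1.6} and the right-hand side is an integral of one of the Marchenko kernels over $y\in(x,\infty)$ against $e^{\pm i\lambda y}$.

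First I would note that, since $q,r\in\mathcal S(\mathbb R)$, each kernel $K_1,K_2,\bar K_1,\bar K_2$ together with its $y$-derivative decays rapidly as $y\to\infty$ (this regularity and decay is inherited from the corresponding statements for \eqref{1.11} through the relations \eqref{2.18}--\eqref{2.21}, and is also visible from the Marchenko equations \eqref{4.41}--\eqref{4.43}), so a single integration by parts in \eqref{4.25}--\eqref{4.28} yields, for instance,
\[
\int_x^\infty K_1(x,y)\,e^{i\lambda y}\,dy=e^{i\lambda x}\left[\frac{i\,K_1(x,x)}{\lambda}+O\!\left(\frac{1}{\lambda^2}\right)\right],
\]
together with the analogous expansions for $K_2$ (after subtracting the explicit $e^{i\lambda x}$ present in \eqref{4.26}) and for $\bar K_1,\bar K_2$ with $e^{-i\lambda y}$ in place of $e^{i\lambda y}$; the boundary contributions at $y=\infty$ vanish by the rapid decay, and the $O(1/\lambda^2)$ remainders are uniform for $x$ on compact sets.

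Next I would insert the asymptotics \eqref{2.23}, \eqref{2.24}, \eqref{2.27}, \eqref{2.28} into the left-hand sides of \eqref{4.25}, \eqref{4.26}, \eqref{4.27}, \eqref{4.28}, respectively, use \eqref{2.7} to express everything in $\lambda$, and equate the coefficients of $\lambda^{-1}e^{\pm i\lambda x}$ on the two sides of each identity. Solving the resulting four scalar relations for $K_1(x,x),$ $K_2(x,x),$ $\bar K_1(x,x),$ $\bar K_2(x,x)$ --- recalling the definition \eqref{1.13} of $E(x)$ and the constant $\mu$ from \eqref{1.15} --- produces precisely \eqref{4.44}--\eqref{4.47}. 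As an internal check one can confirm that the expressions obtained for $\bar K_1(x,x)$ and $K_2(x,x)$ are consistent with the uncoupling relations \eqref{4.43} taken in the limit $y\to x^+$.

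The step I expect to be the main obstacle is the rigorous justification that the $\lambda^{-1}$ coefficient of the Fourier transform of a Marchenko kernel equals the value of that kernel on the diagonal: this requires the kernels to be continuous up to $y=x$ with the stated one-sided limits, and requires $\partial_y K_j$ to be regular and integrable enough that the integration-by-parts remainder is genuinely $O(1/\lambda^2)$ uniformly in $x$. Both properties follow from the Schwartz-class hypothesis on $(q,r)$ together with the smoothing structure of \eqref{4.41}--\eqref{4.43} and the already-known properties of the Jost solutions and scattering data for \eqref{1.11}; once these are secured the coefficient comparison is unambiguous, and the identities \eqref{4.44}--\eqref{4.47} follow at once.
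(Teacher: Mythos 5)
Your proposal is correct and follows essentially the same route as the paper: the paper likewise rewrites \eqref{4.25} with $e^{i\lambda y}=\frac{d}{dy}\frac{e^{i\lambda y}}{i\lambda}$, integrates by parts using $K_1(x,+\infty)=0$, and matches the resulting $O(1/\lambda)$ term against the large-$\lambda$ asymptotics of Theorem~\ref{theorem2.4} to obtain \eqref{4.44}, with \eqref{4.45}--\eqref{4.47} handled analogously. Your sign bookkeeping (e.g.\ $-1/(i\lambda)=i/\lambda$ giving $iK_1(x,x)/\lambda$, which against $q(x)/(2i\lambda)$ yields the minus sign in \eqref{4.44}) checks out, and your added remarks on the regularity and decay of the kernels and the consistency check via \eqref{4.43} are reasonable supplements to what the paper leaves implicit.
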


\begin{proof} We write \eqref{4.25} as 
\begin{equation}\label{4.48}
\displaystyle\frac{e^{i\mu/2}}{\zeta E(x)}\psi_1(\zeta,x)= 
\displaystyle\int_x^\infty dy\left[K_1(x,y)\,\frac{d}{dy}\,\frac{e^{i\lambda y}}{i\lambda}\right],
\end{equation}
where we recall that the parameter $\lambda$ is related to $\zeta$ as in \eqref{2.7}.
Using integration by parts in \eqref{4.48}, we obtain 
\begin{equation}
\label{4.49}
\displaystyle\frac{e^{i\mu/2}\,\psi_1(\zeta,x)}{\zeta\,E(x)}=
-K_1(x,x)\displaystyle\frac{e^{i\lambda x}}{i\lambda}-\int_x^\infty dy\,\frac{e^{i\lambda y}}
{i\lambda}\frac{\partial\,K_1(x,y)}{\partial y},
\end{equation}
where we have used $K_1(x,+\infty)=0.$ Letting $\lambda\to\infty$ in $\overline{\mathbb C^+},$ from \eqref{4.49} we get
\begin{equation}
\label{4.50}
\displaystyle\frac{e^{i\mu/2}\,\psi_1(\zeta,x)}{\zeta\,E(x)}=
-K_1(x,x)\displaystyle\frac{e^{i\lambda x}}{i\lambda}+O\left(\displaystyle\frac{1}{\lambda^2}\right).
\end{equation}
Using the large $\zeta$-asymptotics of $\psi_1(\zeta,x)$ given in \eqref{2.23} on the left-hand side of \eqref{4.50}, we have
\begin{equation}
\label{4.51}
\displaystyle\frac{e^{i\mu/2}\,e^{i\lambda x}}{E(x)}\left[\displaystyle\frac{q(x)}{2i\lambda} +O\left(\displaystyle\frac{1}{\lambda^2}\right)\right]=
-\displaystyle\frac{K_1(x,x)\,e^{i\lambda x}}{i\lambda}+O\left(\displaystyle\frac{1}{\lambda^2}\right), \qquad \lambda\to\infty  \text{\rm{ in }} \overline{\mathbb C^+}.
\end{equation}
A comparison of the $O(1/\lambda)$-terms on both sides of \eqref{4.51} yields \eqref{4.44}.
The equalities \eqref{4.45}--\eqref{4.47} are obtained in a similar manner.
\end{proof}

In the next theorem, we show how to recover the quantity $E(x),$ the potential pair $(q,r),$ and the Jost solutions
$\psi(\zeta,x)$ and $\bar\psi(\zeta,x)$ to \eqref{1.6} from the solution $\mathcal K(x,y)$ to the Marchenko system \eqref{4.40}. 

\begin{theorem}
\label{theorem4.4}
Assume that the potentials $q$ and $r$ in \eqref{1.6} belong to the Schwartz class $\mathcal S(\mathbb R)$ in 
$x\in\mathbb R.$ The quantity $E(x)$ in \eqref{1.13}, the constant $\mu$ in \eqref{1.15}, the potential pair $(q,r),$
and the Jost solutions $\psi(\zeta,x)$ and $\bar\psi(\zeta,x)$ to 
\eqref{1.6} are recovered from the solution $\mathcal K(x,y)$ to the Marchenko system \eqref{4.40} as follows:

\begin{enumerate}

\item[\text{\rm(a)}] We have
\begin{equation}\label{4.52}
E(x)=\exp\left(2i\displaystyle\int_{-\infty}^{x}dz\,P(z)\right), \quad \mu=4\displaystyle\int_{-\infty}^\infty dz\,P(z),
\end{equation}
where $P(x)$ is the scalar quantity constructed from $\bar K_1(x,y)$ and $K_2(x,y)$ as
\begin{equation}\label{4.53}
P(x):=K_1(x,x)\,\bar K_2(x,x).
\end{equation}

\item[\text{\rm(b)}] The potential pair $(q,r)$ is recovered as
\begin{equation}\label{4.54}
q(x)=-2K_1(x,x)\,\exp\left(-2i\displaystyle\int_x^{\infty} dz\,P(z)\right),
\end{equation}
\begin{equation}\label{4.55}
r(x)=-2\bar K_2(x,x)\,\exp\left(2i\displaystyle\int_x^{\infty} dz\,P(z)\right).
\end{equation}

\item[\text{\rm(c)}] The Jost solutions $\psi(\zeta,x)$ and $\bar\psi(\zeta,x)$ 
to \eqref{1.1} are recovered as
\begin{equation}\label{4.56}
\psi_1(\zeta,x)= \zeta\left(\displaystyle\int_x^\infty dy\,K_1(x,y)\,e^{i\zeta^2y}\right)\exp\left(-2i\displaystyle\int_x^{\infty} dz\,P(z)\right),
\end{equation}
\begin{equation}\label{4.57}
\psi_2(\zeta,x)=e^{i\zeta^2x}+\displaystyle\int_x^\infty dy\,K_2(x,y)\,e^{i\zeta^2 y},
\end{equation}
\begin{equation}\label{4.58}
\bar{\psi}_1(\zeta,x)=\left (e^{-i\zeta^2x}+\displaystyle\int_x^\infty dy\,\bar K_1(x,y)\,e^{-i\zeta^2 y}\right)\exp\left(-2i\displaystyle\int_x^{\infty} dz\,P(z)\right),
\end{equation}
\begin{equation}\label{4.59}
\bar{\psi}_2(\zeta,x)= \zeta \displaystyle\int_x^\infty dy\,\bar K_2(x,y)\,e^{-i\zeta^2y},
\end{equation}
where we recall that $\psi_1(\zeta,x),$ $\psi_2(\zeta,x),$ $\bar{\psi}_1(\zeta,x),$ and $\bar{\psi}_2(\zeta,x)$ 
are the components of the Jost solutions defined in \eqref{2.1}.
	
\end{enumerate}
\end{theorem}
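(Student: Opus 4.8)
The plan is to read everything off from Theorem~\ref{theorem4.3} together with the inverse Fourier transform representations \eqref{4.25}--\eqref{4.28} of the Jost solution components; the single new ingredient is an elementary identity expressing $E(x)$ and $\mu$ through the diagonal boundary values of the Marchenko solution $\mathcal K(x,y)$.

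For part~(a) I would combine the definition \eqref{4.53} of $P(x)$ with the two equalities \eqref{4.44} and \eqref{4.47}. Upon multiplying them the factors $e^{i\mu/2}$ and $e^{-i\mu/2}$ cancel, as do $E(x)$ and $E(x)^{-1}$, leaving
\begin{equation*}
P(x)=K_1(x,x)\,\bar K_2(x,x)=\frac14\,q(x)\,r(x).
\end{equation*}
Inserting $q(x)r(x)=4P(x)$ into \eqref{1.13} and \eqref{1.15} yields at once the two formulas in \eqref{4.52}. It is then convenient to record the consequence
\begin{equation*}
E(x)\,e^{-i\mu/2}=\exp\!\left(2i\int_{-\infty}^x dz\,P(z)-2i\int_{-\infty}^\infty dz\,P(z)\right)=\exp\!\left(-2i\int_x^\infty dz\,P(z)\right),
\end{equation*}
which is the bookkeeping identity used throughout the rest of the argument.

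For part~(b) I would solve \eqref{4.44} for $q(x)$ and \eqref{4.47} for $r(x)$, giving $q(x)=-2K_1(x,x)\,E(x)\,e^{-i\mu/2}$ and $r(x)=-2\bar K_2(x,x)\,E(x)^{-1}e^{i\mu/2}$, and then replace $E(x)e^{-i\mu/2}$ and its reciprocal by the exponentials from the displayed identity above; this produces \eqref{4.54} and \eqref{4.55}. For part~(c) I would simply rearrange the relations \eqref{4.25}--\eqref{4.28} so as to isolate $\psi_1,\psi_2,\bar\psi_1,\bar\psi_2$ on the left-hand side, use $\lambda=\zeta^2$ from \eqref{2.7}, and substitute $E(x)e^{-i\mu/2}=\exp(-2i\int_x^\infty P)$ wherever the factor $E(x)e^{-i\mu/2}$ occurs; the four resulting identities are precisely \eqref{4.56}--\eqref{4.59}.

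I do not expect a genuine obstacle here: the statement is essentially a repackaging of Theorem~\ref{theorem4.3} and of \eqref{4.25}--\eqref{4.28}. The only point requiring a little care is the compatibility of \eqref{1.13} with \eqref{1.15}, namely the splitting $\int_{-\infty}^\infty=\int_{-\infty}^x+\int_x^\infty$ that turns the exponent $2i\int_{-\infty}^x P$ of $E(x)$ into $-2i\int_x^\infty P$ once the factor $e^{i\mu/2}$ is pulled out; this is legitimate because $P=\tfrac14 qr$ is absolutely integrable under the Schwartz hypothesis on $q$ and $r$, which also guarantees that all the Fourier integrals involved are well defined.
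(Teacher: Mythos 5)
Your proposal is correct and follows essentially the same route as the paper: establish $P(x)=\tfrac14 q(x)r(x)$ from \eqref{4.44} and \eqref{4.47}, feed this into \eqref{1.13} and \eqref{1.15} for part~(a), then invert \eqref{4.44}, \eqref{4.47} and \eqref{4.25}--\eqref{4.28} for parts~(b) and~(c). The only difference is cosmetic: you make explicit the bookkeeping identity $E(x)e^{-i\mu/2}=\exp\left(-2i\int_x^\infty dz\,P(z)\right)$, which the paper uses implicitly.
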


\begin{proof}
Using \eqref{4.44} and \eqref{4.47}, we observe that the auxiliary scalar quantity $P(x)$ defined in 
\eqref{4.53} is expressed in terms of the potential pair $(q,r)$ as
\begin{equation}
\label{4.60}
P(x)=\displaystyle\frac{q(x)\,r(x)}{4}.
\end{equation}
Hence, with the help of \eqref{1.13}, \eqref{4.53}, and \eqref{4.60} we obtain the first equality of \eqref{4.52}. Using \eqref{4.60} in \eqref{1.15}, we get the second equality of \eqref{4.52}.
Thus, the proof of (a) is complete.
Using \eqref{4.52} in \eqref{4.44}, we recover the potential $q$ as in \eqref{4.54}.
Similarly, using \eqref{4.52} in \eqref{4.47}, we recover the potential $r$ as in \eqref{4.55}.
Hence, the proof of (b) is also complete.
We obtain the expressions for the components of the Jost solutions $\psi(\zeta,x)$ and $\bar\psi(\zeta,x)$ listed in \eqref{4.56}--\eqref{4.59}, by using \eqref{4.52} in \eqref{4.25}--\eqref{4.28}, respectively.
\end{proof}

\section{Solutions to the Chen--Lee--Liu system}
\label{section5}

In Section~\ref{section4} we have presented the solution to the inverse scattering problem for \eqref{1.6} by using the Marchenko
method. This has been done by using the data set $\{R(\zeta),\bar R(\zeta),(A,B,C), (\bar A,\bar B,\bar C)\}$ as input to
the Marchenko system \eqref{4.40} and by recovering the potential pair $(q,r)$ from the solution $\mathcal K(x,y)$ 
to \eqref{4.40}, as described in \eqref{4.54} and \eqref{4.55}
with the quantity $P(x)$ expressed as in \eqref{4.53}.
If we use the time-evolved version of our input data set, via the Marchenko method we recover the time-evolved potential pair $(q,r).$ From the inverse scattering method \cite{GGKM1967,L1968} we know that the time-evolved potentials
$q(x,t)$ and $r(x,t)$ form a solution to
the integrable nonlinear system \eqref{1.1}.
In this section, since we deal with the time-evolved quantities, we show the explicit $t$-dependence in our notation for the potentials, the scattering coefficients, and the solution to the Marchenko system.

In the next theorem, we provide the time evolution of the scattering coefficients for \eqref{1.6} and the matrices $C$ and
$\bar C$ appearing in the matrix triplets $(A,B,C)$ and $(\bar A,\bar B,\bar C)$ used to describe the bound-state information.  

\begin{theorem}
\label{theorem5.1}
Assume that the potentials $q$ and $r$ appearing in \eqref{1.6} belong to the Schwartz class $\mathcal S(\mathbb R)$
in $x\in\mathbb R$ for each fixed $t\in\mathbb R.$ Let the time evolution of the potentials $q$ and $r$ be governed by the AKNS pair matrix $\mathcal T$
in \eqref{1.5}. We have the following:

\begin{enumerate}

\item[\text{\rm(a)}] The four reflection coefficients $R(\zeta,t),$ $\bar R(\zeta,t),$ $L(\zeta,t),$ $\bar L(\zeta,t)$ for \eqref{1.6} evolve in the time variable $t$ as
\begin{equation}\label{5.1}
R(\zeta,t)=R(\zeta,0)\, e^{4 i \lambda^2 t},\quad 
\bar R(\zeta,t)=\bar R(\zeta,0)\, e^{-4 i \lambda^2 t}, \qquad \lambda\in\mathbb R,\quad
t\in\mathbb R,
\end{equation}
\begin{equation*}
L(\zeta,t)=L(\zeta,0)\, e^{-4 i \lambda^2 t},\quad 
\bar L(\zeta,t)=\bar L(\zeta,0)\, e^{4 i \lambda^2 t}, \qquad \lambda\in\mathbb R,\quad t\in\mathbb R,
\end{equation*}
where we recall that the parameter $\lambda$ is related to the spectral parameter $\zeta$ as in \eqref{2.7}.

\item[\text{\rm(b)}] 
The four transmission coefficients $T_{\text{\rm{l}}}(\zeta,t),$ $T_{\text{\rm{r}}}(\zeta,t),$ $\bar{T_{\text{\rm{l}}}}(\zeta,t),$
$\bar{T_{\text{\rm{r}}}}(\zeta,t)$ do not change in time, and hence we have
\begin{equation*}
T_{\text{\rm{l}}}(\zeta,t)=T_{\text{\rm{l}}}(\zeta,0),\quad 
T_{\text{\rm{r}}}(\zeta,t)=T_{\text{\rm{r}}}(\zeta,0),\quad 
\bar{T_{\text{\rm{l}}}}(\zeta,t)=\bar{T_{\text{\rm{l}}}}(\zeta,0),\quad 
\bar{T_{\text{\rm{r}}}}(\zeta,t)=\bar{T_{\text{\rm{r}}}}(\zeta,0).
\end{equation*}
Since the transmission coefficients do not change in time, for notational simplicity we use $T_{\text{\rm{l}}}(\zeta),$
$T_{\text{\rm{r}}}(\zeta),$ $\bar{T_{\text{\rm{l}}}}(\zeta),$ and $\bar{T_{\text{\rm{r}}}}(\zeta)$ for 
$T_{\text{\rm{l}}}(\zeta,t),$
$T_{\text{\rm{r}}}(\zeta,t),$ $\bar{T_{\text{\rm{l}}}}(\zeta,t),$ and $\bar{T_{\text{\rm{r}}}}(\zeta,t),$ respectively.

\item[\text{\rm(c)}] The four matrices $A,$ $\bar A,$ $B,$ $\bar B$ appearing in \eqref{3.6} and \eqref{3.7} do not change in time. On the other hand, the matrices $C$ and $\bar C$ appearing in \eqref{3.8} evolve in time as 
\begin{equation}
\label{5.4}
C\mapsto C e^{4i A^2 t},\quad \bar C\mapsto \bar C e^{-4i \bar A^2 t},
\end{equation}
where, for notational simplicity, we use $C$ and $\bar C$ to denote their values at $t=0.$ 
\end{enumerate}
\end{theorem}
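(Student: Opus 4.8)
The plan is to derive, from the zero-curvature relation \eqref{1.7}, a first-order differential equation in $t$ for each of the four Jost solutions to \eqref{1.6}, and then to read off the time dependence of the scattering coefficients and of the bound-state normalization constants from the spatial asymptotics at the far end.

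First I would record the standard auxiliary fact that if $v=v(\zeta,x,t)$ solves \eqref{1.6} and the potential pair $(q,r)$ satisfies \eqref{1.1}, then $\chi:=v_t-\mathcal T v$ again solves \eqref{1.6}: differentiating $v_x=\mathcal X v$ in $t$ and using $\mathcal X_t=\mathcal T_x-\mathcal X\mathcal T+\mathcal T\mathcal X$ from \eqref{1.7}, one obtains $\chi_x=\mathcal X\chi$ after cancellation. Because $q,r\in\mathcal S(\mathbb R)$, we have $\mathcal T(\zeta,x)\to\mathrm{diag}(-2i\zeta^4,2i\zeta^4)=\mathrm{diag}(-2i\lambda^2,2i\lambda^2)$ as $x\to\pm\infty$, with $\lambda=\zeta^2$ as in \eqref{2.7}. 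Applying the auxiliary fact to $v=\psi$ and using the $x\to+\infty$ normalization \eqref{2.3}, which carries no $t$-dependence, the leading asymptotics of $\chi$ as $x\to+\infty$ coincide with those of $-2i\lambda^2\,\psi$; expanding $\chi$ in the basis $\{\psi,\bar\psi\}$ (valid for $\zeta\in\mathbb R$, then extended by the analyticity in Theorem~\ref{theorem2.1}) and matching then forces $\psi_t=(\mathcal T-2i\lambda^2 I)\psi$. Running the same argument for $\phi,\bar\psi,\bar\phi$ with the normalizations \eqref{2.5}, \eqref{2.4}, \eqref{2.6} gives $\phi_t=(\mathcal T+2i\lambda^2 I)\phi$, $\bar\psi_t=(\mathcal T+2i\lambda^2 I)\bar\psi$, and $\bar\phi_t=(\mathcal T-2i\lambda^2 I)\bar\phi$.

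Next I would substitute the opposite-end asymptotics \eqref{2.31}--\eqref{2.34} into these four evolution equations and again use $\mathcal T\to\mathrm{diag}(-2i\lambda^2,2i\lambda^2)$ at the relevant infinity. From the $\psi$-equation at $x\to-\infty$, the $e^{i\zeta^2 x}$-component gives $\partial_t(1/T_{\text{\rm{l}}})=0$ while the $e^{-i\zeta^2 x}$-component gives $\partial_t(L/T_{\text{\rm{l}}})=-4i\lambda^2\,(L/T_{\text{\rm{l}}})$; together these show that $T_{\text{\rm{l}}}$ is $t$-independent and $L(\zeta,t)=L(\zeta,0)\,e^{-4i\lambda^2 t}$. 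The $\phi$-equation at $x\to+\infty$ yields $\partial_t T_{\text{\rm{r}}}=0$ and $R(\zeta,t)=R(\zeta,0)\,e^{4i\lambda^2 t}$; the $\bar\psi$-equation yields $\partial_t\bar{T_{\text{\rm{l}}}}=0$ and $\bar L(\zeta,t)=\bar L(\zeta,0)\,e^{4i\lambda^2 t}$; the $\bar\phi$-equation yields $\partial_t\bar{T_{\text{\rm{r}}}}=0$ and $\bar R(\zeta,t)=\bar R(\zeta,0)\,e^{-4i\lambda^2 t}$. This is exactly \eqref{5.1}, its companion unnumbered formula for $L$ and $\bar L$, and the time-independence of the four transmission coefficients claimed in part~(b). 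One routine point to verify is that the $o(1)$ remainders in \eqref{2.3}--\eqref{2.6} and \eqref{2.31}--\eqref{2.34} may be differentiated in $t$ and still vanish as $x\to\pm\infty$, and that $\lim_{x\to\pm\infty}$ commutes with $\partial_t$; this follows from the continuity and analyticity stated in Theorems~\ref{theorem2.1}, \ref{theorem2.4}, \ref{theorem2.7} together with the Schwartz hypothesis.

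Finally, for part~(c): the matrices $A$ and $\bar A$ encode the bound-state $\lambda$-values and their multiplicities, i.e. the poles of $T_{\text{\rm{r}}}$ and $\bar{T_{\text{\rm{r}}}}$ and their orders, which are $t$-independent by part~(b), while $B$ and $\bar B$ are fixed structural column vectors, so $A,\bar A,B,\bar B$ do not change in $t$. The normalization constants $c_{jk}$ and $\bar c_{jk}$ enter as the Laurent/Taylor data attached to the bound-state poles that build the terms $C\,e^{iAy}B$ and $\bar C\,e^{-i\bar A y}\bar B$ of $\Omega$ and $\bar\Omega$ in \eqref{4.38}; tracking the dependency constants relating $\phi$ and $\psi$ at $\lambda=\lambda_j$ through the evolution equations $\psi_t=(\mathcal T-2i\lambda^2 I)\psi$ and $\phi_t=(\mathcal T+2i\lambda^2 I)\phi$ shows that each such constant acquires the factor $e^{4i\lambda_j^2 t}$, and for a pole of order $m_j$ the accompanying coefficients mix precisely by multiplication by the matrix exponential $e^{4iA_j^2 t}$ of the Jordan block $A_j$ in \eqref{3.2}. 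Assembling over $j$ through the block structure \eqref{3.6}--\eqref{3.8} gives $C\mapsto C\,e^{4iA^2 t}$, and the analogous computation for $\bar\psi,\bar\phi$ gives $\bar C\mapsto\bar C\,e^{-4i\bar A^2 t}$, which is \eqref{5.4}. The step I expect to be the main obstacle is precisely this last bookkeeping for higher-multiplicity bound states, namely verifying that the Taylor expansion of $e^{4i\lambda^2 t}$ about $\lambda=\lambda_j$ reassembles into $e^{4iA_j^2 t}$ rather than some other matrix function of $A_j$; this parallels the corresponding argument for the Kaup--Newell system in \cite{AE2022,AEU2023b}, whose time-Lax matrix has the same asymptotic part $\mathrm{diag}(-2i\zeta^4,2i\zeta^4)$ as $\mathcal T$ in \eqref{1.5}, so the same reasoning transfers.
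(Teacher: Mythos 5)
Your proposal is correct and is essentially the paper's approach: the paper's own proof consists of a citation to the analogous Theorem~2.1 of \cite{AEU2023b} for the Kaup--Newell system, and the argument you write out — using \eqref{1.7} to show $\psi_t-\mathcal T\psi$ solves \eqref{1.6}, noting $\mathcal T\to\mathrm{diag}(-2i\lambda^2,2i\lambda^2)$ as $x\to\pm\infty$, deducing $\psi_t=(\mathcal T-2i\lambda^2I)\psi$ and its three companions, and then matching the opposite-end asymptotics \eqref{2.31}--\eqref{2.34} — is precisely the standard argument that citation stands for, with the correct signs throughout. Your treatment of part (c), including the flagged Jordan-block bookkeeping for higher multiplicities, likewise mirrors the cited construction and reaches \eqref{5.4} correctly.
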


\begin{proof} The corresponding proof for \eqref{1.11} can be found in Theorem~2.1 of \cite{AEU2023b}. The proof for \eqref{1.6} is similarly obtained.
\end{proof}

With the help of \eqref{5.1}, we obtain the time evolution of $\hat R(y,t)$ and $\hat{\bar R}(y,t)$ defined in \eqref{4.2} as
\begin{equation}
\label{5.5}
\hat R(y,t)=\displaystyle\frac{1}{2\pi}\displaystyle\int_{-\infty}^\infty  
d\lambda\,\displaystyle\frac{R(\zeta,0)}{\zeta}\,e^{4i\lambda^2 t}e^{i\lambda y},\quad 
\hat{\bar R}(y,t)=\displaystyle\frac{1}{2\pi}
\displaystyle\int_{-\infty}^\infty  d\lambda\,\displaystyle\frac{\bar R(\zeta,0)}{\zeta}\,e^{-4i\lambda^2 t}e^{-i\lambda y}.
\end{equation}
Next, using \eqref{5.4} and \eqref{5.5} in \eqref{4.38} we obtain the time-evolved Marchenko kernels $\Omega(y,t)$ and $\bar\Omega(y,t)$ as
\begin{equation}\label{5.6}
\Omega(y,t)=\hat R(y,t)+C\,e^{4iA^2 t}e^{iAy}B,\quad \bar\Omega(y,t)=\hat{\bar R}(y,t)+\bar C\,e^{-4i\bar A^2 t}e^{-i\bar A y}\bar B.
\end{equation}
We remark on the symmetry and elegance in the connection between \eqref{5.5} and \eqref{5.6}. The scalar term
$e^{4i\lambda^2 t}\,e^{i\lambda y}$ appearing in the integrand in the first equality of \eqref{5.5} suggests the appearance of the matrix exponential
$e^{4iA^2 t}\,e^{iAy}$ in the first equality of \eqref{5.6}. When there is a simple bound state at $\lambda=\lambda_j$ of multiplicity 1, the eigenvalue
of $A$ and the matrix $A$ coincide. Thus, the scalar $\lambda$ appearing in $e^{4i\lambda^2 t}e^{i\lambda y}$ in the first equality of \eqref{5.5} is naturally
replaced by the matrix $A$ in $e^{4iA^2 t}e^{iA y}$ in the first equality of \eqref{5.6}. In a similar manner, the scalar term $e^{-4i\lambda^2 t}e^{-i\lambda y}$
appearing in the integrand in the second equality of \eqref{5.5} suggests the appearance of the matrix exponential $e^{-4i\bar A^2 t}\,e^{-i\bar Ay}$
in the second equality of \eqref{5.6}. When there is a simple bound state at $\lambda=\bar\lambda_j$ of multiplicity 1, the eigenvalue of $\bar A$
and the matrix $\bar A$ coincide. Thus, the scalar $\lambda$ appearing in $e^{-4i\lambda^2 t}e^{-i\lambda y}$ in the second equality of \eqref{5.5} is 
naturally replaced by the matrix $\bar A$ in $e^{-4i\bar A^2 t}\,e^{-i\bar A y}.$

In the reflectionless case, i.e. when the reflection coefficients for \eqref{1.6} are all zero, there are various restrictions on the bound-state
information presented in \eqref{3.1}. Hence, in the reflectionless case, there are restrictions on the matrix triplets $(A,B,C)$ and
$(\bar A,\bar B,\bar C)$ appearing in \eqref{3.6}--\eqref{3.8}. We refer the reader to Theorem~4.3 of \cite{AEU2023b} for the restrictions on the 
bound-state information for \eqref{1.11} in the reflectionless case. Since the transmission coefficients for \eqref{1.6} and for \eqref{1.11} are related to 
each other as in \eqref{2.39} and \eqref{2.40}, the restrictions on the bound-state information for \eqref{1.6} in the reflectionless case can be 
obtained from the corresponding restrictions for \eqref{1.11}. For example, in the reflectionless case, we must have $\mathcal N=\bar{\mathcal N},$ where
$\mathcal N$ and $\bar{\mathcal N}$ are the nonnegative integer quantities defined in \eqref{3.9}. In the reflectionless case, with the help of (4.40) of \cite{AEU2023b} and \eqref{2.40}, we obtain the restriction that the transmission coefficients $T_{\text{\rm{r}}}(\zeta)$ and
$\bar T_{\text{\rm{r}}}(\zeta)$ for \eqref{1.6} have to be related to each other as
\begin{equation}\label{5.7}
T_{\text{\rm{r}}}(\zeta)\,\bar T_{\text{\rm{r}}}(\zeta)=e^{-i\mu/2}, \qquad \lambda\in\mathbb R,
\end{equation}
where we recall that $\lambda$ and $\zeta$ are related to each other as in \eqref{2.7} and $\mu$ is the constant appearing in \eqref{1.15}. In that case, the transmission coefficients 
$T_{\text{\rm{r}}}(\zeta)$ and $\bar T_{\text{\rm{r}}}(\zeta)$ each have meromorphic extensions to the entire complex $\lambda$-plane, and hence
the bound-state $\lambda_j$-values and $\bar\lambda_j$-values become restricted. Such restrictions must be taken into account in considering
the analysis of solutions to the linear system \eqref{1.6} and the nonlinear system \eqref{1.1} in the reflectionless case. 

In the reflectionless case, from \eqref{5.4} we obtain 
\begin{equation}\label{5.8}
\Omega(y,t)=C\,e^{4iA^2 t}e^{iAy}B,\quad \bar\Omega(y,t)=\bar C\,e^{-4i\bar A^2 t}e^{-i\bar A y}\bar B.
\end{equation}
In the next theorem, in the reflectionless case without imposing any of the restrictions on the matrix triplets, we obtain the solution to the Marchenko
system \eqref{4.40} in terms of the matrix triplets $(A,B,C)$ and $(\bar A,\bar B,\bar C)$ with the help of the input $\{\Omega(y,t),\bar\Omega(y,t)\}$ given in \eqref{5.6}.

\begin{theorem}
\label{theorem5.2}
Assume that the potential pair $(q,r)$ appearing in \eqref{1.6} belongs to the Schwartz class $\mathcal S(\mathbb R)$
in $x\in\mathbb R$ for each fixed $t\in\mathbb R,$ where the time evolution of the potential pair is governed by the AKNS pair matrix $\mathcal T$ given in \eqref{1.5}.
Then, in the reflectionless case, the solution $\mathcal K(x,y,t)$ to the Marchenko system \eqref{4.40} is given by

\begin{equation}\label{5.9}
K_1(x,y,t)
=-\bar C\,e^{-i\bar A x}\,\bar\Gamma(x,t)^{-1}\,e^{-i\bar A y-4i\bar A^2 t}\,\bar B,
\end{equation}
\begin{equation}
\label{5.10}
\bar K_1(x,y,t)=\bar C\,e^{-i\bar A x}\,\bar\Gamma(x,t)^{-1}\,e^{-i\bar A x-4i\bar A^2 t}\bar M A\,e^{iA(x+y)+4iA^2t}\,B,
\end{equation}
\begin{equation}
\label{5.11}
K_2(x,y,t)
=C\,e^{iAx}\,\Gamma(x,t)^{-1}\,e^{iAx+4iA^2t} M \bar A e^{-i\bar A (x+y)-4i\bar A^2 t}\,\bar B,
\end{equation}
\begin{equation}\label{5.12}
\bar K_2(x,y,t)=-C\,e^{iAx}\,\Gamma(x,t)^{-1}\,e^{iAy+4iA^2t}\,B,
\end{equation}
where $K_1(x,y,t),$ $K_2(x,y,t),$ $\bar K_1(x,y,t),$ $\bar K_2(x,y,t)$ are the entries of $\mathcal K(x,y,t)$ appearing in \eqref{4.14}.
The $2\times 2$ matrix-valued functions $\Gamma(x,t)$ and $\bar\Gamma(x,t)$ appearing \eqref{5.9}--\eqref{5.12} are expressed in terms of the matrix triplets $(A,B,C)$ and $(\bar A,\bar B,\bar C)$ as
\begin{equation}\label{5.13}
\Gamma(x,t):=I-e^{iAx+4iA^2 t}M\bar Ae^{-2i\bar A x-4i\bar A^2 t}\bar M e^{iAx}, 
\end{equation}
\begin{equation}\label{5.14}
\bar\Gamma(x,t):=I-e^{-i\bar A x-4i\bar A^2 t}\bar MAe^{2iAx+4iA^2 t}Me^{-i\bar A x},
\end{equation}
with $I$ denoting the $2\times 2$ identity matrix and the quantities $M$ and $\bar M$ being the $2\times 2$ matrix-valued constants expressed in terms of the matrix triplets $(A,B,C)$ and $(\bar A,\bar B,\bar C)$ as
\begin{equation}\label{5.15}	
M:=\int_{0}^\infty dz\,e^{iAz}\,B\,\bar C\,e^{-i\bar A z},\quad \bar M:=\int_{0}^\infty dz\,e^{-i\bar A z}\,\bar B\,C\,e^{i A z}.
\end{equation}
\end{theorem}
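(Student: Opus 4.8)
The plan is to exploit that in the reflectionless case the Marchenko kernels $\Omega(y,t)$ and $\bar\Omega(y,t)$ of \eqref{5.8} are separable (finite rank) in $y$, which collapses the uncoupled Marchenko equations \eqref{4.42} to explicitly solvable matrix equations, after which the auxiliary relations \eqref{4.43} deliver the remaining two kernel components. First I would record, by differentiating \eqref{5.8} and using that $A$ commutes with $e^{4iA^2t}$ and $\bar A$ with $e^{-4i\bar A^2t}$, the derivative kernels $\Omega'(y,t)=iCA\,e^{4iA^2t}e^{iAy}B$ and $\bar\Omega'(y,t)=-i\bar C\bar A\,e^{-4i\bar A^2t}e^{-i\bar Ay}\bar B$.

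Next, inserting $\bar\Omega(x+y,t)$, $\Omega'(z+s,t)$ and $\bar\Omega(s+y,t)$ into the first equation of \eqref{4.42}, every term carries the common right factor $e^{-i\bar Ay}\bar B$; hence the solution is forced to have the separable form $K_1(x,y,t)=\mathcal G(x,t)\,e^{-i\bar Ay}\bar B$ for some row-vector-valued $\mathcal G(x,t)$, and, by the same argument applied to the second equation, $\bar K_2(x,y,t)=\bar{\mathcal G}(x,t)\,e^{iAy}B$. I would then substitute $z\mapsto z+x$ and $s\mapsto s+x$ in the double integrals, pull all $x$- and $t$-dependent exponentials outside the integral signs, and recognize the leftover $[0,\infty)$ integrals as the constants $M,\bar M$ of \eqref{5.15}, for instance $\int_0^\infty e^{iAz}B\bar C\,e^{-i\bar Az}\,dz=M$ and $\int_0^\infty e^{-i\bar Az}\bar B\,CA\,e^{iAz}\,dz=\bar M A$ (the latter using $CAe^{iAz}=Ce^{iAz}A$). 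Collecting terms, the first equation of \eqref{4.42} becomes a matrix identity of the form $\mathcal G(x,t)\bigl[I-e^{-i\bar Ax}\bar MA\,e^{2iAx+4iA^2t}M\,e^{-i\bar Ax}e^{-4i\bar A^2t}\bigr]=-\bar C\,e^{-i\bar Ax-4i\bar A^2t}$; a short rearrangement that moves $e^{-4i\bar A^2t}$ through $e^{\pm i\bar Ax}$ (with which $\bar A^2$ commutes) identifies the bracket with a conjugate of $\bar\Gamma(x,t)$ in \eqref{5.14}, and solving for $\mathcal G(x,t)$ yields \eqref{5.9}. The analogous computation on the second equation of \eqref{4.42} yields \eqref{5.12} with $\Gamma(x,t)$ as in \eqref{5.13}.

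With $K_1(x,y,t)$ and $\bar K_2(x,y,t)$ now in explicit form, I would substitute them, together with $\Omega'$ and $\bar\Omega'$, into the two relations \eqref{4.43}. In $\bar K_1(x,y,t)=i\int_x^\infty K_1(x,z,t)\,\Omega'(z+y,t)\,dz$ the substitution $z\mapsto z+x$ again reduces the integral to the constant $\bar M A$, and the sign bookkeeping $i\cdot(-1)\cdot i=1$ produces exactly \eqref{5.10}; the symmetric computation for $K_2$ produces \eqref{5.11}. As an alternative, one could instead derive \eqref{5.9}--\eqref{5.12} from the known reflectionless solution formulas for \eqref{1.11} in \cite{AEU2023b} via the Jost-solution relations \eqref{2.18}--\eqref{2.21}, but the direct verification above is self-contained.

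The step I expect to be the main obstacle is purely the bookkeeping in reducing the double integral in \eqref{4.42}: one must keep straight the order of the $A$- and $\bar A$-exponential factors and the placement of the non-commuting constants $M$ and $\bar M$, and correctly shuffle the scalar factors $e^{4iA^2t}$ and $e^{-4i\bar A^2t}$, so that the reduced matrix equation matches \eqref{5.13}--\eqref{5.14} as written rather than a rearranged version of it. The only other point requiring a word is the invertibility of $\Gamma(x,t)$ and $\bar\Gamma(x,t)$ needed to write \eqref{5.9}--\eqref{5.12}; this is inherited from the unique solvability of the Marchenko system \eqref{4.40} for Schwartz-class potentials guaranteed by Theorem~\ref{theorem4.2}, since a solution of the reduced matrix equation then exists and is unique, forcing the coefficient matrix to be invertible.
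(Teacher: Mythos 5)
Your proposal is correct and follows essentially the same route as the paper: substitute the separable reflectionless kernels \eqref{5.8} and their $y$-derivatives into the uncoupled equations \eqref{4.42} to solve for $K_1$ and $\bar K_2$, then feed these into \eqref{4.43} to obtain $\bar K_1$ and $K_2$; the paper itself omits exactly this bookkeeping by referring to the analogous computation in Theorem~5.4 of \cite{AEU2023a}. Your added remarks on reducing the shifted integrals to the constants $M$ and $\bar M$ and on the invertibility of $\Gamma(x,t)$ and $\bar\Gamma(x,t)$ merely make explicit details the paper leaves implicit.
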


\begin{proof}
From \eqref{5.8}, by taking the $y$-derivatives we obtain
\begin{equation}\label{5.16}
\Omega'(y,t)=iCAe^{4iA^2t}\,e^{iAy}B, \quad \bar\Omega'(y,t)=-i\bar C\bar A e^{-4i\bar A^2t}\,e^{-i\bar A y}\bar B,
\end{equation}
where the prime is used to denote the $y$-derivative. Using \eqref{5.8} and \eqref{5.16} in the uncoupled system we recover
$K_1(x,y,t)$ and $\bar K_2(x,y,t)$ listed in \eqref{5.9} and \eqref{5.12}, respectively. We then use \eqref{5.9} in the integrand in the first 
line of \eqref{4.43} and use \eqref{5.12} in the integrand in the second line of \eqref{4.43} and recover $\bar K_1(x,y,t)$ and
$K_2(x,y,t)$ given in \eqref{5.10} and \eqref{5.11}, respectively. Since the process is similar to the proof of Theorem~ 5.4 of 
\cite{AEU2023a}, we omit the details. We remark that since the eigenvalues of $A$ are all located in the upper-half complex $\lambda$-plane and the eigenvalues of $\bar A$ are all located in the lower-half complex $\lambda$-plane, the two integrals in \eqref{5.15} both exist.
\end{proof}

In the next theorem, we recover the potential pair $(q,r)$ appearing in \eqref{1.6} in terms of the matrix triplets
$(A,B,C)$ and $(\bar A,\bar B,\bar C)$ in the reflectionless case.

\begin{theorem}
\label{theorem5.3}
Assume that the potential pair $(q,r)$ appearing in \eqref{1.6} belong to the Schwartz class $\mathcal S(\mathbb R)$ in $x\in\mathbb R$ 
for each fixed $t\in\mathbb R,$ where the time evolution of the potential pair is governed by the matrix $\mathcal T$ in \eqref{1.5}. Let the quantities $\Omega(y,t)$ and $\bar\Omega(y,t)$ be the time-evolved reflectionless Marchenko 
kernels defined in \eqref{5.8}. Then, we have the following:

\begin{enumerate}

\item[\text{\rm(a)}] The corresponding key quantity $E(x,t)$ defined in \eqref{1.13} can explicitly be written 
in terms of the matrix triplets $(A,B,C)$ and $(\bar A,\bar B,\bar C)$ as
\begin{equation}\label{5.17}
E(x,t)=\exp\left(2i \displaystyle\int_{-\infty}^x dz\,P(z,t)
\right),
\end{equation}
where $P(x,t)$ is the scalar-valued function of $x$ and $t$
defined in \eqref{4.53} with $\bar K_1(x,x,t)$ and 
$K_2(x,x,t)$ there explicitly expressed in terms of the matrix triplets as
\begin{equation}\label{5.18}
K_1(x,x,t)
=-\bar C\,e^{-i\bar A x}\,\bar\Gamma(x,t)^{-1}\,e^{-i\bar A x-4i\bar A^2 t}\,\bar B,
\end{equation}
\begin{equation}\label{5.19}
\bar K_2(x,x,t)=-C\,e^{iAx}\,\Gamma(x,t)^{-1}\,e^{iAx+4iA^2t}\,B.
\end{equation}
The $2\times 2$ matrix-valued quantity $\Gamma(x,t)$ in \eqref{5.13} is explicitly determined by the matrix triplets $(A,B,C)$ and $(\bar A,\bar B,\bar C)$ as described in \eqref{5.18}. Similarly, the $2\times 2$ matrix-valued quantity $\bar\Gamma(x,t)$ in \eqref{5.14} is explicitly determined by the matrix triplets $(A,B,C)$ and $(\bar A,\bar B,\bar C)$ as described in \eqref{5.19}.

\item[\text{\rm(b)}] 
The corresponding potential pair $(q,r)$ appearing in the linear system \eqref{1.6} can explicitly be expressed in terms of the
matrix triplets $(A,B,C)$ and $(\bar A,\bar B,\bar C)$ as
\begin{equation}\label{5.20}
q(x,t)=\left(2\bar C\,e^{-i\bar A x}\,\bar\Gamma(x,t)^{-1}\,e^{-i\bar A x-4i\bar A^2 t}\,\bar B\right)\exp\left(-2i \displaystyle\int_x^\infty dz\,P(z,t)
\right),
\end{equation}
\begin{equation}\label{5.21}
r(x,t)=\left(2C\,e^{iA x}\,\Gamma(x,t)^{-1}\,e^{iA x+4i A^2 t}\,B\right)\exp\left(2i \displaystyle\int_x^\infty dz\,P(z,t)
\right).
\end{equation}
\end{enumerate}
\end{theorem}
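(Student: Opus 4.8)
The plan is to obtain Theorem~\ref{theorem5.3} by specializing the general reconstruction formulas of Theorem~\ref{theorem4.4} to the explicit reflectionless solution of the Marchenko system \eqref{4.40} provided by Theorem~\ref{theorem5.2}. Under the hypotheses, since the potential pair $(q,r)$ belongs to $\mathcal S(\mathbb R)$ for each fixed $t,$ the solution $\mathcal K(x,y,t)$ to \eqref{4.40} with the time-evolved reflectionless kernels in \eqref{5.8} is precisely the one displayed in \eqref{5.9}--\eqref{5.12}, where the $2\times 2$ matrices $\Gamma(x,t)$ and $\bar\Gamma(x,t)$ are those defined in \eqref{5.13}--\eqref{5.14} and $M,\bar M$ are the constant matrices of \eqref{5.15}. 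The two integrals in \eqref{5.15} converge because the eigenvalues of $A$ lie in the open upper half complex $\lambda$-plane and those of $\bar A$ in the open lower half plane, and $\Gamma(x,t),$ $\bar\Gamma(x,t)$ are invertible for every $x$ and $t$ since the Marchenko system is uniquely solvable under the Schwartz-class assumption.

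To prove part~(a), I would first pass to the limit $y\to x^+$ in \eqref{5.9} and \eqref{5.12}, which is immediate from the smooth $y$-dependence of those formulas, obtaining \eqref{5.18} and \eqref{5.19} for $K_1(x,x,t)$ and $\bar K_2(x,x,t).$ Substituting these into the definition $P(x,t)=K_1(x,x,t)\,\bar K_2(x,x,t)$ from \eqref{4.53} expresses $P(x,t)$ explicitly through the matrix triplets $(A,B,C)$ and $(\bar A,\bar B,\bar C).$ Because \eqref{4.60} identifies $P(x,t)=q(x,t)\,r(x,t)/4,$ the integral $\int_{-\infty}^x dz\,P(z,t)$ converges when $q,r\in\mathcal S(\mathbb R),$ so the first equality of \eqref{4.52} in Theorem~\ref{theorem4.4} gives \eqref{5.17}. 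The remaining assertions of part~(a), namely that $\Gamma(x,t)$ and $\bar\Gamma(x,t)$ are explicitly determined by the matrix triplets, are simply \eqref{5.13}--\eqref{5.15} read together with \eqref{5.18}--\eqref{5.19}.

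To prove part~(b), I would feed the same diagonal values into the recovery formulas \eqref{4.54} and \eqref{4.55} of Theorem~\ref{theorem4.4}. From $q(x,t)=-2K_1(x,x,t)\exp\!\big(-2i\int_x^\infty dz\,P(z,t)\big)$ and \eqref{5.18}, the two minus signs cancel and \eqref{5.20} follows; likewise $r(x,t)=-2\bar K_2(x,x,t)\exp\!\big(2i\int_x^\infty dz\,P(z,t)\big)$ with \eqref{5.19} yields \eqref{5.21}. No genuine analytic obstacle arises; the proof is essentially the assembly of Theorems~\ref{theorem4.4} and \ref{theorem5.2}, and the only point requiring real care is the bookkeeping of the matrix exponentials $e^{\pm iAx},$ $e^{\pm i\bar A x},$ $e^{4iA^2 t},$ $e^{-4i\bar A^2 t}$ through the substitutions and the verification that the sign of $P(x,t)$ is consistent with \eqref{4.60}, so that \eqref{5.17} agrees with the definition \eqref{1.13} of $E(x,t).$
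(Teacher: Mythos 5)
Your proposal is correct and follows exactly the paper's own route: take $y\to x^+$ in \eqref{5.9} and \eqref{5.12} to obtain \eqref{5.18}--\eqref{5.19}, insert these into \eqref{4.53} and the first equality of \eqref{4.52} for part (a), and then into \eqref{4.54}--\eqref{4.55} for part (b). The additional remarks on the convergence of the integrals in \eqref{5.15} and the sign bookkeeping are consistent with what the paper states around Theorem~\ref{theorem5.2}.
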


\begin{proof}
We get \eqref{5.18} and \eqref{5.19} from \eqref{5.9} and \eqref{5.12}, respectively, by letting $y\to x^+$ there.
Thus, we obtain \eqref{5.17} from the first equality of \eqref{4.52} with the help of \eqref{4.53} by using \eqref{5.18} and \eqref{5.19} on the right-hand side of \eqref{4.53}. This completes the proof of (a). We obtain \eqref{5.20} from \eqref{4.54} by using \eqref{5.18} on the right-hand side of \eqref{4.54}. Similarly, we obtain \eqref{5.21} by using \eqref{5.19} on the right-hand side of \eqref{4.55}. 
\end{proof}

\section{Explicit examples}
\label{section6}

In this section, we illustrate the solution method for the Chen--Lee--Liu system \eqref{1.1} developed in Section~\ref{section5} with two explicitly solved examples.

In the first example below, we construct a one-soliton solution with multiplicity 1 as a solution to the Chen--Lee--Liu system \eqref{1.1}. This is done by solving the inverse scattering problem for \eqref{1.6} in the reflectionless case via the Marchenko method developed in Section~\ref{section4}. As input to the Marchenko system \eqref{4.40} we use a specific pair of matrix triplets, where we choose the size of each of the six matrices in the triplets as $1\times 1.$ For notational simplicity in our example, we write a $1\times 1$ matrix as a scalar quantity.

\begin{example}
\label{example6.1}
\normalfont
In the reflectionless case, we use the matrix triplets $(A,B,C)$ and $(\bar A,\bar B,\bar C)$ given by
\begin{equation}\label{6.1}
A=\begin{bmatrix}
i
\end{bmatrix},\quad B=\begin{bmatrix}
1
\end{bmatrix},\quad C=\begin{bmatrix}
i
\end{bmatrix},
\quad 
\bar A =\begin{bmatrix}
-i
\end{bmatrix},\quad \bar B=\begin{bmatrix}
1
\end{bmatrix},\quad\bar C=\begin{bmatrix}
-i
\end{bmatrix}.
\end{equation}
Using \eqref{6.1} in \eqref{5.15}, we construct the $1\times 1$ constant matrices $M$ and $\bar M$ as
\begin{equation}\label{6.2}
M=-\displaystyle\frac{i}{2}, \quad \bar M=\displaystyle\frac{i}{2}.
\end{equation}
Next, using \eqref{6.1} and \eqref{6.2} in \eqref{5.13} and \eqref{5.14}, we obtain the $1\times 1$ matrices
$\Gamma(x,t)$ and $\bar\Gamma(x,t)$ as
\begin{equation}\label{6.3}
\Gamma(x,t)=1+\displaystyle\frac{ie^{-4x}}{4}, \quad \bar\Gamma(x)=1-\displaystyle\frac{ie^{-4x}}{4}.
\end{equation}
Then, we use \eqref{6.1} and \eqref{6.3} in \eqref{5.18} and \eqref{5.19} and construct the scalar quantities $K_1(x,x,t)$ and $\bar K_2(x,x,t),$ respectively, and we get 
\begin{equation}\label{6.4}
K_1(x,x,t)=-\displaystyle\frac{4e^{2x+4it}}{1+4ie^{4x}}, \quad \bar K_2(x,x,t)=-\displaystyle\frac{4e^{2x-4it}}{1-4ie^{4x}}.
\end{equation}
Next, using \eqref{6.4} in \eqref{4.53} we obtain the quantity $P(x,t)$ as
\begin{equation}\label{6.5}
P(x,t)=\displaystyle\frac{16e^{4x}}{1+16e^{8x}}.
\end{equation}
Finally, we use \eqref{6.4} and \eqref{6.5} in \eqref{4.52}, \eqref{4.54}, \eqref{4.55}, and we recover the quantity $E(x,t),$ the constant $\mu,$ and the potentials $q(x,t)$ and $r(x,t)$ as
\begin{equation}
\label{6.6}
E(x,t)=e^{2i\,\tan^{-1}\left(4e^{4x}\right)}, \quad \mu=2\pi,
\end{equation}
\begin{equation}\label{6.7}
q(x,t)=\displaystyle\frac{8e^{2x+4it+2i\,\tan^{-1}\left(4e^{4x}\right)}}{1+4ie^{4x}}, \quad
r(x,t)=\displaystyle\frac{8e^{2x-4it-2i\,\tan^{-1}\left(4e^{4x}\right)}}{1-4ie^{4x}}, \qquad x\in\mathbb R, \quad t\in \mathbb R.
\end{equation}
From \eqref{6.7} we observe that the potentials $q(x,t)$ and $r(x,t)$ satisfy 
\begin{equation*}
r(x,t)=q(x,t)^\ast,
\end{equation*}
where we use an asterisk to denote complex conjugation.
From \eqref{6.7} we also see that the potentials $q(x,t)$ and $r(x,t)$ have no singularities and
they belong to the Schwartz class $\mathcal S(\mathbb R)$ in $x$ for each fixed $t\in\mathbb R.$
Since the $1\times 1$ matrices $A$ and $\bar A$ in the input data set \eqref{6.1} correspond to the poles of $T_{\text{\rm{r}}}(\zeta)$
and $\bar T_{\text{\rm{r}}}(\zeta)$ in the upper-half and lower-half complex $\lambda$-planes, respectively, with the help of \eqref{5.7} we determine those two right transmission coefficients as
\begin{equation*}
T_{\text{\rm{r}}}(\zeta)=-\displaystyle\frac{\lambda+i}{\lambda-i}, \quad
\bar{T_{\text{\rm{r}}}}(\zeta)=\displaystyle\frac{\lambda-i}{\lambda+i}, \qquad \lambda\in\mathbb C,
\end{equation*}
where the value of $\mu$ appearing in the second equality in \eqref{6.6} is taken into account and we recall that the parameter $\lambda$ is related to the spectral parameter $\zeta$ as in \eqref{2.7}.
In this example, as seen from the first equality of \eqref{6.6}, the quantity $E(x,t)$ is independent of $t$ even though the potentials $q(x,t)$ and $r(x,t)$ given in \eqref{6.7} contain the parameter $t.$ On the other hand, the quantities   
$|q(x,t)|$ and $|r(x,t)|$ do not change in $t.$ Since $q(x,t)$ and $r(x,t)$ are complex valued, in Figure~\ref{figure6.1} we have plotted $|q(x,t)|$ and $|r(x,t)|$ as functions of $x.$ From \eqref{6.7} we have $|r(x,t)|=|q(x,t)|$ for all $x\in\mathbb R$ and $t\in\mathbb R.$ Hence, in this example, the soliton represented by $|q(x,t)|$ or $|r(x,t)|$ does not move in time.

\begin{figure}[!h]
     \centering
         \includegraphics[width=2in]{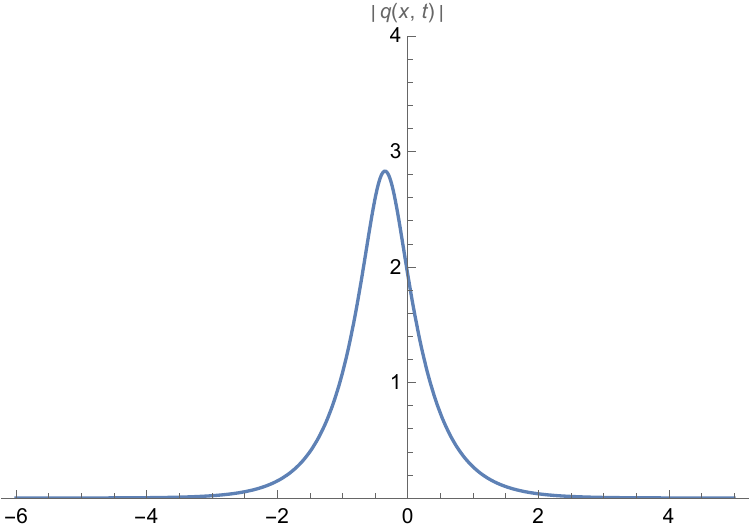}      \hskip .6in
         \includegraphics[width=2in]{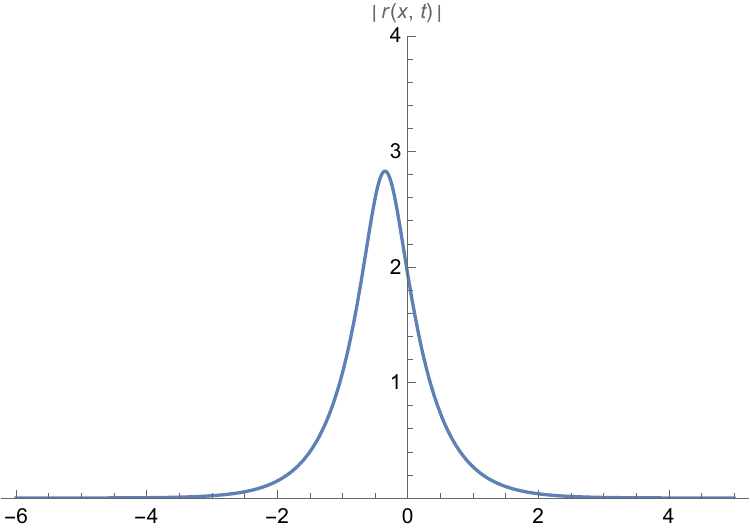}
\caption{The snapshots for $|q(x,t)|$ and $|r(x,t)|$ in Example~\ref{example6.1} corresponding to the input data set in \eqref{6.1}.}
\label{figure6.1}
\end{figure}
\end{example}

In the next example, we illustrate a two-soliton solution to \eqref{1.1}. As in Example~\ref{example6.1} this is done by solving the inverse scattering problem for \eqref{1.6} in the reflectionless case via the Marchenko method. As input to the Marchenko system \eqref{4.40}, we use a pair of matrix triplets with the $2\times 2$ matrices $A$ and $\bar A$ presented in their Jordan canonical forms.

\begin{example}
\label{example6.2}
\normalfont
In the reflectionless case, we choose the input data set to the Marchenko system as
\begin{equation}\label{6.10}
A =\begin{bmatrix}
i&1\\ \noalign{\medskip}
0&i
\end{bmatrix},\quad B=\begin{bmatrix}
0\\ \noalign{\medskip}
1
\end{bmatrix},\quad C=\begin{bmatrix}
i&1
\end{bmatrix},
\quad
\bar A =\begin{bmatrix}
-i&1\\ \noalign{\medskip}
0&-i
\end{bmatrix},\quad \bar B=\begin{bmatrix}
0\\ \noalign{\medskip}
1
\end{bmatrix},\quad\bar{ C}=\begin{bmatrix}
-i&1
\end{bmatrix}.
\end{equation}
We recover the potential pair $(q,r)$ in terms of the matrix triplets
$(A,B,C)$ and $(\bar A,\bar B,\bar C)$ as follows. Using \eqref{6.10} in \eqref{5.15}, we construct the $2\times 2$
constant matrices $M$ and $\bar M$ as
\begin{equation}\label{6.11}
M=\begin{bmatrix}
\displaystyle\frac{1}{4}&0\\ \noalign{\medskip}
-\displaystyle\frac{i}{2}&\displaystyle\frac{1}{4}
\end{bmatrix}, \quad \bar M=\begin{bmatrix}
\displaystyle\frac{1}{4}&0\\ \noalign{\medskip}
\displaystyle\frac{i}{2}&\displaystyle\frac{1}{4}
\end{bmatrix}.
\end{equation}
Then, using \eqref{6.10} and \eqref{6.11} in \eqref{5.13} and \eqref{5.14}, we obtain the $2\times 2$
matrices $\Gamma(x,t)$ and $\bar\Gamma(x,t),$ respectively, as
\begin{equation}\label{6.12}
\Gamma(x)=\begin{bmatrix}
\Gamma_{11}&\Gamma_{12}\\ 
\noalign{\medskip}
\Gamma_{21}&\Gamma_{22}
\end{bmatrix}, \quad \bar\Gamma(x)=\begin{bmatrix}
\Gamma_{11}^\ast&\Gamma_{12}^\ast\\ 
\noalign{\medskip}
\Gamma_{21}^\ast&\Gamma_{22}^\ast
\end{bmatrix},
\end{equation}
where we have defined
\begin{equation*}
\Gamma_{11}:=1+\displaystyle\frac{e^{-4x}}{16}\left(-i+8\left(ix^2-4xt+6t+32it^2\right)\right),
\end{equation*}
\begin{equation*}
\Gamma_{12}:=\displaystyle\frac{e^{-4x}}{16}\left(-1+4x^2-8x^3-128t^2(-1+2x)-32it(x-1^2)\right),
\end{equation*}
\begin{equation*}
\Gamma_{21}:=\displaystyle\frac{e^{-4x}}{4}\left(-1+2x-8it\right),
\end{equation*}
\begin{equation*}
\Gamma_{22}:=1+\displaystyle\frac{e^{-4x}}{16}\left(i(3-8x+8x^2)+16t(-1+2x)\right),
\end{equation*}
and we recall that an asterisk denotes complex conjugation.
Having constructed the matrices $\Gamma(x,t)$ and $\bar\Gamma(x,t),$ we use \eqref{6.10} and \eqref{6.12} in
\eqref{5.18} and \eqref{5.19}, and we obtain $K_1(x,x,t)$ and $\bar K_2(x,x,t),$ respectively, as
\begin{equation}\label{6.17}
K_1(x,x,t)=\displaystyle\frac{32e^{2x+4it}\left(-x+8e^{4x}\left(2ix-i+8t\right)-4it\right)}{32e^{4x}\left(1-4x+8x^2+128t^2+16it\right)-i+256ie^{8x}}, 
\end{equation}
\begin{equation}\label{6.18}
\bar K_2(x,x,t)=\displaystyle\frac{32e^{2x-4it}\left(-x+8e^{4x}\left(-2ix+i+8t\right)+4it\right)}{32e^{4x}\left(1-4x+8x^2+128t^2-16it\right)+i-256ie^{8x}}.
\end{equation}
Next, we use \eqref{6.17} and \eqref{6.18} in \eqref{4.53}, and we get the quantity $P(x,t)$ as
\begin{equation}\label{6.19}
P(x,t)=\displaystyle\frac{1024\,e^{4x}\,w_1\,w_1^\ast}{w_2\,w_2^\ast},
\end{equation}
where we have defined 
\begin{equation*}
w_1:=ix-4\,t-8i\,e^{4x}(-i+8t+2i\,x),
\end{equation*}
\begin{equation*}
w_2:=-1+256\,e^{8x}-32ie^{4x}\left(1-4x+8x^2+16it+128t^2\right).
\end{equation*}
From \eqref{6.19} we observe that the quantity $P(x,t)$ is real valued for all $x\in\mathbb R$ and $t\in\mathbb R.$ Finally, using \eqref{6.17}, \eqref{6.18}, and \eqref{6.19} in \eqref{4.52}, \eqref{4.54}, and \eqref{4.55}, we recover the quantity $E(x,t),$ the constant $\mu,$ and the potentials $q(x,t)$ and $r(x,t)$ as
\begin{equation}
\label{6.22}
E(x,t)=\exp \bigg(-2i \tan^{-1}(w_3/w_4)\bigg), \quad \mu=4\pi,
\end{equation}
\begin{equation}\label{6.23}
q(x,t)=-\displaystyle\frac{64\,w_1}{w_2} \exp\bigg(2x+4i\,t-2i\,\tan^{-1}(w_3/w_4)\bigg),  \qquad x\in\mathbb R, \quad t\in \mathbb R,
\end{equation}
\begin{equation}\label{6.24}
 \quad r(x,t)=q(x,t)^*,  \qquad x\in\mathbb R, \quad t\in \mathbb R,
\end{equation}
where we have defined
\begin{equation*}
w_3:=32e^{4x}\left(1-4x+8x^2+128t^2\right), \quad w_4:=-1+256e^{8x}+512e^{4x}t.
\end{equation*}
From \eqref{6.23} and \eqref{6.24} we observe that the potentials $q(x,t)$ and $r(x,t)$ satisfy $|r(x,t)|=|q(x,t)|$ and they each belong to the Schwartz class
$\mathcal S(\mathbb R)$ in $x$ for $t\in\mathbb R.$
We also see from \eqref{6.22} that the quantity $E(x,t)$ depends on $t,$ whereas in Example~\ref{example6.1} the quantity
$E(x,t)$ given in \eqref{6.6} does not depend on $t.$ The eigenvalues of the $2\times 2$ matrices $A$ and $\bar A$ correspond to the poles of $T_{\text{\rm{r}}}(\zeta)$
and $\bar T_{\text{\rm{r}}}(\zeta)$ in the upper-half and lower-half complex $\lambda$-planes, respectively. Hence, with the help of \eqref{5.7} we see that those two right transmission coefficients are given by  
\begin{equation}\label{6.26}
T_{\text{\rm{r}}}(\zeta)=\left(\displaystyle\frac{\lambda+i}{\lambda-i}\right)^2,  \quad
\bar{T_{\text{\rm{r}}}}(\zeta)=\left(\displaystyle\frac{\lambda-i}{\lambda+i}\right)^2, \qquad \lambda\in\mathbb C,
\end{equation}
where the value of $\mu$ listed in the second equality of \eqref{6.22} is taken into account
and we recall that the parameters $\lambda$ and $\zeta$ are related to each other as in \eqref{2.7}. As seen from \eqref{6.26}, each transmission coefficient has a double 
pole. Since $q(x,t)$ is complex valued and $|r(x,t)|=|q(x,t)|,$ we only discuss the time evolution of $|q(x,t)|.$ In Figure~\ref{figure6.2} we show the snapshots for $|q(x,t)|$ at $t=-5,$ $t=-0.2,$ $t=0,$ $t=0.1,$ $t=0.2,$ and  $t=5,$ respectively. As seen from 
Figure~\ref{figure6.2}, there are two solitons that are initially far apart. They move toward each other and a Mathematica animation shows that their speeds increase 
as they get closer. Then, the two solitons interact with each other nonlinearly, and then they move away from each other. As they move away 
from each other, they regain their individual shapes.
\begin{figure}[!h]
     \centering
         \includegraphics[width=2in]{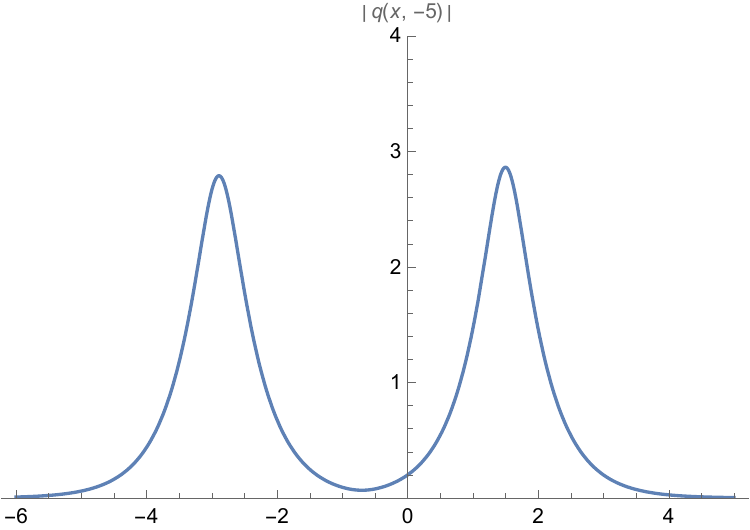}      \hskip .1in
         \includegraphics[width=2in]{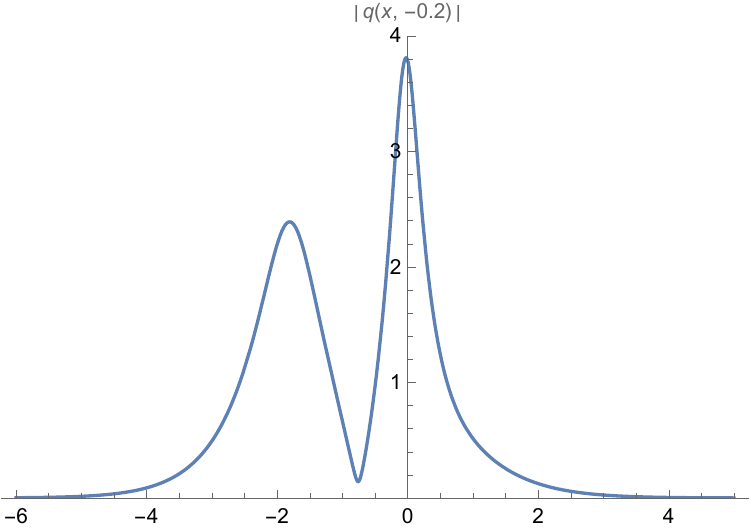} \hskip .1in
         \includegraphics[width=2in]{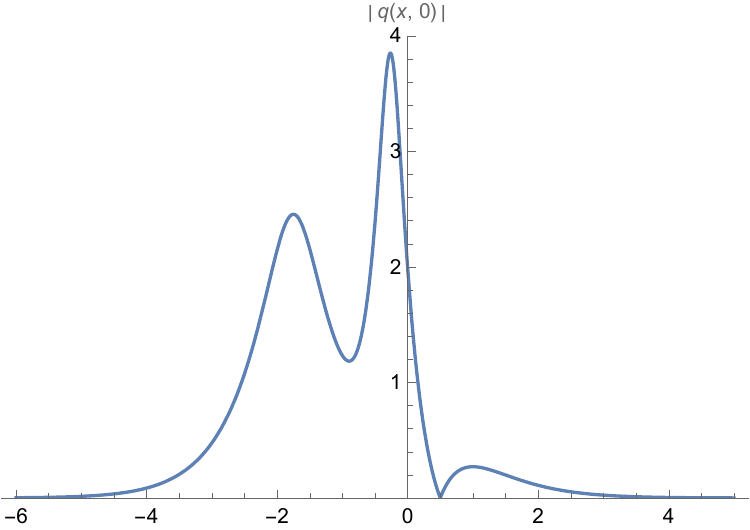}
         \vskip .2in
         \includegraphics[width=2in]{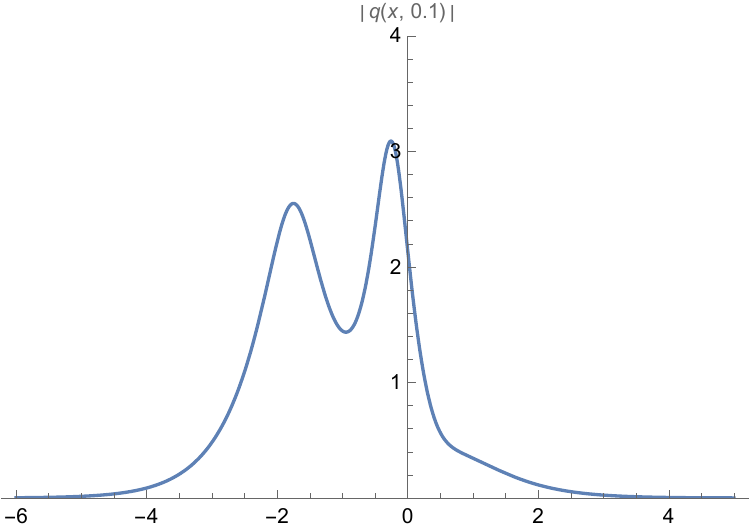} \hskip .1in
         \includegraphics[width=2in]{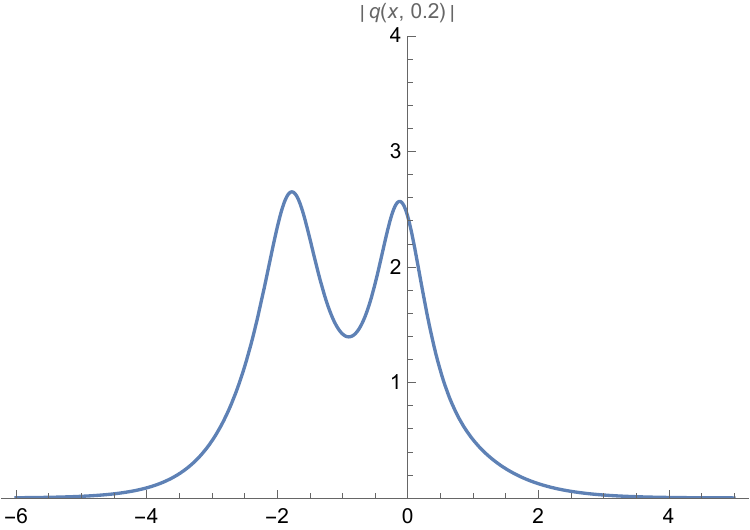} \hskip .1in
         \includegraphics[width=2in]{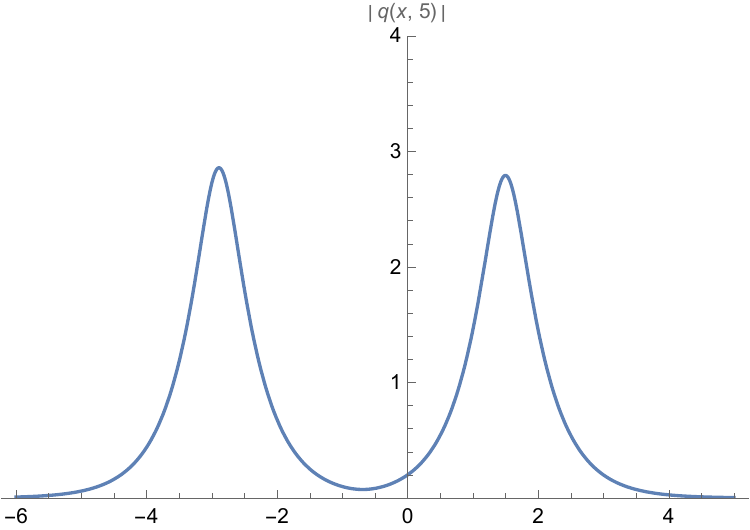}
\caption{The snapshots for $|q(x,t)|$ in Example~\ref{example6.2} at $t=-5,$ $t=-0.2,$ $t=0,$ $t=0.1,$ $t=0.2,$ and  $t=5,$ respectively.}
\label{figure6.2}
\end{figure}
\end{example}

In the reflectionless case, we have prepared a Mathematica notebook that allows the user to input the entries of the two matrix triplets corresponding to any number of bound states with any multiplicities. By using the method of Section~\ref{section5}, our Mathematica notebook evaluates all the relevant quantities and yields the output including the scalar quantity $E(x,t)$ in \eqref{1.11}, the constant $\mu$
in \eqref{1.13}, the potentials $q(x,t)$ and $r(x,t)$ appearing in \eqref{1.1} and \eqref{1.6}, and the transmission coefficients $T_{\text{\rm{r}}}(\zeta)$ and $\bar T_{\text{\rm{r}}}(\zeta)$ appearing in \eqref{5.7}. Our Mathematica notebook also allows the user to animate $|q(x,t)|$ and $|r(x,t)|$ in order to observe the time evolution of soliton solutions to \eqref{1.1} corresponding to any number of bound states and any number of multiplicities. 

The advantage of using matrix exponentials in expressing explicit solutions to \eqref{1.1} becomes clear as the number of bound states or their multiplicities become large. As seen from \eqref{5.20} and \eqref{5.21}, the potentials $q(x,t)$ and $r(x,t)$ are expressed in a compact form with the help of matrix exponentials constructed by using a pair of matrix triplets. By ``unpacking'' those matrix exponentials, we can express $q(x,t)$ and $r(x,t)$
in terms of elementary functions, where those latter expressions become extremely lengthy as the number of bound states and their multiplicities increase, whereas
the compact expressions in \eqref{5.20} and \eqref{5.21} involving matrix exponentials and matrix triplets remain unchanged no matter how many bound states we have and no matter how large their multiplicities are.

\end{document}